\newtheorem{theorem}{Theorem}[section]
\newtheorem{lemma}[theorem]{Lemma}
\newtheorem{conjecture}{Conjecture}
\theoremstyle{definition}
\newtheorem{remark}{Remark}
\title[ epibiont dynamics ] 
      {The effects of invasive epibionts on crab-mussel communities: a theoretical approach to understand mussel population decline}
\subjclass{Primary: 34C11, 34C23, 49J15; Secondary: 92D25, 92D40}
 \keywords{ predator-prey system, stability and bifurcation, optimal control, biological invasion, epibionts}
\begin{document}
\maketitle

\centerline{\scshape  Jingjing Lyu$^{1}$, Linda A. Auker$^{2}$, Anupam Priyadarshi$^{3}$ and Rana D. Parshad$^{4}$
 }
\medskip
{\footnotesize
 \centerline{1) Department of Mathematics,}
 \centerline{Clarkson University,}
   \centerline{ Potsdam, New York 13699, USA.}
   \medskip
   \centerline{ 2) Department of Biology,}
 \centerline{St. Lawrence University,}
   \centerline{Canton, New York 13617, USA.}
    \medskip
   \centerline{ 3) Department of Mathematics,}
 \centerline{Institute of Science,}
  \centerline{Banaras Hindu University,}
   \centerline{Varanasi-221005 India.}
   \medskip
   \centerline{ 4) Department of Mathematics,}
 \centerline{Iowa State University,}
   \centerline{Ames, Iowa 50011, USA.}
   
 }

\begin{abstract}
Blue mussels \emph{(Mytilus edulis)} are an important keystone species that have been declining in the Gulf of Maine. This could be attributed to a variety of complex factors such as \emph{indirect} effects due to invasion by epibionts, which remains unexplored mathematically.
Based on classical optimal foraging theory and anti-fouling defense mechanisms of mussels, we derive an ODE model for crab-mussel interactions in the presence of an invasive epibiont, \emph{Didemnum vexillum}. 
The dynamical analysis leads to results on stability, global boundedness and bifurcations of the model.
 Next, via optimal control methods we predict various ecological outcomes. Our results have key implications for preserving mussel populations in the advent of invasion by non-native epibionts. In particular they help us understand the changing dynamics of local predator-prey communities, due to indirect effects that epibionts confer. 
 \end{abstract}

 \section{Introduction}
 \label{1}
 \subsection{Background}
Blue mussels \emph{(Mytilus edulis)} are an ecologically and economically important species   \cite{FC03,FAO17,SS92}. They play several roles in marine ecosystems: as important prey for many species, such as crabs, shorebirds, sea stars, and gastropod molluscs  \cite{SS92,M83,S85}; as nutrient recyclers and pollution indicators \cite{WD92}; and as a keystone species, serving as habitat for benthic infaunal organisms \cite{SS92,TN85}. However, \emph{M. edulis} has declined in the Gulf of Maine by over 60$\%$ since the 1970s \cite{SD17}. Mussel post-larval settlement, consistent with this observation, has also declined \cite{A10}. The reasons for this decline are unclear, but are almost certainly complex. Thus a clearer understanding of the ecological factors that influence mussel populations is needed.

A primary cause of a species population decline is predation. Invasive predators, like the green crab (\emph{Carcinus maenas}) and the Asian shore crab (\emph{Hemigrapsus sanguineus}), readily prey on the blue mussel \cite{A14,DT04, FB06}. 
However, mussel size limits crab predation, with crabs consuming mussel prey below 70 mm in shell length \cite{E78}. Furthermore, mussels have also adapted to crab predation by thickening their shells in response to novel predator presence, in extremely short time periods \cite{FB06}. Furthermore, substrate complexity reduces predation on mussels as increasingly complex habitats provide refuge from crab predation \cite{FD02}. Thus while predation has put considerable pressure on mussel populations, rapidly evolving defense mechanisms, escape from predation via growth, and physical refuges have counteracted predator impacts. 

Though mussels do have the aforementioned protections against predation, they are still in decline. Curiously, in the 1970s, an introduced ascidian species \emph{Didemnum vexillum}, arrived in the Gulf of Maine \cite{DH07}. \emph{D. vexillum} is a colonial ascidian that is dominant as a competitor for substratum, prolifically laying down mat-like structures on any hard substrate \cite{B07}. Consequently, it acts as an epibiont (fouling organism) on \emph{M. edulis} \cite{A10,A14}. Epibionts impact predator-prey communities indirectly by affecting predation rates on basibionts. \emph{D. vexillum} in particular has chemical anti-predatory defenses. If a crab predator attempts to break off pieces of the \emph{D. vexillum} colony to reach  the mussel, \emph{D. vexillum} releases secondary metabolites and sulfuric acid that deters the crab \cite{B07}. This mechanism by which the epibiont protects the mussel from crab predation is known as associational resistance \cite{LW99}. While it appears to protect mussels from crab predation, \emph{D. vexillum} also negatively affects mussel fecundity and fitness, resulting in fewer progeny \cite{A10}.
Thus \emph{D. vexillum} has both a positive and negative impact on mussel populations. 

Given this complex relationship, we ask,

\begin{itemize}
\item Could the introduced epibiont  \emph{D. vexillum} change the predator-prey dynamics in an established local crab-mussel community? 
\item Could the net effect of positive and negative impacts from \emph{D. vexillum} epibiosis, lead to mussel population decline?
\end{itemize}

Our analysis is (to the best of our knowledge) the first mathematical investigation of predator-prey dynamics under pressure of fouling from epibionts in a crab-mussel community. Herein,

\begin{itemize}
\item  we derive a predator-prey model for crab-mussel interactions, given that clearly a certain size of mussel is preferred or ``optimal" for the crab,

\item we consider the effects of an invasive epibiont by meshing OFT and ant-fouling defense of the mussels,

 \item we model the effects of associational resistance and reduced fecundity, due to the epibiont, and

\item we next investigate dynamical aspects of the model, and use optimal control theory to predict various outcomes.
\end{itemize}

 In adult mussels the protective periostracum, which inhibits epibiont settlement when present \cite{WL98} tends to wear off due to age, decay and abrasion. Consequently, the periostracum is more prevalent on newer regions of the shell, while absent on older regions \cite{HS93}. This means juvenile mussels are less likely to be overgrown with epibionts than are adult mussels. When crabs forage for mussels, they typically prefer a medium sized adult. But this preferred size tends to get easily overgrown by epibionts. Epibionts can alter the prey size choice of predators, including crabs, in experiments \cite{ B07,E03, T07, V06}, though this has not yet been tested with \emph{D. vexillum}. 
We hypothesize in the current manuscript that \emph{D. vexillum} can change the feeding preference of crabs away from mid-size adult mussels (that are easily overgrown and therefore less likely to be eaten) towards juvenile mussels (which are less likely to be overgrown and so are easier targets), even though the latter are not the crab's preferred food source. To elucidate our approach we survey some classical results from OFT.

%
%
 \subsection{Optimal Foraging Theory}

Optimal foraging theory (OFT) predicts how animals behave when they forage for food. 
It is well known that predators optimize feeding strategies to maximize energy intake \cite{D77}.
Essentially,  predators try to gain the most energy from their prey by expending the least amount of energy in the hunting process. For a crab foraging for mussels this amounts to maximizing

\begin{equation}
 \frac{e}{h}  =  \frac{\mbox{energy gained from mussel intake} }{\mbox{handling time of mussel}}.
\end{equation}
This translates to a medium-sized adult mussel as the optimally preferred prey by adult crabs. While large mussels have a potentially high source of energy, they take a much longer time to open and consume than smaller mussels. Small mussels, conversely, take a short time to consume, but they offer very little reward. Even so, juvenile mussels are readily consumed by many species, including green crabs and dogwhelk \cite{A14}. 

 We refer the reader to the mathematical treatment by Krivan \cite{K96,K99}, who describes, via a three species ODE model, in which a predator hunts (optimally) for two prey species
 $u$ and $v$, where $u$ is favored to $v$. The term $u_{1}$ denotes the attack rate with which prey $u$ is hunted, and $u_{2}$ denotes the attack rate with which prey $v$ is hunted.
 The analysis presented in \cite{K99} draws from classic results in OFT and shows that, in order to maximize $ \frac{e}{h}$, the optimal pair of $(u_{1},u_{2})$ is given by $u_{1}=1, u_{2}=0$ if $u > u^{*}$ or $u_{1}=1, u_{2}=1$ if $u < u^{*}$, or $u_{1}=1, 0 < u_{2} < 1$ if $u = u^{*}$, where $u^{*}$ is the critical density for switching. 
 
  \subsubsection{OFT in the presence of epibionts}

Predators are known to switch prey if preferred prey drop below a threshold density \cite{BM01}. For example, fish species have been shown to switch habitats if foraging in one habitat becomes less fruitful than in another habitat \cite{WG81}. Theoretical studies also support this \cite{K96}. 
In our context, If $\frac{e_{1}}{h_{1}} > \frac{e_{2}}{h_{2}}$, the adult mussels are preferred to juveniles as the optimal prey for crabs. If $\frac{e_{1}}{h_{1}} < \frac{e_{2}}{h_{2}}$, juveniles mussels are optimal to hunt. 

\begin{conjecture}
An increase in epibiont density will cause crabs to switch from adult mussels to juveniles even though the juvenile is less preferred and the adult mussel density remains high.
\end{conjecture}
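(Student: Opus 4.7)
The plan is to build on the Krivan-type OFT switching criterion described above, recast for our three-compartment system in which the epibiont density $P$ modulates the adult-mussel predation ratio. Let $e_1(P)$, $h_1(P)$ denote the effective energy per adult mussel and the handling time for an adult mussel in the presence of epibiont density $P$, and let $e_2$, $h_2$ denote the corresponding quantities for juveniles, which we treat as independent of $P$ since juvenile shells are protected by the periostracum and are not appreciably fouled. In the absence of epibionts, one has $R_1(0) := e_1(0)/h_1(0) > e_2/h_2 =: R_2$, consistent with the biological fact that adults are the preferred prey.

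First I would formalize the biological mechanism by which $P$ enters $R_1$. Fouling by \emph{D.\ vexillum} forces the crab to remove mat tissue before accessing the mussel flesh, and the release of sulfuric acid and secondary metabolites further prolongs or interrupts the foraging sequence. I would model this by taking $h_1(P)$ to be a strictly increasing, continuous function of $P$ (for instance affine or of Holling-II saturating type), and $e_1(P)$ to be non-increasing in $P$ (reflecting abandoned prey and ingested-but-wasted biomass). Under these monotonicity assumptions, $R_1(P) = e_1(P)/h_1(P)$ is a strictly decreasing continuous function of $P$.

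Next I would invoke the intermediate value theorem: since $R_1(0) > R_2$ and $R_1(P)$ drops below $R_2$ for sufficiently large $P$, there exists a unique critical epibiont density $P^{*}$ satisfying $R_1(P^{*}) = R_2$. For $P > P^{*}$ one has $R_1(P) < R_2$, so juvenile mussels become the energetically superior prey. Applying the Krivan switching rule \cite{K99} to the perturbed ratios then yields the optimal attack-rate pair $(u_1,u_2) = (0,1)$ whenever $P > P^{*}$, which is precisely the switch posited in the conjecture; crucially, this conclusion does not involve the adult mussel density $u$, so the ``even though the adult mussel density remains high'' clause is established automatically.

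The main obstacle I anticipate is not the existence of $P^{*}$, which is essentially a continuity/monotonicity argument, but rather the rigorous passage from the static OFT inequality to prey-switching dynamics in the full ODE system. One must show that the optimal-control characterization of the attack rates in the coupled crab--adult--juvenile system remains consistent with the pointwise OFT criterion once $R_1(P)$ depends on a time-varying $P(t)$. I would handle this by adapting Krivan's continuous switching construction, treating $P(t)$ as a slowly varying parameter and verifying that the switching surface $\{R_1(P) = R_2\}$ is transversally crossed (so the singular sliding regime on $\{R_1 = R_2\}$ is measure zero in time). A careful empirical calibration of the functional forms of $e_1(P)$ and $h_1(P)$ from fouling data would then complete the argument.
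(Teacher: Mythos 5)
There is a genuine gap, and it sits exactly where you declare the conclusion to be ``established automatically.'' Your monotonicity/IVT argument correctly produces a critical epibiont density $P^{*}$ past which the profitability ranking flips, so that $R_1(P) < R_2$ and juveniles become the more profitable prey; by the standard OFT argument this forces $u_2 = 1$. But the zero--one rule does \emph{not} then give $(u_1,u_2)=(0,1)$ unconditionally: whether the now-less-profitable adult mussels are dropped from the diet entirely depends on the density of the now-preferred prey. Concretely, differentiating the fitness function $R(u_1,u_2)$ with $u_2=1$ shows that the optimal $u_1$ is $0$ only when $M_J > \frac{e_1}{e_2 h_1 - e_1 h_2}$; if the juvenile density falls below that threshold the optimal strategy is $(1,1)$, i.e.\ the crab keeps attacking adults and no switch occurs. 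You are right that the \emph{adult} density does not enter this particular threshold, but the \emph{juvenile} density does, and your proposal never verifies it. The paper's Lemma \ref{lem:sm1} exists precisely to close this hole: it imposes the parametric conditions $d_1 > e_2/h_2$ (in the $e=0$ regime) and $d_1 > e_1/h_1$ (in the $e=K$ regime) and checks that the equilibrium densities $M_A^{*} = \frac{d_1}{e_1-d_1h_1}$ and $M_J^{*}=\frac{d_1}{e_2-d_1h_2}$ exceed the respective switching thresholds, so that $(1,0)$ and $(0,1)$ are the \emph{only} optimal strategies irrespective of where the trajectory sits. You need an analogous condition (or an a priori lower bound on $M_J$) for your argument to yield the conjectured switch rather than merely $u_2=1$.

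Two smaller remarks. First, your continuous parametrization $R_1(P)=e_1(P)/h_1(P)$ with a unique crossing $P^{*}$ is a more general framing than the paper's, which only treats the two limiting cases $e=0$ and $e=K$ (justified because the logistic equation drives $e\to K$); that generality is a genuine plus, but it obliges you to handle the intermediate regime $0<P<P^{*}$ where the paper instead hard-wires the interpolation through $u_1(e)=(K-e)/K$, $u_2(e)=e/K$ as a modeling assumption. Second, your closing paragraph correctly identifies the static-to-dynamic passage (time-varying $P(t)$, sliding on the switching surface) as the hard analytical step, but it is left as a program rather than an argument; the paper sidesteps it entirely, so on that point neither treatment is complete.
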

Essentially, if one considers a predator-prey model with these species (crab-mussel-epibiont), there are two limiting cases
\begin{itemize} 
\item There is no epibiont ($e=0$), in this setting $\frac{e_{1}}{h_{1}} > \frac{e_{2}}{h_{2}}$, so $u_{1} = 1, u_{2} = 0$;
\item The epibiont achieves its carrying capacity $K$, that is, $e=K$, in this setting, high epibiont density causes the handling time $h_{1} \gg 1$, thus $\frac{e_{1}}{h_{1}} < \frac{e_{2}}{h_{2}}$, and so the crab switches to juveniles, and now $u_{1} = 0, u_{2} = 1$.
\end{itemize} 
To this end we first split the mussel class into adults and juveniles (we assume the juveniles have the protective periostracum whereas the adults do not).
A crab species preying on two separate classes of adult and juveniles mussels (with adults being the preferred food type), places this in a classic one predator-two prey setting \cite{K96}. Our hypotheses for OFT are as follows:
\begin{enumerate}
\item In the absence of epibionts, crab-mussel interactions follow classical OFT. That is, adult mussels will be attacked with rate 1, whilst the less preferred juveniles will not be attacked ($u_{1}=1, u_{2}=0$). We claim this is the only optimal strategy for the crab, as long as $d_{1} > \frac{e_{2}}{h_{2}}$, where $d_1$ is the death rate of prey type 1.

\item There is a change to the classical case, under pressure of epibiosis from \emph{D. vexillum}.

\item If a crab were presented with a preferred adult mussel overgrown with \emph{D. vexillum}, it would switch to a prey of a less optimal size, even if the overall adult mussel density was high (assuming their was uniform overgrowing of all adult mussels).

\item The switch would be to juvenile mussels, which we know are almost never overgrown because of their intact protective peristrocum. That is in the $(e=K)$ case, we have ($u_{1}=0, u_{2}=1$). We claim this is the only optimal strategy for the crab, as long as $d_{1} > \frac{e_{1}}{h_{1}}$.  

\item This will in turn directly affect the feedback loop to the adult mussel population, given that juveniles are transitioning to adults.
\end{enumerate}
The above is rigorously shown in appendix section \ref{AP6}.
\begin{figure}[H]
 \includegraphics[scale=0.16]{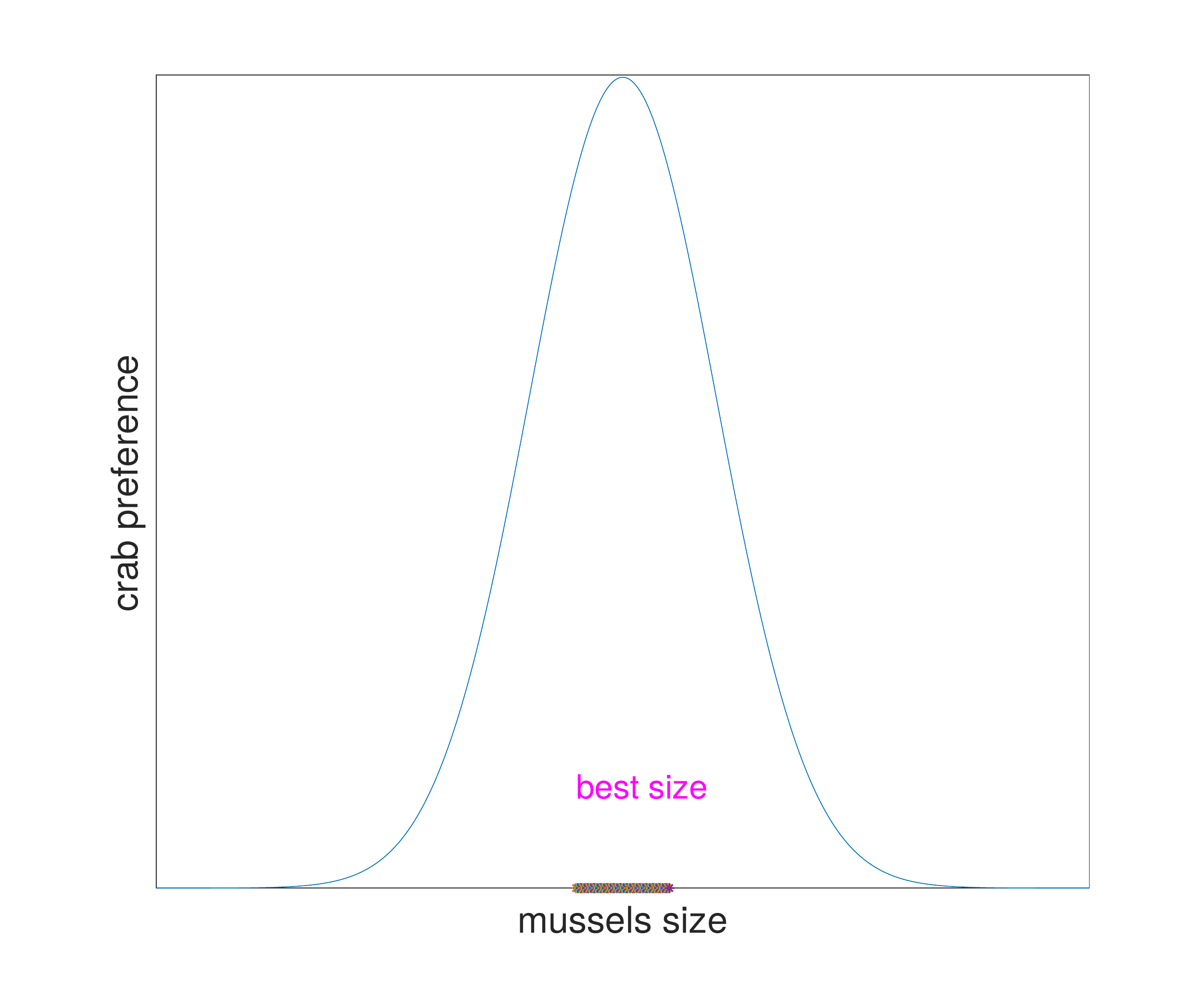}
           \includegraphics[scale=0.35]{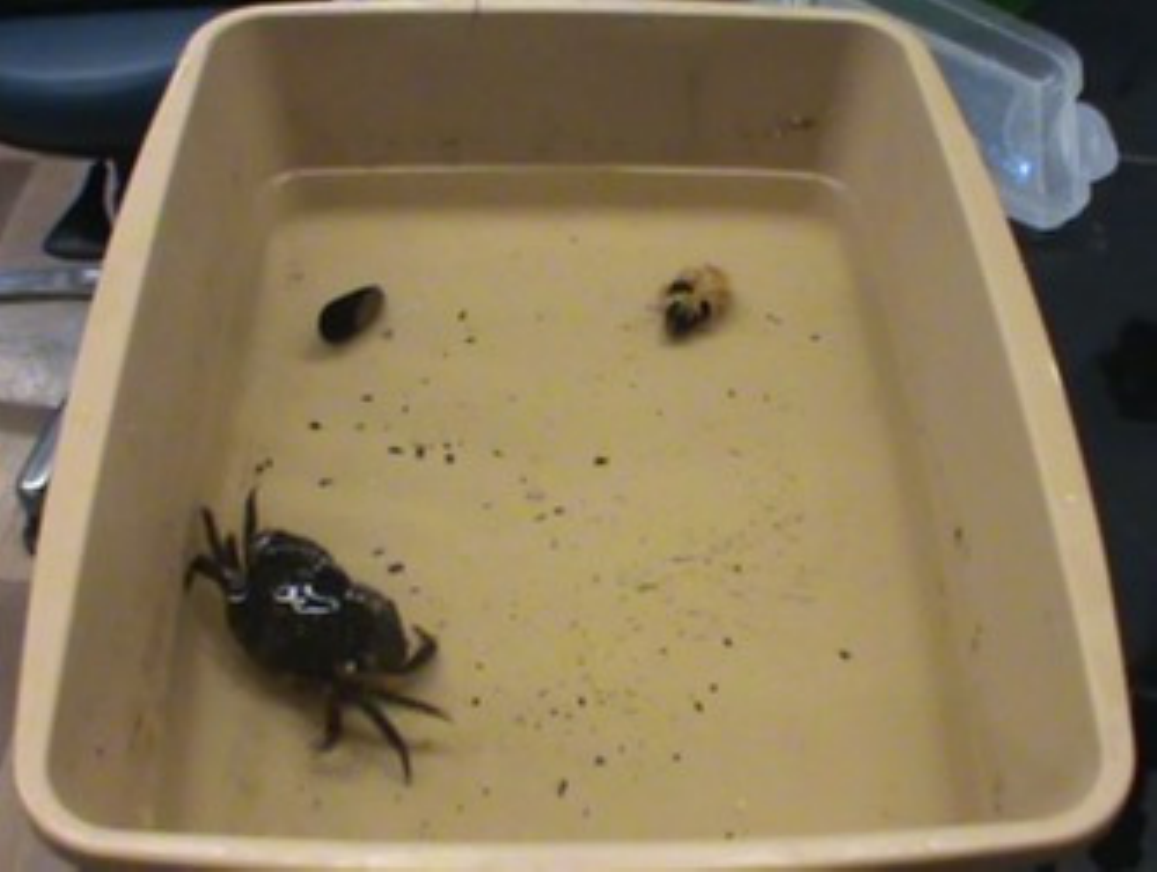}
            \includegraphics[scale=0.16]{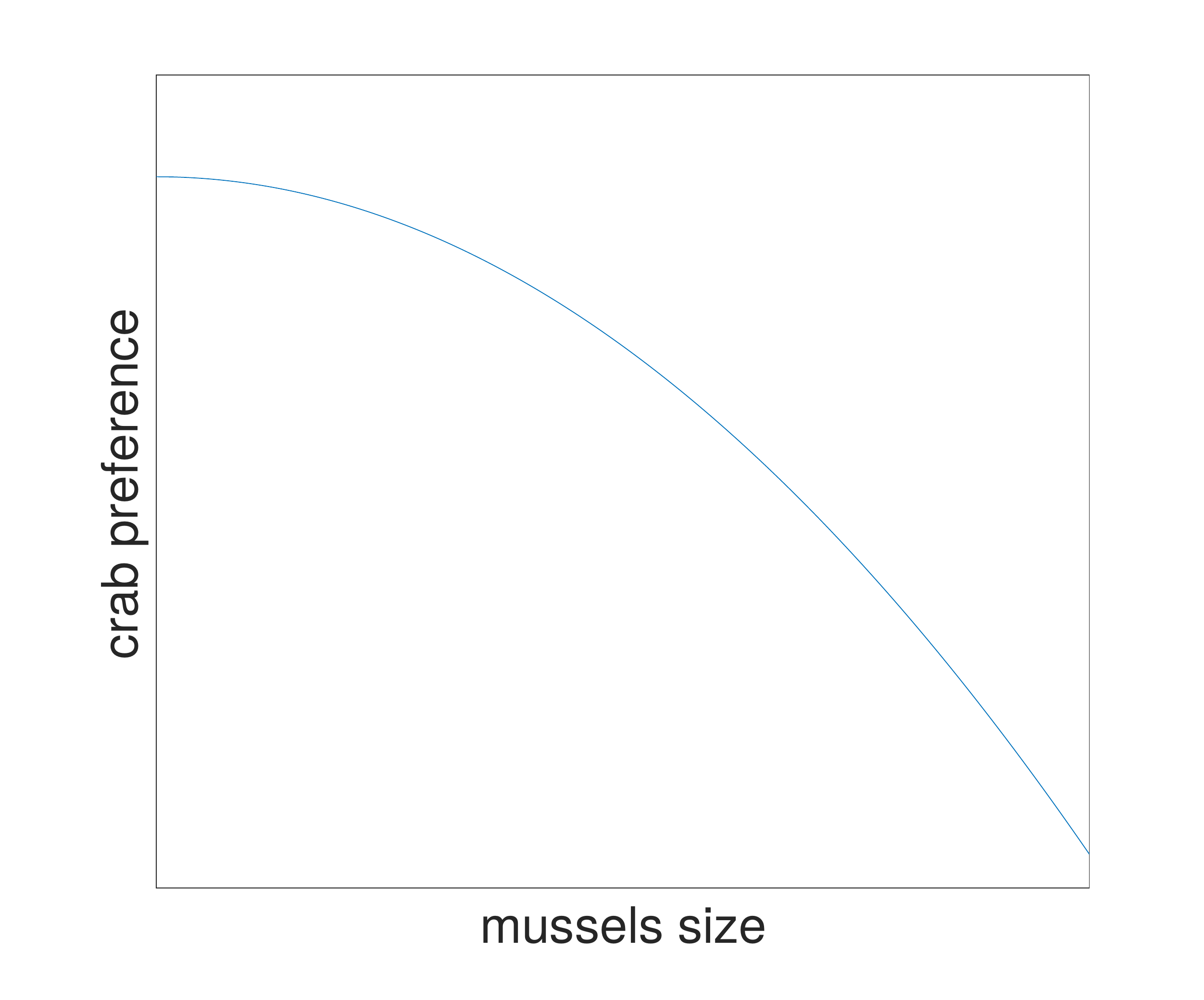}
 \caption{ Classic crab preference (without epibiont) shown in left panel. Crabs were presented with clean and overgrown mussels \cite{A10, A14}, and the handling times of the overgrown ones were up to 6 times greater than the clean ones (seen in center panel). These experimental results \cite{A10, A14} lead us to conjecture that crabs will \emph{switch} to the smaller uncovered juvenile mussels under pressure of epibiosis, shown in right panel.}
   \label{fig2}  
\end{figure} 
\section{Mathematical Formulation}
Our goal is to derive a mathematical model that best captures our hypotheses.
To this end, we make the following assumptions,
 
 \begin{enumerate}
\item An epibiont has invaded into a local crab-mussel community, and is growing logistically. It will eventually reach a critical carrying capacity.

\item We model the pressure from epibiosis, in terms of the attack rates $u_{1}, u_{2}$. That is we assume these are \emph{dependent on the epibiont density}.  As a simple first approach we assume 

\begin{equation*}
u_{1}(e)=  \frac{K-e}{K}, \ u_{2}(e) =  \frac{e}{K}.
\end{equation*}
Thus, without any epibiont presence ($e=0$), the adult mussel is the only prey eaten  and the juvenile is not eaten at all because there is not enough energy gain for the effort involved, so $u_{1}=1, u_{2}=0$, in line with classical theory \cite{K99}. However, this starts to change as the epibiont starts to overgrow the mussels. When the epibiont is at carrying capacity, $e=K$, we assume the adult mussels are completely overgrown, and thus is not consumed at all. The crab switches completely to juveniles, so that $u_{1}=0, u_{2}=1$.

\item  We model the decreasing fecundity in mussels due to epibiont cover by considering a growth rate $a=a(e)$. We consider $a=a(e)=a\left(\frac{K-\frac{e}{2}}{K}\right)$. Hence as epibionts get to carrying capacity $e=K$, this growth rate is cut in half and becomes $a/2$.

\item We assume the intraspecies competition is present only in adult mussels, and not juveniles \cite{L16}.

\item We assume the search rates $\lambda_{1}, \lambda_{2}$ to be the same, and normalized to 1, so $\lambda_{1}=\lambda_{2}=1$.

\end{enumerate}

Based on the above assumptions, we have the following system of differential equations,

\begin{equation}
\label{eq:1a1nj}
\begin{split}
\frac{dC}{dt}= & -d_{1}C + e_{1}u_{1}(e)\frac{ M_{A}}{1+h_{1}u_{1}(e) M_{A} + h_{2}u_{2}(e) M_{J}}C \\
& +e_{2}u_{2}(e)  \frac{ M_{J}}{1+h_{1}u_{1}(e) M_{A} + h_{2}u_{2}(e) M_{J}}C,
\end{split}
\end{equation}
\begin{equation}
\label{eq:2a1nj}
\frac{dM_{A}}{dt}=  bM_{J} - \delta_{1}M^{2}_{A} - u_{1}(e)\frac{ M_{A}}{1+h_{1}u_{1}(e) M_{A} + h_{2}u_{2}(e) M_{J}}C ,
\end{equation}

\begin{equation}
\label{eq:3a1nj}
\frac{dM_{J}}{dt}=  a(e) M_{A} - bM_{J} - u_{2}(e)  \frac{ M_{J}}{1+h_{1}u_{1}(e) M_{A} + h_{2}u_{2}(e) M_{J}}C,
\end{equation}

\begin{equation}
\label{eq:4a1nj}
\frac{de}{dt}=  b_{1}e(1-\frac{e}{K}).
\end{equation}

where
\begin{equation}
\label{eq:eee}
u_{1}(e)=  \frac{K-e}{K}, \ u_{2}(e)=   \frac{e}{K}, \  a(e)=a\left(\frac{K-\frac{e}{2}}{K}\right).
\end{equation}

\noindent
with positive initial conditions $C(0)=C_{0}, M_{A}(0)=M_{A0}, M_{J}(0)=M_{J0}, e(0)=e_{0}$. These responses are for the range $0 \leq e \leq K$. 

Here $C, M_{A}, M_{J}$ are the densities of crabs, adult mussels and juvenile mussels population at a given time $t$ respectively. The population density of \emph{D. vexillum} is $e$, while $d_{1}$ is the mortality rate of the crab, $e_{1}, e_{2}$ is the energy gain to the crab from preying on the adult mussel and juvenile mussel respectively, $h_{1}, h_{2}$ are the handling time of the adult mussel and juvenile mussel respectively, $b$ is the rate at which juveniles leave the juvenile class and become adults, $a$ is the rate at which juveniles are produced, $\delta_{1}$ measures the intraspecific competition among adult mussels, $b_1$ is the intrinsic rate of growth of the epibiont population, and  $K$ is its carrying capacity. 

\section{Dynamical Analysis}
\subsection{Boundedness}

The equation for $e$ is bounded trivially by $K$. Addition of \eqref{eq:1a1nj}-\eqref{eq:3a1nj}, and given the fact that $e_{1} < 1$ and $e_{2} < 1$, yields:

\begin{equation}
\frac{d(C+M_{A}+M_{J})}{dt} \leq -d_{1}C + a M_{A} - \delta_{1}M^{2}_{A} \leq a M_{A} - \delta_{1}M^{2}_{A}.
 \end{equation}
Thus, comparison with a logistic ODE yields: 

\begin{equation}
C+M_{A}+M_{J} \leq \frac{a}{\delta_{1}},
 \end{equation}
and thus we can state the following theorem:

\begin{theorem}
\label{thm:cmek}
Consider the crab-mussel system \eqref{eq:1a1nj}-\eqref{eq:4a1nj}. The solutions $ (C,M_{A},M_{J},e)$, satisfy the following uniform bounds
\begin{equation}
||C||_{\infty} \leq K_{1}, ||M_{A}||_{\infty} \leq K_{1}, ||M_{J}||_{\infty} \leq K_{1}, ||e||_{\infty} \leq K_{1},
\end{equation}
for any initial conditions $(C(0), M_{A}(0), M_{J}(0), e(0)) \in L^{\infty}$,
where 
\\
$K_{1} = \max (K, \frac{a}{\delta_{1}})$.
\end{theorem}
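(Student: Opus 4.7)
The plan is to establish the four uniform bounds separately, following the two lines of calculation that precede the statement. The bound on the epibiont is immediate: equation~\eqref{eq:4a1nj} is the scalar logistic ODE $\dot e = b_{1}e(1-e/K)$, whose phase portrait shows $e(t)\leq \max\{e(0),K\}$ for all $t\geq 0$, so $\|e\|_{\infty}\leq K$ whenever $e(0)\leq K$, and otherwise the initial datum can be absorbed into $K_1$.

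For the remaining three components I would first verify positive invariance of the nonnegative cone, since each right-hand side is nonnegative on the corresponding coordinate hyperplane. Writing $S:=C+M_{A}+M_{J}$ and adding~\eqref{eq:1a1nj}--\eqref{eq:3a1nj}, the three predation terms assemble with coefficients $(e_{1}-1)$ and $(e_{2}-1)$, both strictly negative, so they can be discarded by nonnegativity of $C,M_{A},M_{J}$. Combined with $a(e)\leq a$, this yields the inequality already anticipated in the excerpt,
\begin{equation*}
\frac{dS}{dt}\;\leq\;-d_{1}C+aM_{A}-\delta_{1}M_{A}^{2}\;\leq\;aM_{A}-\delta_{1}M_{A}^{2}.
\end{equation*}

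The third step, which the authors dispatch as ``comparison with a logistic ODE'', is where the real work lies, because the right-hand side above depends only on the single coordinate $M_{A}$ rather than on $S$, so a direct scalar logistic comparison in $S$ is not available. The cleanest remedy I see is a two-stage argument on the mussel subsystem. From~\eqref{eq:3a1nj} and~\eqref{eq:2a1nj} one has $\dot M_{J}\leq aM_{A}-bM_{J}$ and $\dot M_{A}\leq bM_{J}-\delta_{1}M_{A}^{2}$; a quasi-monotone comparison on this planar subsystem gives $\limsup_{t\to\infty}M_{A}(t)\leq a/\delta_{1}$. The $S$-inequality then has a strictly negative right-hand side whenever $M_{A}>a/\delta_{1}$, which rules out runaway growth, and a routine continuity argument promotes the asymptotic estimate into a uniform bound $\|S\|_{\infty}\leq\max\{S(0),\,a/\delta_{1}\}$.

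Finally, because each of $C,M_{A},M_{J}$ is nonnegative and bounded above by $S$, the three remaining component bounds follow at once, and taking $K_{1}=\max\{K,\,a/\delta_{1}\}$ (enlarged if necessary to absorb initial data that already exceed these thresholds) yields the conclusion. The main obstacle, to emphasise, is step three: an alternative route would be to look for a weighted Lyapunov functional $V=\alpha C+M_{A}+M_{J}$ with $\alpha\leq 1/\max(e_{1},e_{2})$ satisfying $\dot V\leq -\eta V+c$, so that Grönwall closes the argument directly, but in either formulation the $M_{A}$--$M_{J}$ coupling is what has to be unpacked carefully.
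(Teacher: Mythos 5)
Your first two steps coincide with the paper's: the logistic bound on $e$, and summing the three remaining equations so that the predation terms appear with coefficients $e_{1}-1<0$ and $e_{2}-1<0$ and can be dropped, arriving at $\frac{dS}{dt}\leq -d_{1}C+aM_{A}-\delta_{1}M_{A}^{2}$ for $S=C+M_{A}+M_{J}$. You are also right that the paper's final step, ``comparison with a logistic ODE yields $S\leq a/\delta_{1}$'', is not a legitimate scalar comparison, since the right-hand side is quadratic in the single coordinate $M_{A}$ rather than in $S$; identifying that is the valuable part of your write-up.

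However, your repair of step three does not close the gap. Negativity of $-d_{1}C+aM_{A}-\delta_{1}M_{A}^{2}$ when $M_{A}>a/\delta_{1}$ says nothing about the regime $M_{A}\leq a/\delta_{1}$, where the same inequality only gives $\frac{dS}{dt}\leq a^{2}/(4\delta_{1})$, i.e.\ a priori linear growth of $S$; since $M_{J}$ does not appear on the right-hand side at all (the $\pm bM_{J}$ terms cancel exactly in the sum), nothing there prevents $C$ or $M_{J}$ from drifting upward while $M_{A}$ stays below $a/\delta_{1}$. So the ``routine continuity argument'' yielding $\|S\|_{\infty}\leq\max\{S(0),a/\delta_{1}\}$ is not available. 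To finish one must (i) use the planar cooperative comparison, which gives $\limsup M_{A}\leq a/\delta_{1}$ but only $\limsup M_{J}\leq a^{2}/(b\delta_{1})$ --- and note this exceeds $a/\delta_{1}$ whenever $a>b$, so the constant $K_{1}=\max(K,a/\delta_{1})$ in the statement is itself too small in general, the boundary equilibrium $(0,a/\delta_{1},a^{2}/(b\delta_{1}),0)$ listed later in the paper already violating it --- and (ii) retain the $-d_{1}C$ term (which both you and the paper discard) and run a Gr\"onwall or dissipativity estimate of the form $\frac{dS}{dt}\leq -d_{1}S+d_{1}(\overline{M}_{A}+\overline{M}_{J})+a^{2}/(4\delta_{1})$ to bound $C$. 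Your closing suggestion of a weighted functional $V=\alpha C+M_{A}+M_{J}$ suffers from the same defect: with equal weights on $M_{A}$ and $M_{J}$ the $bM_{J}$ terms still cancel and no $-\eta M_{J}$ term is produced; one needs $V=\alpha C+\beta M_{A}+M_{J}$ with $\beta<1$ so as to retain $-(1-\beta)bM_{J}$. In short, your diagnosis of the paper's gap is correct, but your proposed completion fails at the same place, and no completion can deliver the constant $\max(K,a/\delta_{1})$ as stated.
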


\subsection{Equilibrium and Local Stability with no epibiont}
\noindent
We now consider the existence and stability of the equilibrium for the system when there is no epibiont($e=0$). The system is simplified as

\begin{equation}
\label{eq:1a1n112}
\frac{dC}{dt}=  -d_{1}C + e_{1}\frac{ M_{A}}{1+h_{1} M_{A} }C,
\end{equation}
\begin{equation}
\label{eq:2a1n112}
\frac{dM_{A}}{dt}=  bM_{J} - \delta_{1}M^{2}_{A} - \frac{ M_{A}}{1+h_{1}M_{A}}C, 
\end{equation}

\begin{equation}
\label{eq:3a1n112}
\frac{dM_{J}}{dt}=  a M_{A} - bM_{J}.
\end{equation}

\noindent
 Two equilibria, $ (0,0,0)$ and $(0,\frac{a}{\delta_{1}}, \frac{a^{2}}{\delta_{1}b})$, on the boundary and one interior equilibrium $(C^{*},  M_{A} ^{*}, M_{J} ^{*})$.  It is easy to see $ (0,0,0)$ is unstable. And  $(0,\frac{a}{\delta_{1}}, \frac{a^{2}}{\delta_{1}b})$ is globally stable if $ 0<e_{1}-d_{1} h_{1}<\frac{d_{1}\delta_{1}}{a }$ and unstable if $ e_{1}-d_{1} h_{1}>\frac{d_{1}\delta_{1}}{a }$. The interior equilibrium is given by
\begin{equation}
\label{eql_c}
C^{*}=\frac{e_{1}(a(e_{1}-d_{1}h_{1})-d_{1}\delta_{1})}{(e_{1}-d_{1}h_{1})^{2}},
\end{equation}

\begin{equation}
\label{eql_a}
 M_{A} ^{*}=\frac{d_{1}}{e_{1}-d_{1}h_{1}},
\end{equation}

\begin{equation}
\label{eql_j}
M_{J}^{*}=\frac{a d_{1}}{b(e_{1}-d_{1}h_{1})}.
\end{equation}

\noindent
Note that $ C^{*}>0,  M_{A} ^{*}>0$ and$  M_{J} ^{*}>0 $ if 
\begin{equation}
\label{eq:feasibility2}
  e_{1}-d_{1} h_{1}>\frac{d_{1}\delta_{1}}{a}.
\end{equation}

We next state the following theorem

\begin{theorem}
\label{thm:cme0}
Consider the crab-mussel system  \eqref{eq:1a1nj}-\eqref{eq:4a1nj}, without the presence of an epibiont, that is when $e=0$. There exists an interior steady state $ (C^{*},M_{A}^{*},M_{J}^{*})$, which is locally asymptotically stable under the following criteria,

\begin{equation}
\label{no_epibiont_stability}
\frac{d_{1} \delta_{1}}{a }< e_{1}-d_{1} h_{1}< \frac{d_{1}\delta_{1}}{a }+\frac{d_{1} \delta_{1} e_{1}}{a h_{1} }, \ d_{1} > \frac{e_{2}}{h_{2}}.
\end{equation}

\end{theorem}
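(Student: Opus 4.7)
The plan is to apply the Routh--Hurwitz criterion to the Jacobian of the reduced three-dimensional system (the $e=0$ subsystem \eqref{eq:1a1n112}--\eqref{eq:3a1n112}) evaluated at the interior equilibrium $(C^{*},M_{A}^{*},M_{J}^{*})$. Note that the condition $d_{1}>e_{2}/h_{2}$ does not appear in this reduced system; rather, it is invoked only to guarantee (via the OFT analysis of appendix section \ref{AP6}) that $(u_{1},u_{2})=(1,0)$ is the \emph{unique} optimal foraging strategy when $e=0$, so that the three-dimensional reduction is the correct ODE whose stability is to be analyzed.

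First I would compute the Jacobian $J^{*}$, making repeated use of the equilibrium identities $e_{1}M_{A}^{*}/(1+h_{1}M_{A}^{*})=d_{1}$, $1+h_{1}M_{A}^{*}=e_{1}/P$, and $C^{*}=e_{1}R/P^{2}$, where $P:=e_{1}-d_{1}h_{1}$ and $R:=aP-d_{1}\delta_{1}$. The first identity forces the $(1,1)$-entry to vanish, and a short calculation gives
\begin{equation*}
J^{*}=\begin{pmatrix} 0 & R & 0 \\ -d_{1}/e_{1} & -Q & b \\ 0 & a & -b \end{pmatrix},\qquad Q:=\frac{2d_{1}\delta_{1}}{P}+\frac{R}{e_{1}}.
\end{equation*}
Expanding $\det(\lambda I - J^{*})$ along the first column yields a characteristic polynomial $\lambda^{3}+A_{1}\lambda^{2}+A_{2}\lambda+A_{3}$ with
\begin{equation*}
A_{1}=Q+b,\qquad A_{2}=b(Q-a)+\frac{d_{1}R}{e_{1}},\qquad A_{3}=\frac{bd_{1}R}{e_{1}}.
\end{equation*}

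The Routh--Hurwitz criterion requires $A_{1}>0$, $A_{3}>0$, and $A_{1}A_{2}-A_{3}>0$. The left inequality in \eqref{no_epibiont_stability} is precisely the feasibility condition $R>0$, which immediately delivers $A_{3}>0$ and, together with $b>0$ and $Q>0$, also $A_{1}>0$. A direct expansion collapses the third condition to
\begin{equation*}
A_{1}A_{2}-A_{3}=b(Q+b)(Q-a)+\frac{Q\,d_{1}R}{e_{1}}>0.
\end{equation*}
The upper inequality in \eqref{no_epibiont_stability} is equivalent to $h_{1}R<d_{1}\delta_{1}e_{1}$, i.e.\ a sharp upper bound on $R/e_{1}$ relative to $d_{1}\delta_{1}/h_{1}$.

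The main obstacle will be the last Routh--Hurwitz inequality. When $Q\geq a$ both terms are manifestly positive and there is nothing to do; the real work is the regime $Q<a$, where one must show that the positive term $Qd_{1}R/e_{1}$ dominates $b(Q+b)(a-Q)$. I would clear denominators (multiplying by $Pe_{1}$) to rewrite $A_{1}A_{2}-A_{3}$ as a polynomial in $(P,R)$, substitute $a-Q=d_{1}[P(\delta_{1}+ah_{1})-2\delta_{1}e_{1}]/(Pe_{1})$, and then use the bound $Rh_{1}<d_{1}\delta_{1}e_{1}$ (equivalently the upper inequality in \eqref{no_epibiont_stability}) to absorb the negative contribution. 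With that algebraic reduction in hand, Routh--Hurwitz is verified and local asymptotic stability of $(C^{*},M_{A}^{*},M_{J}^{*})$ follows.
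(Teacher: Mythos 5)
Your setup coincides with the paper's own proof: same Jacobian, same characteristic polynomial (the paper's $A_2,A_1,A_0$ are your $A_1,A_2,A_3$), same Routh--Hurwitz reduction, and the identity $A_1A_2-A_3=b(Q+b)(Q-a)+Qd_1R/e_1$ is exactly what the paper obtains. Your reading of the side condition $d_1>e_2/h_2$ as an OFT consistency requirement rather than a Routh--Hurwitz output also matches the paper's remark. But the one step that actually carries the theorem is left unexecuted, and the plan you sketch for it points in the wrong direction. You have already computed $a-Q=d_1\bigl[P(\delta_1+ah_1)-2\delta_1 e_1\bigr]/(Pe_1)$; simplifying this using $d_1h_1=e_1-P$ gives
\begin{equation*}
Q-a=\frac{d_1\left(\delta_1 e_1-h_1R\right)}{Pe_1},
\end{equation*}
so the sign of $Q-a$ is governed \emph{exactly} by the upper bound in \eqref{no_epibiont_stability}: that bound is not an auxiliary estimate to be ``used to absorb a negative contribution,'' it \emph{is} the statement $Q>a$. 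Under the hypotheses there is no regime $Q<a$ to handle, both terms in $A_1A_2-A_3$ are positive, and the proof terminates at what you call the ``nothing to do'' branch. This is precisely the paper's argument (its sufficient condition $J_{23}J_{32}-J_{22}J_{33}<0$ is $b(a-Q)<0$).

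The fallback you propose for $Q<a$ cannot be made to work, which is why recognizing the equivalence above is essential rather than cosmetic: the negative term $b(Q+b)(a-Q)$ grows like $b^2$ while $Qd_1R/e_1$ is independent of $b$, so no inequality involving only $R$, $h_1$, $\delta_1$, $e_1$, $d_1$ can dominate it uniformly in $b$; for $Q<a$ and $b$ large the equilibrium is genuinely unstable. One further point your (faithful) transcription exposes: you restate the printed upper bound as $h_1R<d_1\delta_1 e_1$, but the condition that actually yields $Q>a$ is $h_1R<\delta_1 e_1$, i.e.\ $e_1-d_1h_1<\frac{d_1\delta_1}{a}+\frac{\delta_1 e_1}{a h_1}$. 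The paper's appendix derivation produces exactly this, and the extra factor of $d_1$ in the second term of \eqref{no_epibiont_stability} appears to be a typo; as literally printed the bound does not imply $Q>a$ unless $d_1\le 1$.
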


The proof is relegated to the appendix section \ref{AP2}. 

\begin{remark}
Note, the second condition $d_{1} > \frac{e_{2}}{h_{2}}$ is not a result of the Routh-Hurwitz criterion, rather it follows from lemma \ref{lem:sm1} in appendix section \ref{AP6}. We enforce it so that the attack rates should be as predicted via classical OFT.
\end{remark}

 \subsection{Equilibrim and Local Stability Analysis with Epibiont}
\label{3}
The system \eqref{eq:1a1nj}-\eqref{eq:4a1nj} has five possible equilibria. There is one in the interior of the positive octant $(C^{*},M_{A}^{*},M_{J}^{*},e^{*})$,
and four on the boundary, $(0,0,0,0), (0,0,0,K), (0,\frac{a}{\delta_{1}},\frac{a^{2}}{b\delta_{1}},0) $ and $ (0,\frac{a}{2\delta_{1}},\frac{a^{2}}{4b\delta_{1}},K)$. It is easy to check that the equilibria with no epiboint, $(0,0,0,0)$ and  $(0,\frac{a}{\delta_{1}},\frac{a^{2}}{b\delta_{1}},0)$, are unstable. Furthermore, $(0,\frac{a}{2\delta_{1}},\frac{a^{2}}{4b\delta_{1}},K) $ is stable if  $0< e_{2}-d_{1}h_{2}<\frac{4 b d_{1} \delta_{1}}{a^{2}}$  and unstable if $ e_{2}-d_{1}h_{2}>\frac{4 b d_{1} \delta_{1}}{a^{2}}$. In fact, it is common that prey exist in a stable state in the absence of the predator. Finally, $(0,0,0,K)$ is also unstable. 
We will focus on the interior equilibrium. 

Consider the interior equilibrium, i.e.  $ (C^{*},M_{A}^{*},M_{J}^{*},e^{*}).$ It is easy to see $ e^{*}=K$ in the equilibrium state.Then we have $ u_{1}(e^*)=0, u_{2}(e^*)=1, a(e^*)=\frac{a}{2}$. To get $ (C^{*},M_{A}^{*},M_{J}^{*},e^{*})$ explicitly, it is equivalent to solve the following equations:

\begin{equation}
-d_{1}C  + e_{2}  \frac{ M_{J}}{1+ h_{2} M_{J}}C=0,
\end{equation}
\begin{equation}
 bM_{J} - \delta_{1}M^{2}_{A}  =0,
\end{equation}

\begin{equation}
\frac{a}{2} M_{A} - bM_{J} -  \frac{ M_{J}}{1+ h_{2} M_{J}}C=0,
\end{equation}

\begin{equation}
e=K.
\end{equation}

Thus the interior equilibrium is given by 
\begin{equation}
C^{*}=\frac{1}{2} \frac{a e_{2} \sqrt{\frac{b d_{1}}{(e_{2} -d_{1} h_{2}) \delta_{1}}}}{d_{1}} - \frac{e_{2} b}{e_{2}-d_{1} h_{2}},
\end{equation}

\begin{equation}
 M_{A} ^{*}=\sqrt{\frac{b d_{1}}{(e_{2}-d_{1}h_{2})\delta_{1}}},
\end{equation}

\begin{equation}
M_{J}^{*}=\frac{d_{1}}{e_{2}-d_{1} h_{2}},
\end{equation}

\begin{equation}
e^{*}=K.
\end{equation}

\noindent
Note that $M_{A}^{*} >0 $ and $ M_{J}^{*}>0 $ if $e_{2}-d_{1} h_{2}>0. $ And $C^{*}>0 $ if $ e_{2}-d_{1} h_{2}> \frac{4 b d_{1} \delta_{1}}{a^{2}}$. Therefore, the feasibility criteria for this system is 

\begin{equation}
\label{feasibility_epibiont}
 e_{2}-d_{1} h_{2} > \frac{4 b d_{1} \delta_{1}}{a^{2}}.
\end{equation}

We next state the following theorem

\begin{theorem}
\label{thm:cmek}
Consider the crab-mussel system \eqref{eq:1a1nj}-\eqref{eq:4a1nj}, when the epibiont has reached equilibrium, that is $e=K$. There exists an interior steady state $ (C^{*},M_{A}^{*},M_{J}^{*},K)$, which is locally asymptotically stable under the following criteria,
\begin{equation}
\quad \frac{4bd_{1}\delta_{1}}{a^{2}}<e_{2}-d_{1}h_{2}<\frac{16 b d_{1} \delta_{1}}{a^{2}}, \ d_{1} > \frac{e_{1}}{h_{1}}.
\end{equation}
\end{theorem}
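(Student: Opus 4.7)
The plan is to reduce the four-dimensional stability problem to a three-dimensional one via the block-triangular structure of the Jacobian, then dispatch the Routh-Hurwitz inequalities, with the only non-trivial one, $a_1 a_2 > a_3$, collapsing to the stated bound on $e_2-d_1 h_2$. Since the epibiont equation $\dot e = b_1 e(1-e/K)$ does not depend on $C, M_A, M_J$, the last row of the full $4\times 4$ Jacobian at $(C^*,M_A^*,M_J^*,K)$ is $(0,0,0,-b_1)$. Hence the characteristic polynomial factors as $(\lambda+b_1)\,p(\lambda)$, and since $-b_1<0$ it suffices to show all roots of $p$ have negative real part, where $p$ is the characteristic polynomial of the upper-left $3\times 3$ block.

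I would then compute this block with $e=K$ substituted, so $u_1(K)=0$, $u_2(K)=1$, $a(K)=a/2$. Using the equilibrium identity $d_1 = e_2 M_J^*/(1+h_2 M_J^*)$ to kill the $(C,C)$-entry, and abbreviating
\[
A = \frac{e_2 C^*}{(1+h_2 M_J^*)^2},\quad B = 2\delta_1 M_A^*,\quad F = \frac{C^*}{(1+h_2 M_J^*)^2},\quad D = b+F,
\]
the reduced block takes the form
\[
J_3 = \begin{pmatrix} 0 & 0 & A \\ 0 & -B & b \\ -d_1/e_2 & a/2 & -D \end{pmatrix},
\]
and a direct expansion along the first row gives $p(\lambda)=\lambda^3+a_1\lambda^2+a_2\lambda+a_3$ with $a_1=B+D$, $a_2 = BD-ab/2 + d_1 F$, and $a_3 = d_1 BF$. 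The feasibility hypothesis $e_2-d_1 h_2 > 4bd_1\delta_1/a^2$ guarantees $C^*>0$, hence $F>0$, so $a_1>0$ and $a_3>0$ are immediate.

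The crux is $a_1 a_2 > a_3$. A short regrouping yields the clean identity
\[
a_1 a_2 - a_3 \;=\; (B+D)\bigl(BD-\tfrac{ab}{2}\bigr) + d_1 F D,
\]
so it suffices to prove $BD > ab/2$. Since $BD = Bb + BF > Bb$, it further suffices to show $B > a/2$. Plugging in $M_A^* = \sqrt{bd_1/((e_2-d_1 h_2)\delta_1)}$ gives $B = 2\sqrt{bd_1\delta_1/(e_2-d_1 h_2)}$, and $B > a/2$ is, after squaring, exactly $e_2-d_1 h_2 < 16 b d_1 \delta_1/a^2$, the stated upper bound. This closes the Routh-Hurwitz verification and yields local asymptotic stability.

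Finally, the auxiliary condition $d_1 > e_1/h_1$ is not produced by Routh-Hurwitz; as in the remark following Theorem~\ref{thm:cme0}, it is imposed so that the attack-rate pair $(u_1(K),u_2(K))=(0,1)$ is the unique optimal crab strategy at $e=K$, via the epibiont analogue of Lemma~\ref{lem:sm1} in appendix section~\ref{AP6}. The main obstacle is locating the right regrouping that produces the identity for $a_1 a_2 - a_3$; once that shape is in hand, the reduction to $B > a/2$ and the squaring step deliver the stated bound with no further work.
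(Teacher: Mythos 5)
Your proposal is correct and follows essentially the same route as the paper's proof in the appendix: reduce to the $3\times 3$ block at $e=K$, apply Routh--Hurwitz, and observe that both $B_1>0$ and $B_2B_1-B_0>0$ hinge on the single inequality $J_{22}J_{33}>J_{23}J_{32}$ (your $BD>ab/2$), which your reduction to $B>a/2$ converts into exactly the stated upper bound $e_2-d_1h_2<16bd_1\delta_1/a^2$; your regrouping of $a_1a_2-a_3$ is the same identity the paper uses, just written in cleaner notation. The only cosmetic difference is that you justify the dimension reduction via the explicit block-triangular factorization with eigenvalue $-b_1$, which is slightly more careful than the paper's informal argument.
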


The proof relies on the Routh-Hurwitz criterion \cite{PL13}, and is relegated to the appendix section \ref{AP1}.

\begin{remark}
Note that the condition $d_{1} > \frac{e_{1}}{h_{1}}$ is not a result of the Routh-Hurwitz criterion, rather it follows from lemma \ref{lem:sm1} in appendix section \ref{AP6}. In a sense we enforce it so that the attack rates should be as predicted via classical OFT.
\end{remark}

\subsection{Global stability}

We now derive some results on the global stability of the internal equilibrium.
\begin{theorem}
\label{thm:global}
Consider the model \eqref{eq:1a1n112}-\eqref{eq:3a1n112}. There exists an $\epsilon > 0$, s.t. the internal equilbrium point, $(C^{*}, M_A^{*}, M_{J}^{*})$, is globally asymptotically stable under the following parametric restriction 
\begin{equation}
\frac{d_{1} \delta_{1}}{a}<e_{1}-d_{1}h_{1}<\frac{d_{1}}{\epsilon}, \quad 0<\epsilon < \frac{a}{\delta_{1}},\quad  \frac{1}{2}<e_{1} < 1.
\end{equation}
\end{theorem}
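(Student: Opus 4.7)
The plan is to construct a Volterra-type Lyapunov function and apply LaSalle's invariance principle on the absorbing set provided by the boundedness estimate above. I would set
\[
V(C,M_A,M_J) = w_1\left(C - C^* - C^*\ln\tfrac{C}{C^*}\right) + w_2\left(M_A - M_A^* - M_A^*\ln\tfrac{M_A}{M_A^*}\right) + w_3\left(M_J - M_J^* - M_J^*\ln\tfrac{M_J}{M_J^*}\right),
\]
with positive weights $w_1,w_2,w_3$ to be fixed later. Each summand is nonnegative on the positive octant and vanishes only at the interior equilibrium, so $V$ is a legitimate strict Lyapunov candidate centered at $(C^{*},M_{A}^{*},M_{J}^{*})$.

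Next I would differentiate $V$ along solutions and substitute the equilibrium identities $d_1 = e_1 M_A^*/(1+h_1 M_A^*)$, $bM_J^*/M_A^* = \delta_1 M_A^* + C^*/(1+h_1 M_A^*)$, and $b = aM_A^*/M_J^*$ to eliminate all linear contributions. After rearrangement, $\dot V$ should decompose into a $(C-C^*)(M_A-M_A^*)$ cross term from the predator and adult mussel equations, a $(M_A-M_A^*)(M_J-M_J^*)$ cross term from both mussel equations, the manifestly negative quadratic $-w_2\delta_1(M_A-M_A^*)^2$, auxiliary negative quadratics in $(M_A-M_A^*)^2/M_A$ and $(M_J-M_J^*)^2/M_J$ coming from the stage-structure exchange terms $bM_J/M_A$ and $aM_A/M_J$, and a residual positive Holling contribution of size $w_2 h_1 C^*(M_A-M_A^*)^2/[(1+h_1 M_A)(1+h_1 M_A^*)]$.

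The weights are then tuned. Choosing $w_1 = w_2(1+h_1 M_A^*)/e_1$ annihilates the $(C-C^*)(M_A-M_A^*)$ cross term exactly; the restriction $\tfrac{1}{2} < e_1 < 1$ enters here to keep $w_1$ of controlled size relative to $w_2$, which matters when later absorbing the Holling residual. A Cauchy--Schwarz estimate on $(M_A-M_A^*)(M_J-M_J^*)$ then allows $w_3$ to be chosen so that this cross term is dominated by the sum of the $(M_A-M_A^*)^2$ and $(M_J-M_J^*)^2$ pieces. The residual positive Holling piece is absorbed into $-w_2\delta_1(M_A-M_A^*)^2$ using the feasibility inequality $e_1-d_1h_1 > d_1\delta_1/a$, which caps the size of $C^{*}$ relative to $\delta_1$ via the explicit formula for $C^{*}$.

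The main obstacle is controlling the $1/M_A$ and $1/M_J$ factors in the stage-structure quadratics, which could degenerate along the boundary $\{M_A=0\}\cup\{M_J=0\}$; this is precisely where the constant $\epsilon$ enters. I would first establish uniform persistence $\liminf_{t\to\infty} M_A(t) \geq \epsilon$ using the feasibility inequality together with a standard acyclic-boundary argument that rules out the boundary equilibria $(0,0,0)$ and $(0, a/\delta_1, a^2/(b\delta_1))$ as $\omega$-limit points of interior orbits, the latter being unstable precisely when $e_1-d_1h_1 > d_1\delta_1/a$. The upper bound $e_1-d_1h_1 < d_1/\epsilon$ is exactly the statement $M_A^{*} > \epsilon$, and combined with this persistence lower bound and the uniform upper bound $M_A \leq a/\delta_1$ from boundedness, it forces the stage-structure quadratics to be uniformly negative definite on the absorbing set. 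LaSalle's invariance principle then yields $\dot V \leq 0$ with $\dot V = 0$ only at the equilibrium, closing the argument. I expect the most delicate step to be coupling the persistence lower bound to the Cauchy--Schwarz estimate used to absorb the $(M_A-M_A^*)(M_J-M_J^*)$ cross term, since the constants on both sides depend on $\epsilon$ and must remain compatible with $\epsilon < a/\delta_1$.
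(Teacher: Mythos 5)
Your route is genuinely different from the paper's. The paper does not use a Volterra/Goh function at all: it shifts $M_A=\overline{M_A}+\epsilon$, takes the plain linear functional $V=C+\overline{M_A}+M_J$, uses $e_1<1$ to discard the predator's gain term, completes squares in $\overline{M_A}$, and invokes $e_1>\tfrac12$ in the closing inequality; there $\epsilon$ is a coordinate shift, not a persistence threshold, and no LaSalle or persistence argument appears. So your reading of the hypotheses is already a reinterpretation rather than a reconstruction, which is fine in principle but means every constant has to be re-justified from scratch.

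More importantly, there is a genuine gap at the decisive step. You propose to absorb the positive Holling residual $w_2h_1C^{*}(M_A-M_A^{*})^2/[(1+h_1M_A)(1+h_1M_A^{*})]$ into $-w_2\delta_1(M_A-M_A^{*})^2$ ``using the feasibility inequality $e_1-d_1h_1>d_1\delta_1/a$, which caps the size of $C^{*}$.'' It does not: that inequality is a lower bound whose only role is to make $C^{*}>0$, and from \eqref{eql_c} one sees $C^{*}$ grows without bound as $a$ grows, so no upper bound on $C^{*}/\delta_1$ follows from it. What the absorption actually requires is roughly $h_1C^{*}/(1+h_1M_A^{*})<\delta_1$ on the persistence set, an extra condition you have not derived from the stated hypotheses. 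This is not a removable technicality: Theorem \ref{thm:he0} shows the same system undergoes a Hopf bifurcation under the feasibility condition alone, so the interior equilibrium is not globally (or even locally) stable throughout that regime, and any Lyapunov argument that closes using only feasibility must be flawed. The conditions $\tfrac12<e_1<1$ and $e_1-d_1h_1<d_1/\epsilon$ therefore have to enter your estimates in an essential quantitative way, but in the sketch they appear only as bookkeeping remarks (``keep $w_1$ of controlled size,'' ``$M_A^{*}>\epsilon$'') with no inequality actually closed; the uniform persistence claim and the compatibility of the $\epsilon$-dependent constants with the Cauchy--Schwarz step, which you yourself flag as delicate, are likewise left open. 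Until the residual Holling term is genuinely dominated and the $\epsilon$-dependence is pinned down, the plan does not constitute a proof.
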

\begin{proof}
Consider $M_{A}=\overline{M_{A}}+\epsilon$, Under this transformation we have the following transformed system 
\begin{equation}
\label{new1}
\frac{dC}{dt}=-d_{1}C  + e_{1}  \frac{ \overline{M_{A}}+\epsilon}{1+ h_{1} (\overline{M_{A}}+\epsilon)}C,
\end{equation}

\begin{equation}
\label{new2}
\frac{d\overline{M_{A}}}{dt}= bM_{J} - \delta_{1}(\overline{M_{A}}+\epsilon)^{2}-\frac{\overline{M_{A}}+\epsilon}{1+h_{1}(\overline{M_{A}}+\epsilon)}C,
\end{equation}

\begin{equation}
\label{new3}
\frac{dM_{J}}{dt}=a (\overline{M_{A}}+\epsilon)- bM_{J}.
\end{equation}
It is enough to show the new system \eqref{new1}-\eqref{new3} is globally asymptotically stable. The equilibrium of \eqref{new1}-\eqref{new3} is given by 
\begin{equation}
\begin{split}
& C^{*}=\frac{e_{1} a}{d_{1}} (\overline{M_{A} }^{*}+\epsilon)-\frac{e_{1}\delta_{1}}{d_{1}}(\overline{M_{A} }^{*}+\epsilon)^{2},\\
& \overline{M_{A} }^{*}=\frac{d_{1}}{e_{1}-d_{1}h_{1}}-\epsilon,\\
& M_{J}^{*}=\frac{a d_{1}}{b(e_{1}-d_{1}h_{1})}.
\end{split}
\end{equation}
Note that solutions to this new system are feasible if 
\begin{equation}
\frac{d_{1}\delta_{1}}{a}<e_{1}-d_{1} h_{1}<\frac{d_{1}}{\epsilon}, \quad 0<\epsilon<\frac{a}{\delta_{1}}.
\end{equation}
We define the following Lyapunov function,
\begin{equation}
V(C,\overline{M_{A}^{*}},M_{J})=C+\overline{M_{A}^{*}}+M_{J}.
\end{equation}
Note that $V \geq 0$, because of the positivity of the solutions. Furthermore, $V$ is radially unbounded.  Now consider 
\begin{equation}
\begin{split}
\frac{dV}{dt}&=\frac{dC}{dt}+\frac{d\overline{M_{A}}}{dt}+\frac{dM_{J}}{dt}\\
& = -d_{1}C+ (e_{1} -1) \frac{ \overline{M_{A}}+\epsilon}{1+ h_{1} (\overline{M_{A}}+\epsilon)}C-\delta_{1}(\overline{M_{A}}+\epsilon)^{2}+a (\overline{M_{A}}+\epsilon)\\
& <-d_{1}C-\delta_{1}(\overline{M_{A}}+\epsilon)^{2}+a (\overline{M_{A}}+\epsilon)\\
& =-d_{1}C-\delta_{1} (\overline{M_{A}})^{2}-2\delta_{1}\epsilon \overline{M_{A}}-\delta_{1} \epsilon^{2}+a \overline{M_{A}} +a\epsilon\\
&=-d_{1} C -\delta_{1} (\overline{M_{A}}-\frac{a}{2\delta_{1}})^{2}-\delta_{1}(\epsilon-\frac{a}{2 \delta_{1}})^{2}+\frac{a^{2}}{2\delta_{1}}-2\delta_{1} \epsilon \overline{M_{A} }\\
&=-\delta_{1} [(\overline{M_{A}}-\frac{a}{2\delta_{1}})^{2}+(\epsilon-\frac{a}{2 \delta_{1}})^{2}]+\frac{a^{2}}{2\delta_{1}}-2\delta_{1} \epsilon \overline{M_{A} }- d_{1}C.
\end{split}
\end{equation}
We hope $\frac{a^{2}}{2\delta_{1}}<\delta_{1} [(\overline{M_{A}}-\frac{a}{2\delta_{1}})^{2}+(\epsilon-\frac{a}{2 \delta_{1}})^{2}]+2\delta_{1} \epsilon \overline{M_{A} }+d_{1}C$. Since $2[(\overline{M_{A}}-\frac{a}{2\delta_{1}})^{2}+(\epsilon-\frac{a}{2 \delta_{1}})^{2}] \geq (\overline{M_{A}}-\epsilon)^{2}$, it is enough to show 
 
\begin{equation}
\begin{split}
\Longrightarrow \frac{a^{2}}{\delta_{1}^{2}} &< (\overline{M_{A}}-\epsilon)^{2}+2\epsilon \overline{M_{A} }+\frac{2d_{1}C}{\delta_{1}}\\
&=(\overline{M_{A}})^{2}+\epsilon^{2}+\frac{2d_{1}C}{\delta_{1}}.
\end{split}
\end{equation}
Since $\overline{M_{A} }$ is bounded by $\frac{a}{\delta_{1}}-\epsilon$ from the feasibility conditions, we show 
\begin{equation}
\Longrightarrow \frac{a^{2}}{\delta_{1}^{2}}-\epsilon^{2}<(\frac{a}{\delta_{1}}-\epsilon)^{2}+\frac{2d_{1}C}{\delta_1},
\end{equation}

\begin{equation}
\Longrightarrow \frac{a}{\delta_{1}} <\epsilon+\frac{d_{1}C}{ \epsilon \delta_{1}}.
\end{equation}
Since $\epsilon+\frac{d_{1}C}{\epsilon \delta_{1}} \geq 2\sqrt{\frac{d_{1}C}{\delta_{1}}}$, it is enough to show $\frac{a}{\delta_{1}} <2\sqrt{\frac{d_{1}C}{\delta_{1}}}$. Due to $ C^{*}=\frac{e_{1} a}{d_{1}} (\overline{M_{A} }^{*}+\epsilon)-\frac{e_{1}\delta_{1}}{d_{1}}(\overline{M_{A} }^{*}+\epsilon)^{2}=\frac{e_{1} a}{d_{1}} M_{A} ^{*}-\frac{e_{1}\delta_{1}}{d_{1}} (M_{A} ^{*})^{2}$, it is enough to show 

\begin{equation}
\begin{split}
\frac{a^2}{\delta_{1}} &< 4d_{1}C=4 e_{1} a M_{A}-4e_{1}\delta_{1}M_{A}^{2}\\
&=-4e_{1}\delta_{1} (M_{A}-\frac{a}{2\delta_{1}})^{2}+\frac{2e_{1}a^{2}}{\delta_{1}}.
\end{split}
\end{equation}
Therefore we only need to show 

\begin{equation}
\frac{a^2}{\delta_{1}}  < max(-4e_{1}\delta_{1} (M_{A}-\frac{a}{2\delta_{1}})^{2}+\frac{2e_{1}a^{2}}{\delta_{1}}) = \frac{2e_{1}a^{2}}{\delta_{1}},
\end{equation}
and this requires we have $ \frac{1}{2} < e_{1} < 1$.Then the system \eqref{eq:1a1n112}-\eqref{eq:3a1n112} is globally stable if 
\begin{equation}
\frac{d_{1} \delta_{1}}{a}<e_{1}-d_{1}h_{1}<\frac{d_{1}}{\epsilon}, \quad 0 < \epsilon < \frac{a}{\delta_{1}},\quad  \frac{1}{2} < e_{1} < 1.
\end{equation}
\end{proof}

\begin{figure}[H]
        \begin{minipage}[b]{0.48\linewidth}
            \centering
            \includegraphics[width=\textwidth]{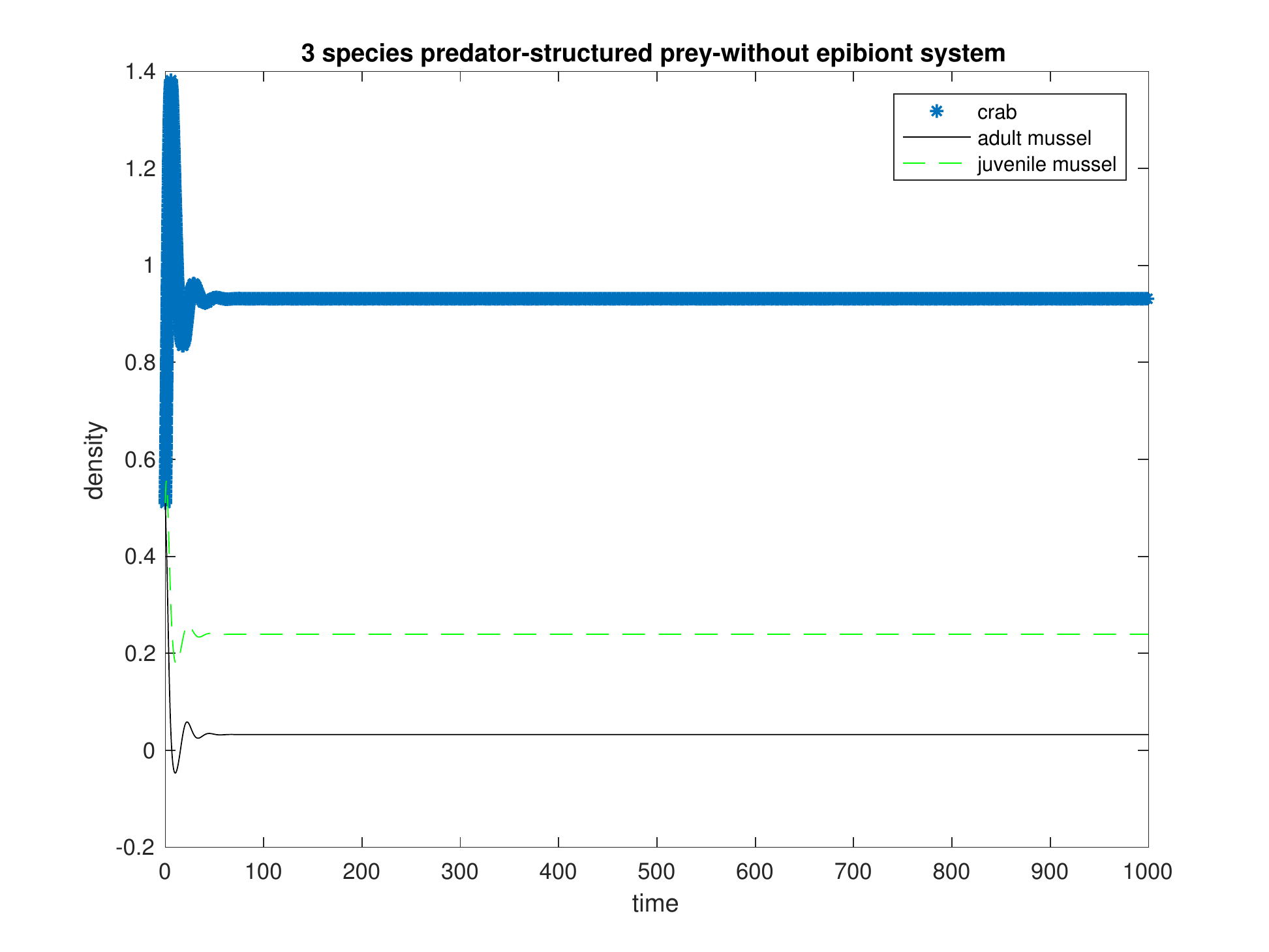}
            \subcaption{ }
             \label{fig:fig1_1}
        \end{minipage}
        \hspace{0.10cm}
        \begin{minipage}[b]{0.48\linewidth}
            \centering
            \includegraphics[width=\textwidth]{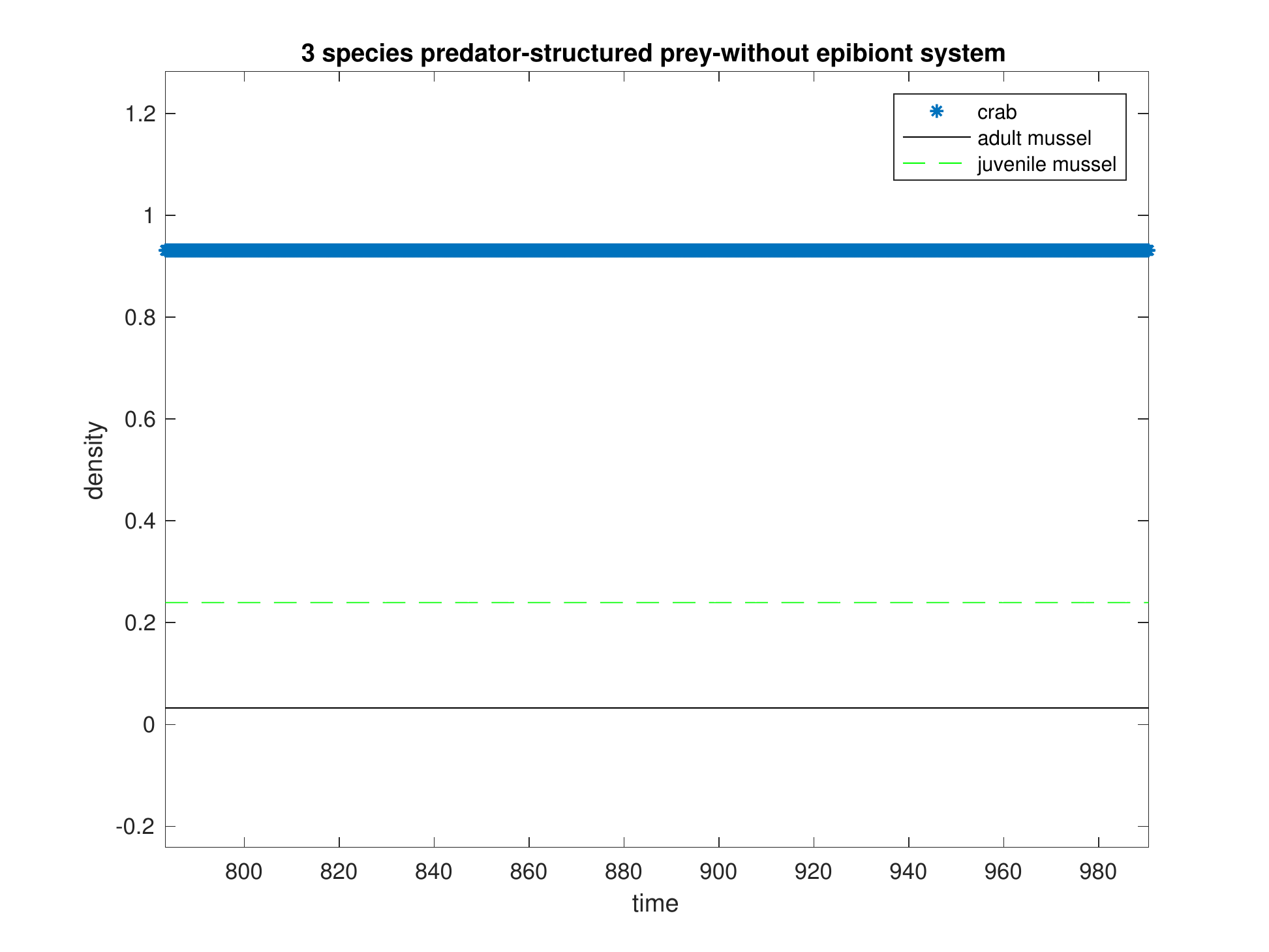}
            \subcaption{ }
            \label{fig:fig1_2}
        \end{minipage}
             \label{fig1}
\caption{ The above figures verify theorem \ref{thm:global}. We consider the parameters $e_1=0.9, e_2=0.01, d_1=0.2,b=1,h_1=0.2, h_2=0.1, \delta_{1}=0.6,a=0.8,\epsilon=0.2.$(A) The initial condition $(C_{0}, M_{A0},M_{J0})$=(0.5 0.5 0.5) (B) The initial condition $(C_{0}, M_{A0},M_{J0})$=(500 500 500) (zooming in time scale). They both reach a stable level (0.9312    0.0326    0.2394).}
\end{figure} 

\begin{remark}
Note, although we prove global stability (under certain parametric restrictions) for the case without epibiont $(e=0)$, it is easily proven using the same approach as above for the $(e=K)$ case by just replacing $M_A=\overline{M_A}+\epsilon$ and defining a new Lyapunov function as $V(C,\overline{M_A},M_J,e)=C+\overline{M_A}+M_J+e$.
\end{remark}

\section{Hopf Bifurcation}
Now we will investigate the Hopf bifurcation for the system in terms of parameter $ a $. In this paper, we will follow the method developed by Liu \cite{L94}. 
Firstly, let us consider the system  \eqref{eq:1a1nj}-\eqref{eq:3a1nj}, without the presence of an epibiont ($e=0$), that is when $e=0$. The Hopf bifurcation at $ a=a_{*} $ can occur if $ A_{2}(a_{*}),  A_{0}(a_{*}),$  and $ \phi(a_{*})=A_{2}(a_{*})A_{1}(a_{*})-A_{0}(a_{*})$ are smooth functions of $a $ in an open interval of $ a_{*} \in 
\mathbf{ R} $ such that:
\begin{enumerate} 

\item  $ A_{1}(a_{*})>0,  A_{0}(a_{*})>0,$ and $ \phi(a_{*})=A_{2}(a_{*})A_{1}(a_{*})-A_{0}(a_{*})=0$.

\item $\frac{d\phi(a)}{da}|_{a=a_{*}} \not=0$.

\end{enumerate}

We check the above, in appendix section \ref{AP3}, to state the following theorem

\begin{theorem}
\label{thm:he0}
Under the condition \eqref{eq:feasibility2}, there is a simple Hopf bifurcation of the positive equilibrium point $(C^{*}, M_{A}^{*},M_{J}^{*})$ of model system \eqref{eq:1a1n112}-\eqref{eq:3a1n112} at some critical value of parameter $a_{*}$ given by  \eqref{eq:hf1} and  \eqref{eq:hf2}.
\end{theorem}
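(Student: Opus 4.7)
The plan is to verify the three conditions of Liu's criterion for the cubic characteristic polynomial associated to the Jacobian of (\ref{eq:1a1n112})--(\ref{eq:3a1n112}) linearized at the interior equilibrium $(C^{*},M_{A}^{*},M_{J}^{*})$, with all coefficients viewed as functions of the bifurcation parameter $a$. First I would compute the Jacobian $J(C^{*},M_{A}^{*},M_{J}^{*})$. Substituting the explicit equilibrium values from (\ref{eql_c})--(\ref{eql_j}) renders every entry as a rational function of $a$ with the other parameters $(b,d_{1},e_{1},h_{1},\delta_{1})$ fixed. Expanding $\det(\lambda I - J)$ then yields a cubic
\begin{equation*}
\lambda^{3} + A_{2}(a)\lambda^{2} + A_{1}(a)\lambda + A_{0}(a) = 0,
\end{equation*}
with explicit (though bulky) expressions for the three coefficients.

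Next, I would check the sign conditions. The feasibility inequality (\ref{eq:feasibility2}) guarantees positivity of $C^{*},M_{A}^{*},M_{J}^{*}$, which in turn should force $A_{0}(a)=-\det J > 0$ and $A_{1}(a)>0$ on an open subinterval of feasible $a$-values. Since the functional response is Holling type II and intraspecific competition enters only through $M_{A}^{2}$, the signs of several Jacobian entries are fixed, so these positivity checks reduce to verifying one or two inequalities in $a$.

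The heart of the argument is solving $\phi(a):=A_{2}(a)A_{1}(a)-A_{0}(a)=0$ for a critical value $a_{*}$. Vanishing of $\phi$ is precisely the condition that the cubic factors as $(\lambda + A_{2}(a_{*}))(\lambda^{2} + A_{1}(a_{*}))$, giving a pair of purely imaginary eigenvalues $\pm i\sqrt{A_{1}(a_{*})}$ and a third real eigenvalue $-A_{2}(a_{*})<0$. Existence of a root $a_{*}$ in the feasibility window can be established via the intermediate value theorem: I would show $\phi$ changes sign between the feasibility boundary $a=d_{1}^{2}\delta_{1}/(e_{1}-d_{1}h_{1})$ (where the interior equilibrium coincides with the predator-free equilibrium) and some larger value where direct substitution gives the opposite sign. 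The explicit formulas (\ref{eq:hf1})--(\ref{eq:hf2}) referenced in the statement will arise from solving $\phi(a_{*})=0$ or an equivalent polynomial reduction.

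Finally, the transversality condition $\phi'(a_{*})\neq 0$ must be checked; together with the nondegenerate second condition of Liu, this ensures the pair of imaginary eigenvalues crosses the axis with nonzero speed and therefore produces a genuine Hopf, rather than a degenerate, bifurcation. The main obstacle I anticipate is the algebraic bookkeeping: because $(C^{*},M_{A}^{*},M_{J}^{*})$ itself depends on $a$, the coefficients $A_{i}(a)$ are complicated rational functions, and $\phi(a)=0$ need not admit a clean closed form. Demonstrating $\phi'(a_{*})\neq 0$ in closed form will likely require either leveraging the equilibrium relations to simplify partial derivatives or bounding $\phi'$ away from zero on the feasibility interval; this is where the proof is most likely to become technical.
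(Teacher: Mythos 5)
Your proposal follows essentially the same route as the paper: both apply Liu's criterion to the cubic characteristic polynomial at the interior equilibrium with $a$ as the bifurcation parameter, verify $A_{0}>0$ and $A_{2}>0$ under the feasibility condition \eqref{eq:feasibility2}, locate $a_{*}$ from $\phi(a)=A_{2}A_{1}-A_{0}=0$, and confirm transversality via $\phi'(a_{*})\neq 0$. The only cosmetic difference is that the paper writes the root down explicitly as $a_{*}=f_{1}/f_{2}$ in \eqref{eq:hf1}--\eqref{eq:hf2} and checks $\phi'(a_{*})<0$ by direct computation, whereas you propose an intermediate value argument for existence of the root.
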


\begin{figure}[H]
        \begin{minipage}[b]{0.480\linewidth}
            \centering
            \includegraphics[width=\textwidth]{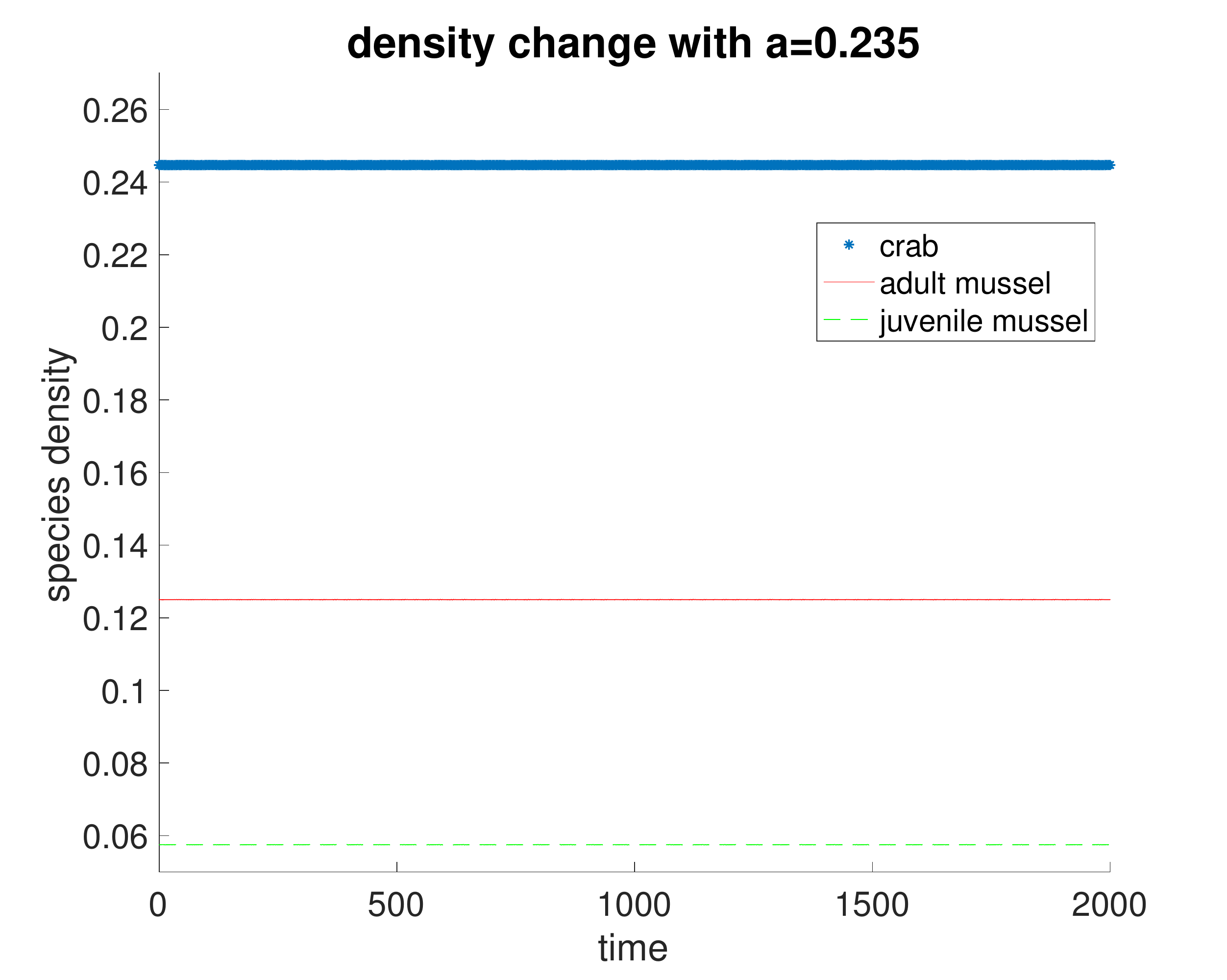}
            \subcaption{ }
             \label{fig:fig3_1}
        \end{minipage}
        \hspace{0.10cm}
        \begin{minipage}[b]{0.480\linewidth}
            \centering
            \includegraphics[width=\textwidth]{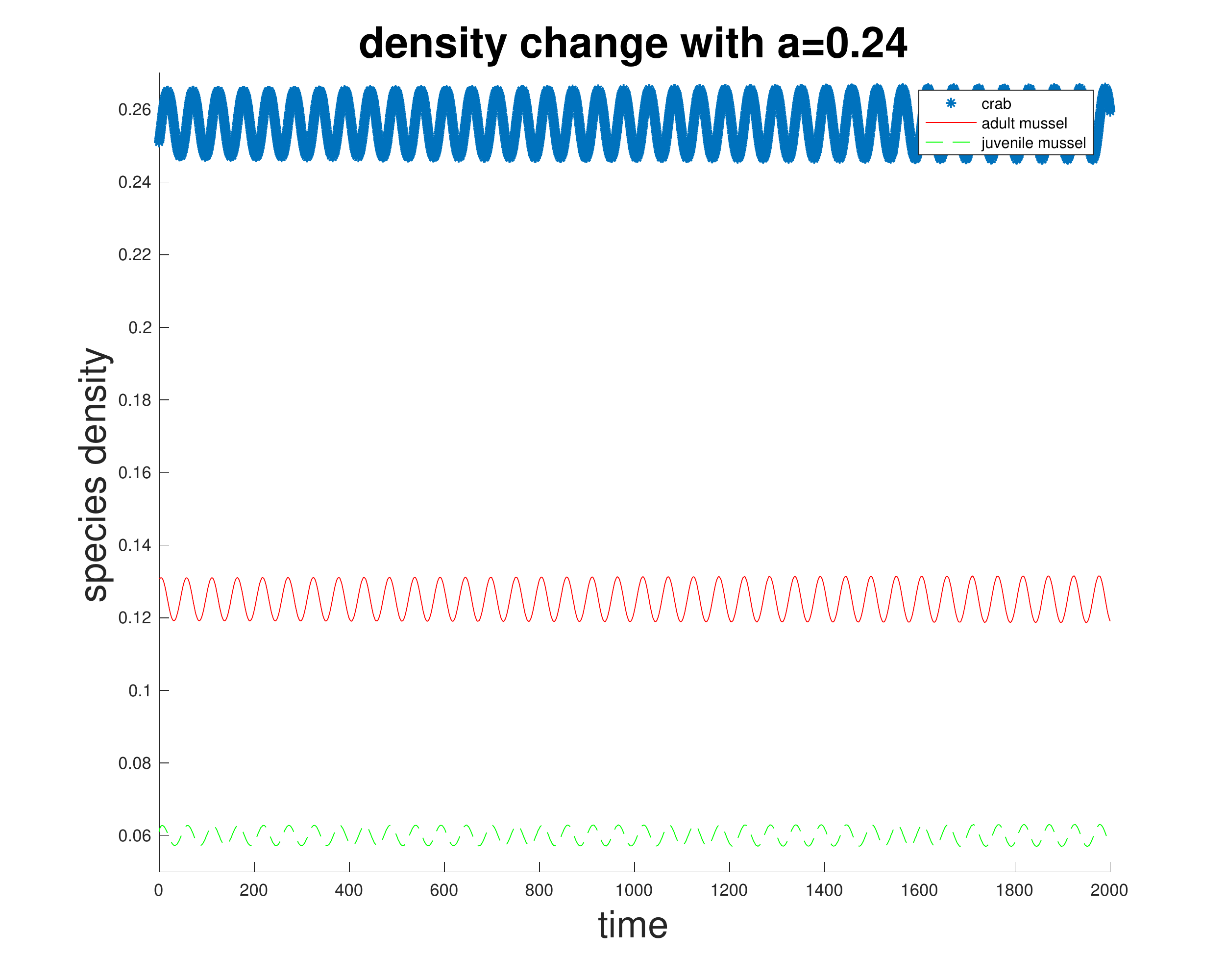}
            \subcaption{ }
            \label{fig:fig3_2}
        \end{minipage}
 \label{fig3}
\caption{Here we demonstrate the species density change with time. We see in $(A)$, the population of the species are stable when $a=0.235$, while in $(B)$ occurence of a Hopf bifurcation has lead to population cycles.}
\end{figure}

\section{Optimal Control}

In this section our goal is to investigate mechanisms in our crab-mussel-epibiont system, that, if controlled, could lead to optimal levels of crab or mussel densities.
 We assume that the attack rates $u_{1}, u_{2}$ are not known \emph{a priori} and enter the system as time-dependent controls. They no longer depend on the epibiont density. Instead we assume that the handling time depends on the epibiont density $e$ in the following way,                                                                                                                                                                                                                                                                                                                                                                                                                                                                                                                                                                                                                                                                                                                                                                                                                                                                                                                                                                                                                                                                                                                                                                                                                                                                                                                                                                                                                                                                                                                                                                                                                                                                                                                                              

where
\begin{equation}
\label{eq:eej}
h_{1}(e)=  1+  \frac{e}{K}, \  a(e)=a\left(\frac{K-\frac{e}{2}}{K}\right).
\end{equation}
These responses are for the range $0 \leq e \leq K$. Increase in epibiont density still negatively effects mussel fecundity and the handling time for adult mussels increases with increasing epibiont density.

This has a twofold advantage. We can visualise the system from the crab's point of view. That is, the crab can "optimally" control its attack rate, to reach the best possible population density. Also we can visualise the system from the mussel's point of view. That is, the mussel can induce defenses or other mechanisms, that would alter the attack rate of the crab, thus enabling the mussel population density to reach optimum levels. Our model takes the following form,

\begin{equation}
\label{eq:1a1njjo}
\begin{split}
C'= & -d_{1}C + e_{1}u_{1}(t)\frac{ M_{A}}{1+h_{1}(e)u_{1}(t) M_{A} + h_{2}u_{2}(t) M_{J}}C  \\
& +e_{2}u_{2}(t)  \frac{ M_{J}}{1+h_{1}(e)u_{1}(t) M_{A} + h_{2}u_{2}(t) M_{J}}C,
\end{split}
\end{equation}
\begin{equation}
\label{eq:2a1njjo}
M_{A}'=  bM_{J} - \delta_{1}M^{2}_{A} - u_{1}(t)\frac{ M_{A}}{1+h_{1}(e)u_{1}(t) M_{A} + h_{2}u_{2}(t) M_{J}}C,
\end{equation}

\begin{equation}
\label{eq:3a1jjno}
M_{J}'=  a(e) M_{A} - bM_{J} - u_{2}(t)  \frac{ M_{J}}{1+h_{1}(e)u_{1}(t) M_{A} + h_{2}u_{2}(t) M_{J}}C,
\end{equation}

\begin{equation}
\label{eq:4a1njjo}
e'=  b_{1}e(1-\frac{e}{K}).
\end{equation}

 We next derive optimal strategies for three objective functions, where we maximize both crab and mussel populations. 
To simplify the calculation, we will consider the case when $e=K$, which is when the epibiont achieves carrying capacity. In this case our system reduces to

\begin{equation}
\label{eq:1a1njj}
\begin{split}
C'=  & -d_{1}C + e_{1}u_{1}(t)\frac{ M_{A}}{1+2u_{1}(t) M_{A} + h_{2}u_{2}(t) M_{J}}C  \\
& + e_{2}u_{2}(t)  \frac{ M_{J}}{1+2u_{1}(t) M_{A} + h_{2}u_{2}(t) M_{J}}C,
\end{split}
\end{equation}
\begin{equation}
\label{eq:2a1njj}
M_{A}'=  bM_{J} - \delta_{1}M^{2}_{A} - u_{1}(t)\frac{ M_{A}}{1+2u_{1}(t) M_{A} + h_{2}u_{2}(t) M_{J}}C, 
\end{equation}

\begin{equation}
\label{eq:3a1jjn}
M_{J}'=  \frac{a}{2} M_{A} - bM_{J} - u_{2}(t)  \frac{ M_{J}}{1+2u_{1}(t) M_{A} + h_{2}u_{2}(t) M_{J}}C.
\end{equation}

\subsection{Maximizing crab denisty w.r.t. attacking rates}
To maximize density of the crab, the density of juvenile mussels (thus leading to more adult mussels, its favored food) should be maximized. Crab attack rates should be miminized on the juvenile mussels, as they are less preferred by the crab.
Thus we choose the following objective functional,
\begin{equation}
  J_{1}(u_{1},u_{2})= \int^{T}_{0} (C+M_{J} - \frac{1}{2}u_{2}^{2} ) dt,
\end{equation}

\begin{center}
$s.t. $ \eqref{eq:1a1njj}- \eqref{eq:3a1jjn} and $C(t_{0})=C_{0}, M_{A}(t_{0})=M_{A_{0}}, M_{J}(t_{0})=M_{J_{0}}$. 
\end{center}
and we search for the optimal controls in the set $U$ where
\begin{equation}
 U= \{ (u_{1},u_{2})|u_{i} \ \mbox{measurable}, \  0 \leq u_{1} \leq 1, 0 \leq u_{2} \leq 1, \  t \in [0,T], \ \forall T\}.
\end{equation}
The goal is to seek an optimal $(u^{*}_{1}, u^{*}_{2})$ s.t.,

\begin{equation}
  J_{1}(u^{*}_{1}, u^{*}_{2})= \underset{(u_{1},u_{2})}{\max} \int^{T}_{0} (C+M_{J} - \frac{1}{2}u_{2}^{2})  dt.
\end{equation}

We can state the following existence theorem,

\begin{theorem}
\label{thm:oc1}
Consider the optimal control problem \eqref{eq:1a1njj}-\eqref{eq:3a1jjn}. There exists $(u^{*}_{1}, u^{*}_{2}) \in U$ s.t.

\begin{equation}
  J_{1}(u^{*}_{1}, u^{*}_{2})= \underset{(u_{1},u_{2}) \in U}{\max} \int^{T}_{0} (C+M_{J} - \frac{1}{2}u_{2}^{2})  dt.
\end{equation}
\end{theorem}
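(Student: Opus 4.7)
The plan is to prove existence by the direct method of the calculus of variations: extract along a maximizing sequence a weak-$*$ limit of the controls and a uniform limit of the states, and verify that the limit pair lies in $U$ and attains the supremum. Before running this I would collect the structural ingredients. The admissible set $U=[0,1]^2$ is compact, convex, and nonempty (it contains, for instance, the constant pair $(0,0)$); the integrand $C+M_J-\tfrac12 u_2^2$ is continuous jointly in state and control and concave in $(u_1,u_2)$, strictly so in $u_2$; the right-hand sides of \eqref{eq:1a1njj}--\eqref{eq:3a1jjn} are $C^1$ jointly in state and control, with denominators uniformly bounded below by $1$, so on bounded state sets they are globally bounded and Lipschitz in $(C,M_A,M_J)$ uniformly over $(u_1,u_2)\in U$.

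Next I would lift the boundedness statement of Theorem \ref{thm:cmek} to an a priori bound that is uniform in the admissible control. The summation argument used there invokes only $e_1,e_2<1$ and the fact that the predation terms are nonnegative, both of which persist for any $(u_1,u_2)\in U$, so the associated trajectory exists on $[0,T]$ and satisfies $\|C\|_\infty,\|M_A\|_\infty,\|M_J\|_\infty\leq K_1$ with $K_1$ independent of the control. Through the state equation this also bounds the time derivatives uniformly in the control, giving equicontinuity of the state on $[0,T]$, and it makes $J_1$ a well-defined bounded functional on $U$.

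With this in place I would run the direct method. Pick $(u_1^n,u_2^n)\in U$ with $J_1(u_1^n,u_2^n)\to\sup_U J_1 =: J^\star$ and let $(C^n,M_A^n,M_J^n)$ be the corresponding states. By Banach--Alaoglu a subsequence satisfies $u_i^n\rightharpoonup^* u_i^*$ weak-$*$ in $L^\infty(0,T)$ with $u_i^*(t)\in[0,1]$ a.e.; by Arzel\`a--Ascoli a further subsequence of the states converges uniformly to some $(C^*,M_A^*,M_J^*)$. After identifying this limit as the trajectory driven by $(u_1^*,u_2^*)$, upper semicontinuity of $J_1$ --- the integrals $\int_0^T (C^n+M_J^n)\,dt$ converge by uniform convergence of the states, while $\int_0^T \tfrac12 (u_2^n)^2\,dt$ is weakly lower semicontinuous by convexity --- gives $J_1(u_1^*,u_2^*)\geq J^\star$, hence equality, so $(u_1^*,u_2^*)$ is the desired maximizer.

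The main obstacle is the identification step, because the right-hand sides of the state equations are \emph{rational}, hence nonlinear, functions of the controls, and weak-$*$ convergence $u^n\rightharpoonup^* u^*$ does not automatically imply $f(x^n,u^n)\to f(x^*,u^*)$. I plan to resolve this via a Filippov--Cesari existence theorem in its property-(Q) formulation: since the denominators are pinched between $1$ and a constant depending only on $K_1$, the augmented orientor set $\{(f(t,x,u),L(t,x,u)-\eta):u\in U,\ \eta\geq 0\}$ is compact and upper semicontinuous in $(t,x)$, and the concavity of $-\tfrac12 u_2^2$ together with the linear appearance of the controls in the numerator supplies the convexity of the upper envelope required. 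This is precisely Cesari's hypothesis, delivering $(u_1^*,u_2^*)\in U$ with $J_1(u_1^*,u_2^*)=J^\star$ and completing the proof.
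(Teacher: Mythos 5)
Your overall framework --- an a priori state bound uniform over admissible controls, a maximizing sequence, weak-$*$ compactness of the controls plus Arzel\`a--Ascoli for the states, and a Filippov--Cesari closure argument --- is in spirit the same route the paper takes: its one-line proof cites Fleming--Rishel together with the global boundedness of the states and concavity of the integrand in $u_2$. Your observation that the summation/comparison argument behind the boundedness theorem uses only $e_1,e_2<1$ and nonnegativity of the predation terms, and therefore yields a bound independent of the control, is correct and is genuinely needed. You also put your finger on exactly the right difficulty: the right-hand sides of \eqref{eq:1a1njj}--\eqref{eq:3a1jjn} are rational in $(u_1,u_2)$, so weak-$*$ convergence of the controls does not pass through the dynamics, and everything reduces to Cesari's convexity (property (Q)) of the augmented velocity set.

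The gap is that your justification of that convexity --- ``the linear appearance of the controls in the numerator'' --- is not correct, and the property itself appears to fail. The controls also sit in the shared denominator $D=1+2u_1M_A+h_2u_2M_J$. Setting $p=u_1M_A/D$ and $q=u_2M_J/D$, one finds $D^{-1}=1-2p-h_2q$ and the admissible square $[0,1]^2$ maps onto a convex polygon in $(p,q)$ on which the three velocities are \emph{affine}; so the unaugmented velocity set is indeed convex, and the limit trajectory is driven by \emph{some} admissible control. But the running cost contains $-\tfrac12 u_2^2$ with $u_2=q/\bigl(M_J(1-2p-h_2q)\bigr)$, and a direct computation shows the Hessian of $u_2^2$ as a function of $(p,q)$ is indefinite (its determinant is negative wherever $q>0$; e.g.\ at $(p,q)=(0.1,0.2)$ with $h_2=M_J=1$ the secant test along the direction $(1,-0.75)$ already exhibits strict concavity). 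Hence the hypograph of the running cost over the velocity set is not convex, property (Q) is not satisfied, and the upper-semicontinuity step $\limsup_n J_1(u^n)\le J_1(u^*)$ is not established by the theorem you invoke. Concavity of $u\mapsto -\tfrac12u_2^2$ on $U$, which you and the paper both cite, is not the relevant convexity when the dynamics are non-affine in $u$. To be fair, the paper's own proof does not resolve this either --- the Fleming--Rishel existence theorem it appeals to requires the state equation to be affine in the control --- so your attempt is more explicit about where the difficulty lies, but as written the decisive convexity claim is unsupported and, on inspection, false; a correct proof would need either a relaxed-control formulation with a subsequent chattering argument, or a reformulation of the cost in terms of the transformed variables.
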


\begin{proof}
The compactness (closed and bounded in ODE case) of the functional $J_{1}$ follows from the global boundedness of the state variables via theorem \ref{thm:cmek}, and the boundedness assumption on the controls. Also the functional $J_{1}$ is concave in the argument $u_{2}$. This is easily verified via standard application \cite{LJ07}. These in conjunction give the existence of an optimal control via application of classical one predator-two prey theory \cite{FR75}.
\end{proof}

 In order to derive necessary conditions on the optimal control, we use Pontryagin's maximum principle (PMP). 
The Hamiltonian for our problem is given by
\begin{equation}
H=C+M_{J} - \frac{1}{2}u_{2}^2 +\lambda_{1}  C' +\lambda_{2} M_{A}' +\lambda_{3} M_{J}'.
\end{equation}

We use the Hamiltonian to find a differential equation of the adjoint $\lambda_{i}, i=1,2,3$.
\begin{equation}
\begin{split}
\lambda_{1}'(t)=&-\lambda_{{1}} \left( -d_{{1}}+{\frac {M_{{A}}e_{{1}}u_{{1}}+M_{{J}}
e_{{2}}u_{{2}}}{M_{{A}}h_{{1}}u_{{1}}+M_{{J}}h_{{2}}u_{{2}}+1}}
 \right) \\
& +{\frac {\lambda_{{2}}u_{{1}}M_{{A}}}{M_{{A}}h_{{1}}u_{{1}}+M
_{{J}}h_{{2}}u_{{2}}+1}}+ {\frac {\lambda_{{3}}u_{{2}}M_{{J}}}{M_{{A}}h
_{{1}}u_{{1}}+M_{{J}}h_{{2}}u_{{2}}+1}}-1,\\
\lambda_{2}'(t)=&-\lambda_{{1}} \left( {\frac {e_{{1}}u_{{1}}C}{M_{{A}}h_{{1}}u_{{1}}+M
_{{J}}h_{{2}}u_{{2}}+1}}-{\frac { \left( M_{{A}}e_{{1}}u_{{1}}+M_{{J}}
e_{{2}}u_{{2}} \right) Ch_{{1}}u_{{1}}}{ \left( M_{{A}}h_{{1}}u_{{1}}+
M_{{J}}h_{{2}}u_{{2}}+1 \right) ^{2}}} \right) \\ 
&- \lambda_{{2}} \left( -
2\,\delta_{{1}}M_{{A}}-{\frac {u_{{1}}C}{M_{{A}}h_{{1}}u_{{1}}+M_{{J}}
h_{{2}}u_{{2}}+1}}+{\frac {{u_{{1}}}^{2}M_{{A}}Ch_{{1}}}{ \left( M_{{A
}}h_{{1}}u_{{1}}+M_{{J}}h_{{2}}u_{{2}}+1 \right) ^{2}}} \right) \\
& -\lambda_{{3}} \left( \frac{a}{2}+{\frac {u_{{2}}M_{{J}}Ch_{{1}}u_{{1}}}{
 \left( M_{{A}}h_{{1}}u_{{1}}+M_{{J}}h_{{2}}u_{{2}}+1 \right) ^{2}}}
 \right), \\
 \lambda_{3}'(t)=&- \lambda_{{1}} \left( {\frac {e_{{2}}u_{{2}}C}{M_{{A}}h_{{1}}u_{{1}}
+M_{{J}}h_{{2}}u_{{2}}+1}}-{\frac { \left( M_{{A}}e_{{1}}u_{{1}}+M_{{J
}}e_{{2}}u_{{2}} \right) Ch_{{2}}u_{{2}}}{ \left( M_{{A}}h_{{1}}u_{{1}
}+M_{{J}}h_{{2}}u_{{2}}+1 \right) ^{2}}} \right) \\
& -\lambda_{{2}}\left( b+{\frac {u_{{1}}M_{{A}}Ch_{{2}}u_{{2}}}{ \left( M_{{A}}h_{{1}
}u_{{1}}+M_{{J}}h_{{2}}u_{{2}}+1 \right) ^{2}}} \right) \\
& -\lambda_{{3}}\left( -b-{\frac {u_{{2}}C}{M_{{A}}h_{{1}}u_{{1}}+M_{{J}}h_{{2}}u_{{2
}}+1}}+{\frac {{u_{{2}}}^{2}M_{{J}}Ch_{{2}}}{ \left( M_{{A}}h_{{1}}u_{
{1}}+M_{{J}}h_{{2}}u_{{2}}+1 \right) ^{2}}} \right) -1,
\end{split}
\end{equation}

\noindent
with the transversality condition given as
\begin{equation}
\lambda_{1}(T)=\lambda_{2}(T)=\lambda_{3}(T)=0.
\end{equation}

Considering the optimality conditions, the Hamiltonian function is differentiated with respect to control variables $ u_{1} $ and $ u_{2}$  resulting in
\begin{equation}
\begin{split}
\frac{\partial H}{\partial u_{1}}=& \lambda_{{1}} \left( {\frac {M_{{A}}e_{{1}}C}{M_{{A}}h_{{1}}u_{{1}}+M_{{J}}h_{{2}}u_{{2}}+1}}-{\frac { \left( M_{{A}}e_{{1}}u_{{1}}+M_{{J}}e
_{{2}}u_{{2}} \right) CM_{{A}}h_{{1}}}{ \left( M_{{A}}h_{{1}}u_{{1}}+M
_{{J}}h_{{2}}u_{{2}}+1 \right) ^{2}}} \right) +\\
& \lambda_{{2}} \left( -{\frac {M_{{A}}C}{M_{{A}}h_{{1}}u_{{1}}+M_{{J}}h_{{2}}u_{{2}}+1}}+{
\frac {u_{{1}}{M_{{A}}}^{2}Ch_{{1}}}{ \left( M_{{A}}h_{{1}}u_{{1}}+M_{
{J}}h_{{2}}u_{{2}}+1 \right) ^{2}}} \right) +\\ 
& \lambda_{3} {\frac {u_{{2}}M_{{J}}CM_{{A}}h_{{1}}}{ \left( M_{{A}}h_{{1}}u_{{1}}+M_{{J}}h_{{2}
}u_{{2}}+1 \right) ^{2}}},\\
\frac{\partial H}{\partial u_{2}}=& \lambda_{{1}} \left( {\frac {M_{{J}}e_{{2}}C}{h_{{1}}u_{{1}}M
_{{A}}+h_{{2}}u_{{2}}M_{{J}}+1}}-{\frac { \left( e_{{1}}u_{{1}}M_{{A}}
+e_{{2}}u_{{2}}M_{{J}} \right) CM_{{J}}h_{{2}}}{ \left( h_{{1}}u_{{1}}
M_{{A}}+h_{{2}}u_{{2}}M_{{J}}+1 \right) ^{2}}} \right) +\\ 
&\lambda_{2} {\frac {u_{{1}}M_{{A}}CM_{{J}}h_{{2}}}{ \left( h_{{1}}u_{{1}}M_{{
A}}+h_{{2}}u_{{2}}M_{{J}}+1 \right) ^{2}}}+\\
& \lambda_{{3}} \left( -{
\frac {M_{{J}}C}{h_{{1}}u_{{1}}M_{{A}}+h_{{2}}u_{{2}}M_{{J}}+1}}+{
\frac {u_{{2}}{M_{{J}}}^{2}Ch_{{2}}}{ \left( h_{{1}}u_{{1}}M_{{A}}+h_{
{2}}u_{{2}}M_{{J}}+1 \right) ^{2}}} \right) -u_{{2}}.
\end{split}
\end{equation}

We find a characterization of $u_{1}^{*}$ by considering three cases:
\begin{equation}
\begin{split}
\frac{\partial H}{\partial u_{1}}<0 & \Rightarrow u_{1}^{*}=0,\\
\frac{\partial H}{\partial u_{1}}=0 & \Rightarrow u_{1}^{*}=u_{1_{1}} \quad  s.t. \quad \frac{\partial H}{\partial u_{1}}
\bigg| _{u_{1_{1}}}=0,\\
\frac{\partial H}{\partial u_{1}}>0 & \Rightarrow u_{1}^{*}=1.
\end{split}
\end{equation}

When the control is at the upper bound,$u_{1_{1}}$ is strictly greater than 1. When the control is at the lower bound, the solution of $u_{1_{1}}$ is strictly less than 0. Similarly for $u_{2}^{*}$. Thus a compact way of writing the optimal control is 
\begin{equation}
\label{eq:occ}
\begin{split}
u_{1}^{*} & =min(1, max(0, u_{1_{1}})),\\
u_{2}^{*} & =min(1, max(0, u_{2_{1}})),
\end{split}
\end{equation}

\noindent
where $u_{1_{1}}$  and $u_{2_{1}}$ are given by 
\begin{equation}
\begin{split}
\label{eq:occ2}
u_{1_{1}}=& \frac{w_{1}}{w_{2}},\\
u_{2_{1}}=& {\frac {-e_{{1}}\lambda_{{1}}+\lambda_{{2}}}{M_{{J}} \left( e_{{1}}h_{
{2}}\lambda_{{1}}-e_{{2}}h_{{1}}\lambda_{{1}}+h_{{1}}\lambda_{{3}}-h_{
{2}}\lambda_{{2}} \right) }}.
\end{split}
\end{equation}

\noindent
with
\begin{equation}
\begin{split}
\label{eq:occ3}
w_{1}=& C{M_{{J}}}^{2}{e_{{1}}}^{3}{h_{{2}}}^{3}{\lambda_{{1}}}^{3}-3\,C{M_{{J
}}}^{2}{e_{{1}}}^{2}e_{{2}}h_{{1}}{h_{{2}}}^{2}{\lambda_{{1}}}^{3}+3\,
C{M_{{J}}}^{2}e_{{1}}{e_{{2}}}^{2}{h_{{1}}}^{2}h_{{2}}{\lambda_{{1}}}^
{3}-\\
& C{M_{{J}}}^{2}{e_{{2}}}^{3}{h_{{1}}}^{3}{\lambda_{{1}}}^{3}+3\,C{M
_{{J}}}^{2}{e_{{1}}}^{2}h_{{1}}{h_{{2}}}^{2}{\lambda_{{1}}}^{2}\lambda
_{{3}}-3\,C{M_{{J}}}^{2}{e_{{1}}}^{2}{h_{{2}}}^{3}{\lambda_{{1}}}^{2}
\lambda_{{2}}-\\
& 6\,C{M_{{J}}}^{2}e_{{1}}e_{{2}}{h_{{1}}}^{2}h_{{2}}{
\lambda_{{1}}}^{2}\lambda_{{3}}+6\,C{M_{{J}}}^{2}e_{{1}}e_{{2}}h_{{1}}
{h_{{2}}}^{2}{\lambda_{{1}}}^{2}\lambda_{{2}}+3\,C{M_{{J}}}^{2}{e_{{2}
}}^{2}{h_{{1}}}^{3}{\lambda_{{1}}}^{2}\lambda_{{3}}-\\
& 3\,C{M_{{J}}}^{2}{
e_{{2}}}^{2}{h_{{1}}}^{2}h_{{2}}{\lambda_{{1}}}^{2}\lambda_{{2}}+3\,C{
M_{{J}}}^{2}e_{{1}}{h_{{1}}}^{2}h_{{2}}\lambda_{{1}}{\lambda_{{3}}}^{2
}-6\,C{M_{{J}}}^{2}e_{{1}}h_{{1}}{h_{{2}}}^{2}\lambda_{{1}}\lambda_{{2
}}\lambda_{{3}}+\\
& 3\,C{M_{{J}}}^{2}e_{{1}}{h_{{2}}}^{3}\lambda_{{1}}{
\lambda_{{2}}}^{2}-3\,C{M_{{J}}}^{2}e_{{2}}{h_{{1}}}^{3}\lambda_{{1}}{
\lambda_{{3}}}^{2}+6\,C{M_{{J}}}^{2}e_{{2}}{h_{{1}}}^{2}h_{{2}}\lambda
_{{1}}\lambda_{{2}}\lambda_{{3}}-\\
& 3\,C{M_{{J}}}^{2}e_{{2}}h_{{1}}{h_{{2
}}}^{2}\lambda_{{1}}{\lambda_{{2}}}^{2}+C{M_{{J}}}^{2}{h_{{1}}}^{3}{
\lambda_{{3}}}^{3}-3\,C{M_{{J}}}^{2}{h_{{1}}}^{2}h_{{2}}\lambda_{{2}}{
\lambda_{{3}}}^{2}+\\
& 3\,C{M_{{J}}}^{2}h_{{1}}{h_{{2}}}^{2}{\lambda_{{2}}
}^{2}\lambda_{{3}}-C{M_{{J}}}^{2}{h_{{2}}}^{3}{\lambda_{{2}}}^{3}+e_{{
1}}e_{{2}}{h_{{1}}}^{2}{\lambda_{{1}}}^{2}-e_{{1}}{h_{{1}}}^{2}\lambda
_{{1}}\lambda_{{3}}-\\
& e_{{2}}{h_{{1}}}^{2}\lambda_{{1}}\lambda_{{2}}+{h_
{{1}}}^{2}\lambda_{{2}}\lambda_{{3}},\\
w_{2}=& M_{{A}}{h_{{1}}}^{2} ( {e_{{1}}}^{2}h_{{2}}{\lambda_{{1}}}^{2}-e_
{{1}}e_{{2}}h_{{1}}{\lambda_{{1}}}^{2}+e_{{1}}h_{{1}}\lambda_{{1}}
\lambda_{{3}}-2\,e_{{1}}h_{{2}}\lambda_{{1}}\lambda_{{2}}+e_{{2}}h_{{1
}}\lambda_{{1}}\lambda_{{2}}-\\
& h_{{1}}\lambda_{{2}}\lambda_{{3}}+h_{{2}}
{\lambda_{{2}}}^{2} ).
\end{split}
\end{equation}

We can thus state the following theorem,

\begin{theorem}
\label{thm:oc11}
An optimal control $(u^{*}_{1}, u^{*}_{2}) \in U$ for the system  \eqref{eq:1a1njj}-\eqref{eq:3a1jjn} that maximises the objective functional $J_{1}$ is characterised by \eqref{eq:occ}.
\end{theorem}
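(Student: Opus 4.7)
The plan is to apply Pontryagin's Maximum Principle (PMP) to the optimal control problem \eqref{eq:1a1njj}-\eqref{eq:3a1jjn}. Existence of an optimizer $(u_1^*, u_2^*) \in U$ has already been secured by Theorem \ref{thm:oc1}, so what remains is to extract necessary conditions and read off the characterization \eqref{eq:occ}. The Hamiltonian $H$ is given above, together with the adjoint system $\lambda_i'(t) = -\partial H/\partial x_i$ for $x_i \in \{C, M_A, M_J\}$, and the transversality conditions $\lambda_i(T) = 0$ follow from the free terminal state and the absence of a salvage term in $J_1$.

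Given an optimal control together with the corresponding state-adjoint trajectory, PMP requires that for almost every $t \in [0,T]$ the pair $(u_1^*(t), u_2^*(t))$ maximizes $H$ pointwise over the admissible box $[0,1] \times [0,1]$. The next step is to analyze the interior first-order conditions $\partial H / \partial u_i = 0$. Both $\partial H / \partial u_1$ and $\partial H / \partial u_2$ are rational functions in $(u_1, u_2)$ sharing the common positive denominator $(1 + h_1 u_1 M_A + h_2 u_2 M_J)^2$; after clearing this denominator the equations become polynomial in $u_1, u_2$, and solving them for an interior stationary point yields the explicit expressions $u_{1_1}$ and $u_{2_1}$ of \eqref{eq:occ2}-\eqref{eq:occ3}.

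To accommodate the box constraints I would argue pointwise. When the unconstrained critical value $u_{i_1}$ lies below $0$, the partial derivative $\partial H/\partial u_i$ is negative throughout $[0,1]$ and the constrained maximizer is $u_i^* = 0$; when $u_{i_1}$ exceeds $1$, the partial is positive throughout $[0,1]$ and $u_i^* = 1$; otherwise the unconstrained value is admissible and is selected. These three cases combine compactly as the projection $u_i^* = \min(1, \max(0, u_{i_1}))$, which is exactly \eqref{eq:occ}. Concavity of $H$ in $u_2$ (driven by the $-u_2^2/2$ penalty) guarantees the critical point in $u_2$ is the unique maximizer; for $u_1$, which enters the integrand linearly, the same conclusion follows because the $u_1$-derivative of the predation term is a M\"obius-type function of $u_1$ whose sign changes at most once on $[0,1]$.

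The main obstacle is the algebra of simultaneously solving $\partial H/\partial u_1 = 0$ and $\partial H / \partial u_2 = 0$ and tracking which monomials survive the elimination; this is what produces the lengthy rational expression $w_1/w_2$ in \eqref{eq:occ3}. I would rely on a computer algebra system to carry out the elimination cleanly and to verify that the denominator $w_2$ is nonzero along the optimal trajectory under the feasibility conditions already in force. Once the explicit interior critical points are in hand, the projection formula \eqref{eq:occ} is immediate and the characterization in Theorem \ref{thm:oc11} is complete.
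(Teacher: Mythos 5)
Your proposal follows essentially the same route as the paper: form the Hamiltonian, derive the adjoint system with zero terminal conditions, set $\partial H/\partial u_1=\partial H/\partial u_2=0$ to obtain the interior critical values $u_{1_1},u_{2_1}$, and handle the box constraints by the sign-of-derivative case analysis that collapses to the projection formula $u_i^*=\min(1,\max(0,u_{i_1}))$ in \eqref{eq:occ}. The additional observations you make (concavity in $u_2$ from the $-\tfrac{1}{2}u_2^2$ penalty, single sign change of $\partial H/\partial u_1$) are consistent with, and slightly more careful than, the paper's own presentation.
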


\subsection{Maximizing mussel density w.r.t. attacking rates}

To maximinize mussel density, the attack rate on the adult mussels should be minimized.
We choose the following objective function,
\begin{equation}
  J_{2}(u_{1},u_{2})= \int^{T}_{0} (M_{A}+M_{J} - \frac{1}{2}u_{1}^{2} ) dt,
\end{equation}

\begin{center}
$s.t. $ \eqref{eq:1a1njj}- \eqref{eq:3a1jjn} and $C(t_{0})=C_{0}, M_{A}(t_{0})=M_{A_{0}}, M_{J}(t_{0})=M_{J_{0}}$. 
\end{center}
and we search for the optimal controls in the set $U$. Where,
\begin{equation}
 U= \{ (u_{1},u_{2})|u_{i} \ \mbox{measurable}, \  0 \leq u_{1} \leq 1, 0 \leq u_{2} \leq 1, \  t \in [0,T], \ \forall T\}.
\end{equation}

We can state the following existence theorem,

\begin{theorem}
\label{thm:oc2}
Consider the optimal control problem \eqref{eq:1a1njj}-\eqref{eq:3a1jjn}. There exists $(u^{*}_{1}, u^{*}_{2}) \in U$ s.t.

\begin{equation}
  J_{2}(u^{*}_{1}, u^{*}_{2})= \underset{(u_{1},u_{2}) \in U}{\max} \int^{T}_{0} (M_{A}+M_{J} - \frac{1}{2}u_{1}^{2})  dt.
\end{equation}
\end{theorem}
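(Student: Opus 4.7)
The plan is to mirror the existence argument used for Theorem \ref{thm:oc1}, with the roles of the two controls interchanged. Three ingredients are needed: uniform boundedness of the admissible state trajectories, convexity and closedness of the admissible control set $U$, and concavity of the running cost in the controls.

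For the first ingredient, Theorem \ref{thm:cmek} provides uniform $L^{\infty}$ bounds on $(C, M_A, M_J)$ on the finite horizon $[0,T]$ for every admissible pair $(u_1, u_2) \in U$; combined with $0 \le u_i \le 1$, this confines the state-control trajectories to a fixed compact region on which the right-hand sides of \eqref{eq:1a1njj}--\eqref{eq:3a1jjn} are smooth and Lipschitz in the states. The control set $U$ is convex, closed, bounded, and weak-$*$ sequentially compact in $L^{\infty}([0,T])^2$, so the family of admissible trajectories together with the values of $J_2$ is relatively compact, exactly as used in Theorem \ref{thm:oc1}.

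For the second ingredient, I would verify concavity of the integrand. The running cost $M_A + M_J - \tfrac{1}{2} u_1^{2}$ is strictly concave in $u_1$ because of the quadratic penalty and affine (indeed constant) in $u_2$, hence jointly concave in $(u_1, u_2)$. This plays precisely the role that concavity in $u_2$ played for $J_1$; the only substantive change is which control is penalized. With these pieces in hand, the existence of a maximizing pair $(u_1^{*}, u_2^{*}) \in U$ follows by the same appeal to classical one predator-two prey optimal-control theory \cite{FR75} that closed the argument for $J_1$.

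I expect no new obstacle beyond what appeared in Theorem \ref{thm:oc1}. The only point that would ordinarily require attention is the nonlinear (rational) dependence of the state equations on the controls through the Holling-type denominator $1 + 2 u_1 M_A + h_2 u_2 M_J$; this is handled inside the Fleming--Rishel framework exactly as before, and replacing the $u_2$-penalty by a $u_1$-penalty affects neither the compactness of the admissible set nor the concavity structure needed to conclude. The proof therefore reduces to pointing out that every step of the argument for Theorem \ref{thm:oc1} transfers verbatim after swapping $u_1 \leftrightarrow u_2$ in the concavity verification.
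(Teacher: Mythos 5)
Your proposal matches the paper's own argument, which simply notes that the proof is the same as for Theorem \ref{thm:oc1}: uniform boundedness of the states via Theorem \ref{thm:cmek} plus boundedness of the controls gives compactness, concavity of the integrand in the penalized control gives the needed convexity structure, and existence follows from the Fleming--Rishel framework \cite{FR75}. Your version is more detailed (explicitly noting the swap of the penalty from $u_2$ to $u_1$ and the compactness of $U$), but it is essentially the same proof.
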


The proof is similar to theorem \ref{thm:oc1}.

We can next state 
\begin{theorem}
\label{thm:oc22}
An optimal control $(u^{*}_{1}, u^{*}_{2}) \in U$ for the system  \eqref{eq:1a1njj}-\eqref{eq:3a1jjn} that maximizes the objective function $J_{2}$ is characterised by 

\begin{equation}
\begin{split}
u_{1}^{*} & =min(1, max(0, u_{1_{2}})),\\
u_{2}^{*} & =min(1, max(0, u_{2_{2}})).
\end{split}
\end{equation}
\end{theorem}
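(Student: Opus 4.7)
The plan is to apply Pontryagin's Maximum Principle in complete analogy with the proof of Theorem \ref{thm:oc11}, modifying only those pieces that depend on the form of $J_2$. Existence of a maximizing pair $(u_1^*,u_2^*) \in U$ is already granted by Theorem \ref{thm:oc2}, so I only need to derive the characterization.

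First I form the Hamiltonian
\begin{equation*}
H = M_A + M_J - \tfrac{1}{2}u_1^2 + \lambda_1 C' + \lambda_2 M_A' + \lambda_3 M_J',
\end{equation*}
with $C',M_A',M_J'$ given by \eqref{eq:1a1njj}--\eqref{eq:3a1jjn}. The adjoint equations $\lambda_i' = -\partial H/\partial x_i$ have the same functional-response and quadratic-self-limitation contributions as in the previous subsection; only the running-cost terms change. Since $\partial(M_A+M_J)/\partial C = 0$, the equation for $\lambda_1'$ loses the trailing $-1$ present in Theorem \ref{thm:oc11}; since $\partial(M_A+M_J)/\partial M_A = 1$, a $-1$ now appears at the end of $\lambda_2'$; and $\lambda_3'$ retains its trailing $-1$. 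The transversality conditions $\lambda_1(T)=\lambda_2(T)=\lambda_3(T)=0$ carry over unchanged.

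Next I compute $\partial H/\partial u_1$ and $\partial H/\partial u_2$. The structural roles of the two controls are swapped relative to Theorem \ref{thm:oc11}: now $\partial H/\partial u_1$ picks up an explicit $-u_1$ from the quadratic cost, while $\partial H/\partial u_2$ is a purely rational function of $u_2$ arising from the Holling-type-II denominators. Solving each equation algebraically yields interior stationary values $u_{1_2}$ and $u_{2_2}$, obtained as rational functions of $(C,M_A,M_J,\lambda_1,\lambda_2,\lambda_3)$ and the parameters, in the same spirit as \eqref{eq:occ2}--\eqref{eq:occ3}. The standard three-case sign analysis of each $\partial H/\partial u_i$ then produces the projected formulas $u_i^* = \min(1,\max(0,u_{i_2}))$ asserted in the theorem.

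The main obstacle is algebraic rather than conceptual: clearing the shared denominator $D = 1 + h_1 u_1 M_A + h_2 u_2 M_J$ from $\partial H/\partial u_1 = \partial H/\partial u_2 = 0$ produces a coupled polynomial system whose admissible interior root must be singled out and shown to give a maximum. Maximality at this root for $u_1$ is immediate, since $\partial^2 H/\partial u_1^2$ picks up $-1$ from the quadratic penalty together with a non-positive contribution from the saturating kinetics. Maximality in $u_2$ is more delicate and is checked using the concavity of the functional response in $u_2$ together with the sign structure of $\lambda_2,\lambda_3$ forced by the adjoint dynamics (together with the $L^\infty$ bounds on the state variables from Theorem \ref{thm:cmek}, which also ensure the optimal control is well defined throughout $[0,T]$).
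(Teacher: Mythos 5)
Your proposal follows essentially the same route as the paper's proof in the appendix: form the Hamiltonian $H=M_A+M_J-\tfrac12 u_1^2+\lambda_1 C'+\lambda_2 M_A'+\lambda_3 M_J'$, write the adjoint system with the trailing $-1$ terms redistributed exactly as you describe (none in $\lambda_1'$, one each in $\lambda_2'$ and $\lambda_3'$), solve $\partial H/\partial u_1=\partial H/\partial u_2=0$ for the interior values $u_{1_2},u_{2_2}$ (with the explicit $-u_1$ now appearing in $\partial H/\partial u_1$ rather than $-u_2$ in $\partial H/\partial u_2$), and project onto $[0,1]$ via the three-case sign analysis. Your added remarks on second-order maximality go slightly beyond what the paper records, but the argument is the same; the only thing missing relative to the paper is the explicit closed-form expressions for $u_{1_2}$ and $u_{2_2}=w_1/w_2$, which the paper obtains by carrying out the algebra you defer.
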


For the details of the proof of the above necessary conditions and forms of $u_{1_{2}},u_{2_{2}}$ the reader is refered to the appendix section \ref{AP4}.

\subsection{Maximizing mussel density w.r.t. intraspecific competition rate}

In this approach we view the competition coefficient as a control. To reach certain optimal population densities, the mussels would maximise the densities of both adult and juvenile groups whilst minimising intraspecific competition. To this end our system reduces to

\begin{equation}
\label{eq:1a1njd}
C'=  -d_{1}C + e_{1}u_{1}\frac{ M_{A}}{1+2u_{1}(t) M_{A} + h_{2}u_{2}(t) M_{J}}C  + e_{2}u_{2}  \frac{ M_{J}}{1+2u_{1} M_{A} + h_{2}u_{2} M_{J}}C,
\end{equation}
\begin{equation}
\label{eq:2a1njd}
M_{A}'=  bM_{J} - \delta_{1}(t)M^{2}_{A} - u_{1} \frac{ M_{A}}{1+2u_{1} M_{A} + h_{2}u_{2} M_{J}}C,
\end{equation}

\begin{equation}
\label{eq:3a1jjd}
M_{J}'=  \frac{a}{2} M_{A} - bM_{J} - u_{2}  \frac{ M_{J}}{1+2u_{1} M_{A} + h_{2}u_{2} M_{J}}C.
\end{equation}

We choose the following objective function,
\begin{equation}
  J_{3}(\delta_1)= \int^{T}_{0} (M_{A}+M_{J} - \frac{1}{2}\delta_{1}^{2} ) dt,
\end{equation}

\begin{center}
$s.t. $ \eqref{eq:1a1njd}- \eqref{eq:3a1jjd} and $C(t_{0})=C_{0}, M_{A}(t_{0})=M_{A_{0}}, M_{J}(t_{0})=M_{J_{0}}$. 
\end{center}
and we search for the optimal controls in the set $U_1$. Where,
\begin{equation}
 U_1= \{ \delta_1 | \delta_1 \mbox{measurable}, \  0 \leq \delta_1 \leq \infty, \  t \in [0,T], \ \forall T\}.
\end{equation}

We can state the following existence theorem,

\begin{theorem}
\label{thm:oc2}
Consider the optimal control problem \eqref{eq:1a1njd}- \eqref{eq:3a1jjd}. There exist $(u^{*}_{1}, u^{*}_{2}) \in U$ s.t.

\begin{equation}
  J_{3}(\delta_1)= \underset{(u_{1},u_{2}) \in U}{\max} \int^{T}_{0} (M_{A}+M_{J} - \frac{1}{2}\delta_{1}^{2})  dt.
\end{equation}
\end{theorem}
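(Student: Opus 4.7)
The plan is to mirror the structure of Theorem \ref{thm:oc1}, but there is one crucial wrinkle: the admissible set $U_1$ allows $\delta_1$ to take arbitrarily large values, so the Fleming--Rishel/Filippov--Cesari hypotheses cited in Theorem \ref{thm:oc1} (which require a compact control set) do not apply directly. My strategy is to use the coercive penalty $-\tfrac{1}{2}\delta_1^2$ in $J_3$ to confine a maximizing sequence to a weakly compact subset of $L^2(0,T)$, and then pass to the limit.

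First I would establish uniform state bounds independent of the control. Observe that the term $-\delta_1 M_A^2$ in \eqref{eq:2a1njd} is dissipative (nonpositive for $\delta_1\ge 0$), so the comparison argument underlying Theorem \ref{thm:cmek} still yields $\|C\|_\infty, \|M_A\|_\infty, \|M_J\|_\infty \leq K_1$ for every $\delta_1 \in U_1$. Hence $\int_0^T (M_A+M_J)\,dt \leq 2 K_1 T$ uniformly. Comparing with the admissible control $\delta_1 \equiv 1$ shows that the supremum $S := \sup_{\delta_1 \in U_1} J_3(\delta_1)$ is finite, and along any maximizing sequence $\{\delta_1^n\}$ the penalty $\int_0^T \tfrac{1}{2}|\delta_1^n|^2\,dt$ is bounded by $2K_1 T - J_3(\delta_1 \equiv 1) + 1$, so $\{\delta_1^n\}$ is bounded in $L^2(0,T)$.

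Second, I would extract a weakly convergent subsequence $\delta_1^{n_k} \rightharpoonup \delta_1^\ast$ in $L^2$. The corresponding state trajectories $(C^{n_k}, M_A^{n_k}, M_J^{n_k})$ are uniformly bounded by $K_1$ and, since the right-hand sides of \eqref{eq:1a1njd}--\eqref{eq:3a1jjd} are bounded on the invariant box, the trajectories are equi-Lipschitz. Arzel\`a--Ascoli delivers a further subsequence converging uniformly on $[0,T]$ to some $(C^\ast, M_A^\ast, M_J^\ast)$. The only nonlinear coupling between state and control in the ODEs is the term $\delta_1(t)\,[M_A(t)]^2$; since $M_A^{n_k} \to M_A^\ast$ uniformly, the product $[M_A^{n_k}]^2 \to [M_A^\ast]^2$ strongly in $L^2$, so the pairing $\int \delta_1^{n_k} [M_A^{n_k}]^2 \varphi \to \int \delta_1^\ast [M_A^\ast]^2 \varphi$ for every test function $\varphi$, and in the limit $(C^\ast, M_A^\ast, M_J^\ast)$ solves \eqref{eq:1a1njd}--\eqref{eq:3a1jjd} with control $\delta_1^\ast$.

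Finally, I would show upper semicontinuity of $J_3$ along the subsequence. The linear functional $\int_0^T (M_A^{n_k}+M_J^{n_k})\,dt$ converges to $\int_0^T (M_A^\ast + M_J^\ast)\,dt$ by uniform state convergence, while the convex functional $\delta_1 \mapsto \int_0^T \tfrac{1}{2}|\delta_1|^2\,dt$ is weakly lower semicontinuous. Combining these gives $J_3(\delta_1^\ast)\ge \limsup_k J_3(\delta_1^{n_k}) = S$, establishing existence of the optimizer. The main obstacle is precisely the first step, namely quantifying how the coercive penalty replaces the usual compactness of the control set; once that a priori $L^2$ bound is in hand, the weak compactness / semicontinuity machinery proceeds routinely.
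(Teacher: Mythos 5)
Your route is genuinely different from the paper's, and in fact more careful. The paper disposes of this theorem with ``the proof is similar to theorem \ref{thm:oc1}'', i.e.\ it invokes the Fleming--Rishel existence theorem on the strength of bounded states, a \emph{bounded} control set, and concavity of the integrand in the control. You correctly observe that for $J_3$ the admissible set $U_1$ allows $\delta_1$ up to $\infty$, so the compactness hypothesis behind that citation fails, and you replace it with the direct method: the coercive penalty $-\tfrac{1}{2}\delta_1^2$ confines any maximizing sequence to a bounded set of $L^2(0,T)$, weak compactness plus Arzel\`a--Ascoli produce a limit control--state pair, the bilinear coupling $\delta_1 M_A^2$ passes to the limit by weak--strong convergence, and weak lower semicontinuity of the quadratic penalty gives upper semicontinuity of $J_3$. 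This buys an actually complete existence proof for the unbounded control set, at the cost of more machinery; the paper's one-line argument would only be literally valid if $U_1$ were replaced by a compact interval $[0,\delta_{\max}]$.

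Two steps in your writeup need repair, though neither threatens the architecture. First, the uniform state bound does not follow from the comparison argument of the global boundedness theorem: that argument yields $C+M_A+M_J\le a/\delta_1$, which degenerates as $\delta_1\to 0$, and for $\delta_1\equiv 0$ there is no logistic cap at all. What you actually have, after discarding the nonpositive terms, is $(C+M_A+M_J)'\le a M_A$, and Gr\"onwall then gives a bound of the form $e^{aT}(C_0+M_{A0}+M_{J0})$ on $[0,T]$, uniform over $U_1$ but $T$-dependent; that is all you need. Second, the trajectories are \emph{not} equi-Lipschitz: the right-hand side of \eqref{eq:2a1njd} contains $\delta_1(t)M_A^2$, and $\delta_1^n$ is only bounded in $L^2$, not in $L^\infty$. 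The Cauchy--Schwarz estimate $\int_s^t\delta_1^n\,d\tau\le\|\delta_1^n\|_{L^2}\,|t-s|^{1/2}$ gives equi-H\"older-$\tfrac{1}{2}$ continuity instead, which still suffices for Arzel\`a--Ascoli. You should also record that $\delta_1^*\ge 0$ a.e., since the nonnegative cone is convex and strongly closed in $L^2$, hence weakly closed. With these corrections your argument is complete.
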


The proof is similar to theorem \ref{thm:oc1}.

We can next state 
\begin{theorem}
\label{thm:oc33}
An optimal control $(u^{*}_{1}, u^{*}_{2}) \in U$ for the system  \eqref{eq:1a1njd}- \eqref{eq:3a1jjd} that maximises the objective function $J_{3}$ is characterised by 

\begin{equation}
\delta_{1}^{*}=max(0,-M_{A}^{2} \lambda_{2}).
\end{equation}
\end{theorem}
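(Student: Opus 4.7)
The plan is to apply Pontryagin's Maximum Principle in exactly the same spirit as in the proof of Theorem \ref{thm:oc11}, but now with the single scalar control $\delta_1(t)$ and with the constraint set $U_1 = \{\delta_1 : \delta_1 \text{ measurable},\ 0 \leq \delta_1 < \infty\}$. First I would write down the Hamiltonian
\begin{equation*}
H = M_A + M_J - \tfrac{1}{2}\delta_1^2 + \lambda_1 C' + \lambda_2 M_A' + \lambda_3 M_J',
\end{equation*}
where $C', M_A', M_J'$ are the right-hand sides of \eqref{eq:1a1njd}--\eqref{eq:3a1jjd}. The adjoint system is then obtained by $\lambda_i'(t) = -\partial H/\partial x_i$ for $x_i \in \{C, M_A, M_J\}$, together with the transversality conditions $\lambda_1(T) = \lambda_2(T) = \lambda_3(T) = 0$. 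Existence of these adjoint functions follows from the standard theory once the existence of an optimal control is in hand (previous theorem).

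The key observation is that $\delta_1$ appears only in the $M_A$ equation, and only in the term $-\delta_1 M_A^2$. Hence
\begin{equation*}
\frac{\partial H}{\partial \delta_1} = -\delta_1 - \lambda_2 M_A^2,
\end{equation*}
which is linear in $\delta_1$ with negative leading coefficient, so $H$ is strictly concave in $\delta_1$ and an interior critical point $\delta_1 = -\lambda_2 M_A^2$ is the unique unconstrained maximizer. Projecting this onto the admissible set $[0,\infty)$ then yields
\begin{equation*}
\delta_1^{*} = \max\bigl(0,\, -M_A^2 \lambda_2\bigr),
\end{equation*}
which is exactly the characterization claimed. Note the upper bound $+\infty$ in $U_1$ is never active because the quadratic penalty $-\tfrac{1}{2}\delta_1^2$ drives $H$ to $-\infty$ as $\delta_1 \to \infty$, so only the lower bound needs to be enforced via the $\max$.

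The only mildly delicate point, and what I would flag as the main obstacle, is justifying that one does not need to also enforce a strict upper bound on $\delta_1$: one needs the state variables to remain bounded on $[0,T]$ under arbitrarily large $\delta_1(t)$ so that the Hamiltonian characterization above is valid. This follows from Theorem \ref{thm:cmek} together with the fact that larger $\delta_1$ only further damps $M_A$, so the uniform $L^\infty$ bounds of the state variables persist. After this observation, the derivation of the adjoint equations is a routine (if lengthy) differentiation that I would omit in favor of the stated formula, and the theorem follows directly from PMP.
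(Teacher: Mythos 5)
Your proposal matches the paper's own proof in appendix section \ref{AP5}: the same Hamiltonian, the same adjoint system and transversality conditions, and the same computation $\partial H/\partial \delta_1 = -M_A^2\lambda_2 - \delta_1$ followed by projection onto $[0,\infty)$ to obtain $\delta_1^* = \max(0, -M_A^2\lambda_2)$. The additional remarks you make about concavity in $\delta_1$ and the quadratic penalty rendering the upper bound inactive are correct and only elaborate on what the paper leaves implicit.
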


For the details of the proof of the above necessary conditions and forms of $u_{1_{2}},u_{2_{2}}$ the reader is refered to the appendix section \ref{AP5}. 
\subsection{Numerical Simulations}
In this subsection, we investigate via numerical simulation and compare the species' population of the control system \eqref{eq:1a1njjo}- \eqref{eq:4a1njjo} and the classical system \eqref{eq:1a1nj}-\eqref{eq:4a1nj} under the epibiont achieving the carrying capacity ($e=K$). Since the solutions of the states and adjoint equations are a prior bounded and concavity in the controls holds, the optimal controls exist by using a result from Fleming and Rishel[Chap III, Theorem 2.1, pp 63]\cite{FR75}. Forward-Backward Sweep iteration algorithms are used for numerical simulations. The following parameter set is chosen 

\begin{equation}
\label{parasets_2}
 d_{1}=0.1,e_{1}=0.9,e_{2}=0.5, h_{1}=2.0, h_{2}=1.0, b=0.2,  \delta_{1}=0.1, a=0.3
\end{equation}

 \begin{figure}[H]
        \begin{minipage}[b]{0.480\linewidth}
            \centering
            \includegraphics[width=\textwidth]{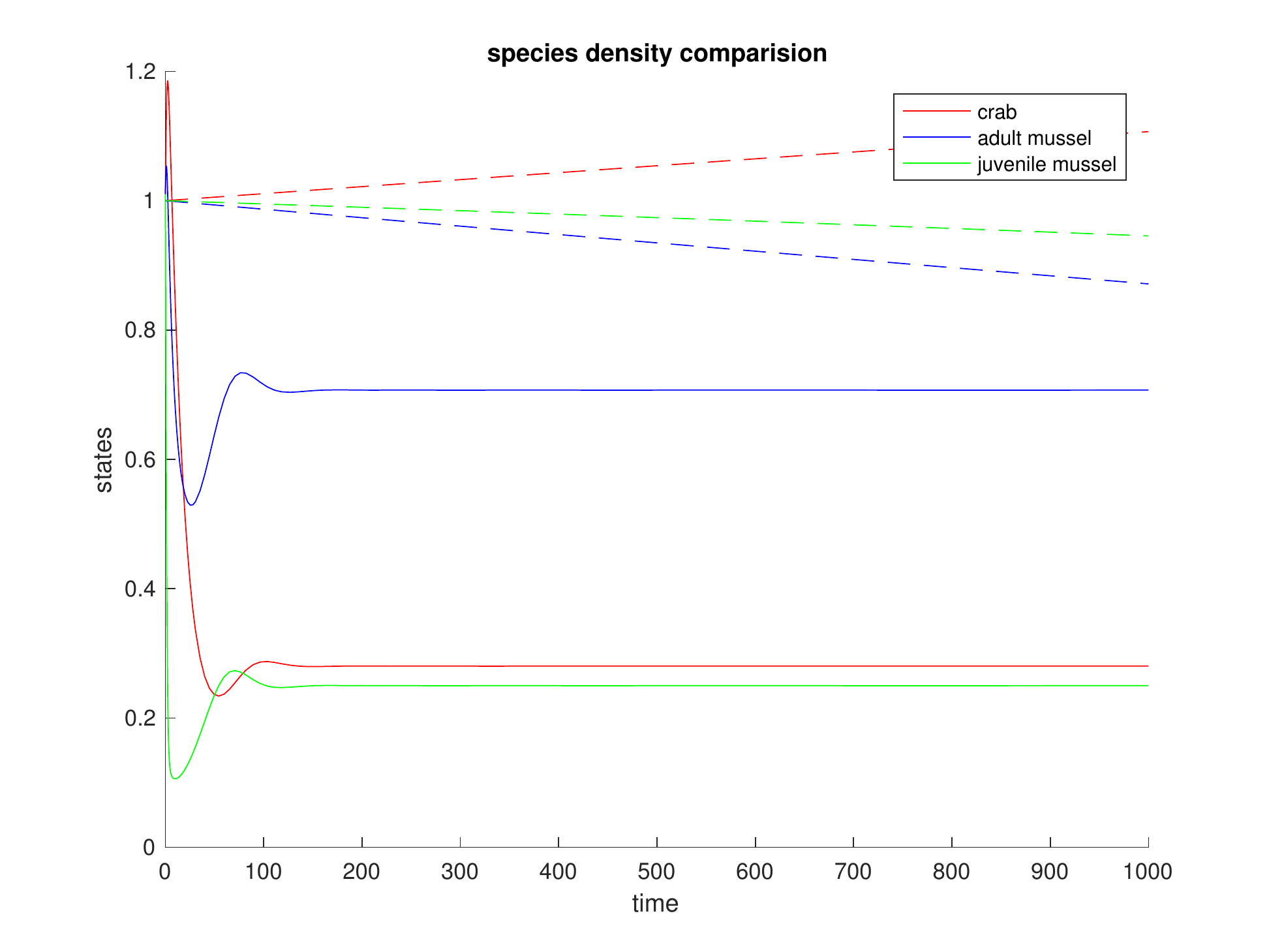}
            \subcaption{ }
             \label{fig:fig5_1}
        \end{minipage}
        \hspace{0.10cm}
        \begin{minipage}[b]{0.480\linewidth}
            \centering
            \includegraphics[width=\textwidth]{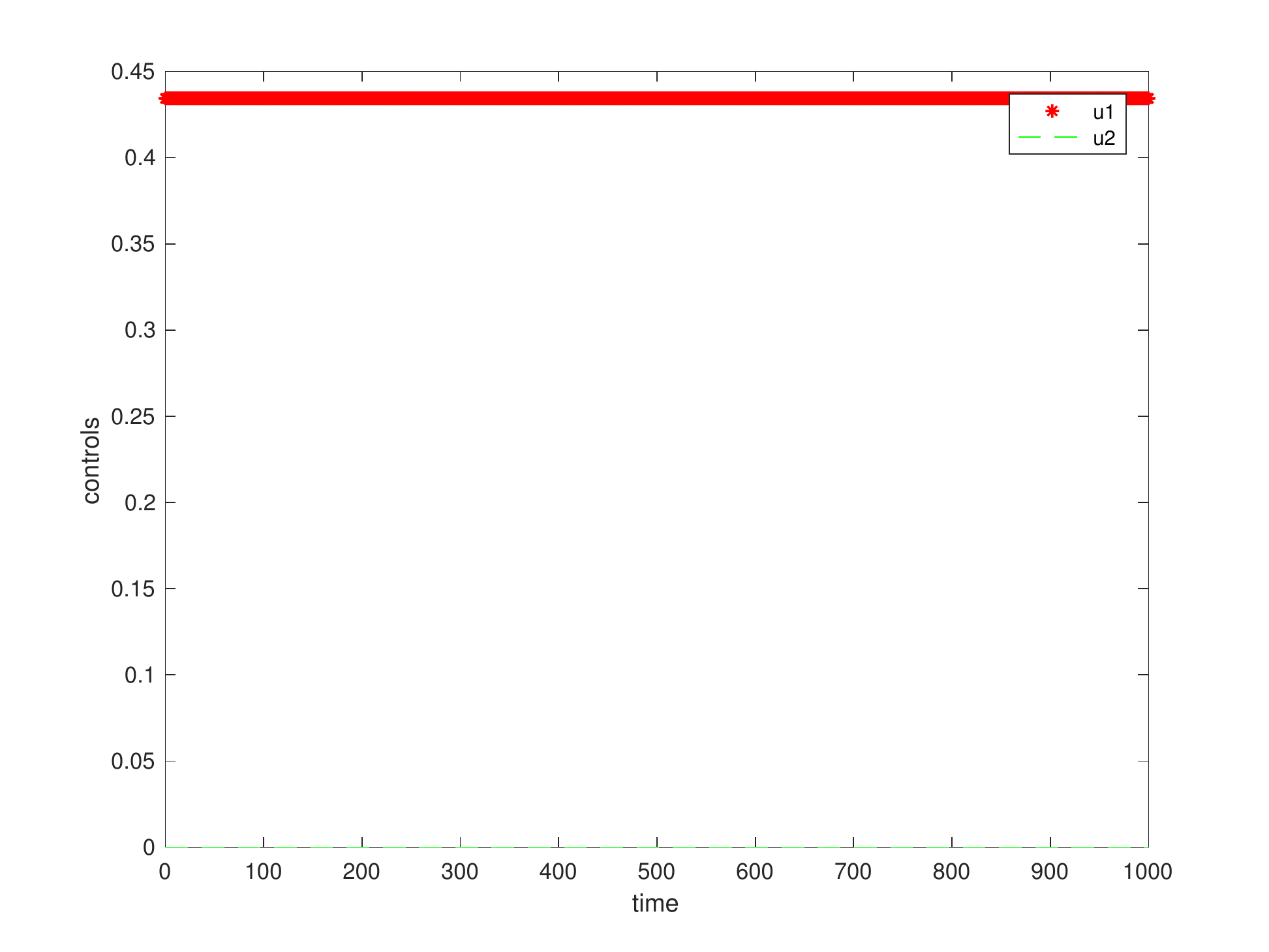}
            \subcaption{ }
            \label{fig:fig5_2}
        \end{minipage}
        \label{fig5}
       
\caption{(A) Solid curves are the density change for each species of the system \eqref{eq:1a1nj}-\eqref{eq:4a1nj} under $e=K$ and the dashed line are the optimal state of the control system \eqref{eq:1a1njjo}- \eqref{eq:4a1njjo} for the objective function $J_{1}(u_1,u_2)$ (B) Optimal controls of  $J_{1}(u_1,u_2)$ with the above parameter set \eqref{parasets_2}. }
\end{figure} 

We set $h_1=2$ since $h_{1}=1+\frac{e}{K}$, however, if we just assume $h_{1}$ as a constant and increase $h_{1}$ and keep other parameters the same, we found the optimal control $u_{2}$ always to be 0, and $u_{1}$ decreases and gradually become stable. In fact, when $h_{1}$ achieves to some critical value, $u_{1}$ begins to increase slightly. \\ 

\begin{figure}[H]
\label{fig:fig6}
        \begin{minipage}[b]{0.480\linewidth}
            \centering
            \includegraphics[width=\textwidth]{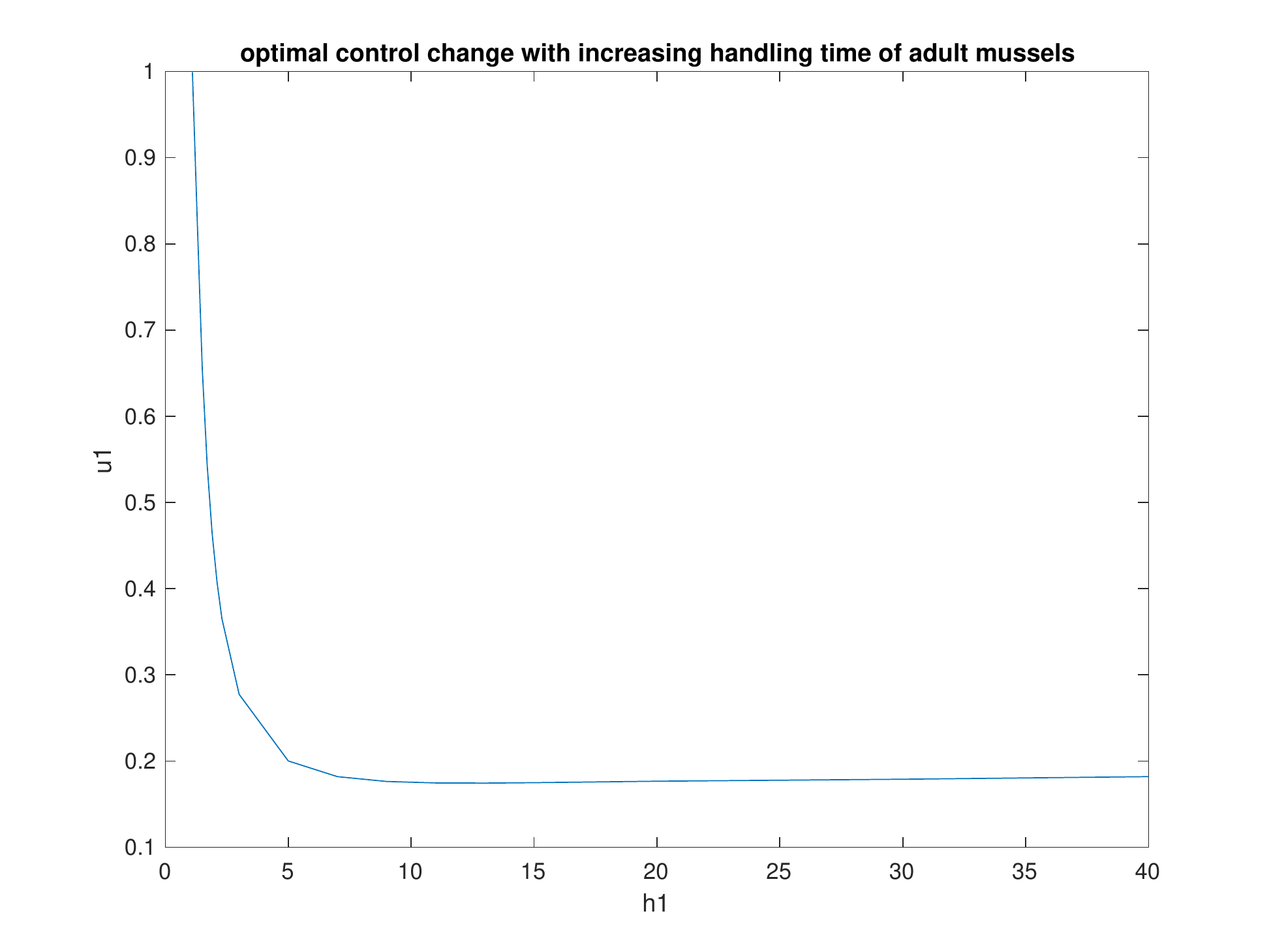}
            \subcaption{ }
             \label{fig:fig6_1}
        \end{minipage}
        \hspace{0.10cm}
        \begin{minipage}[b]{0.480\linewidth}
            \centering
            \includegraphics[width=\textwidth]{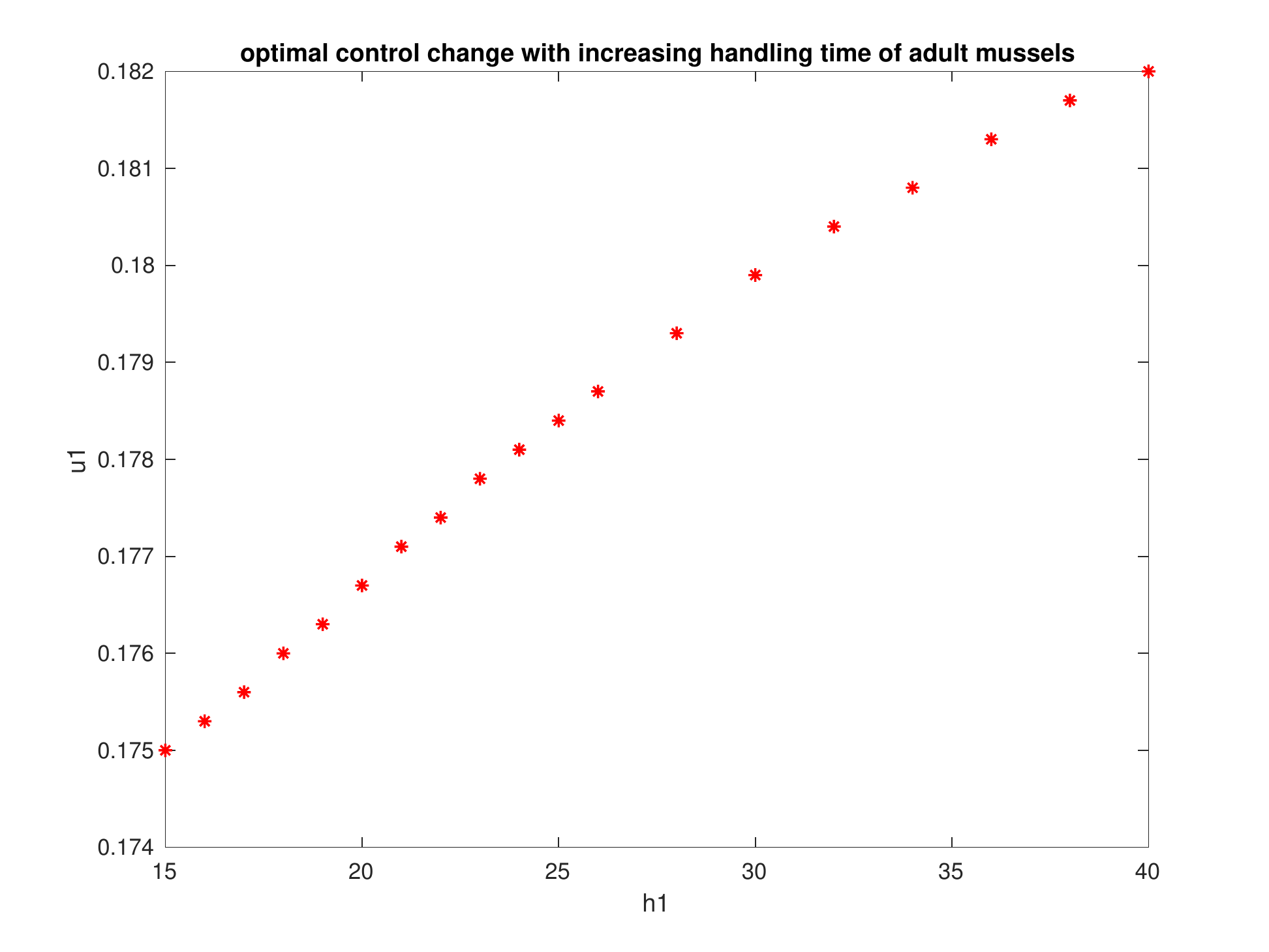}
            \subcaption{ }
             \label{fig:fig6_2}
        \end{minipage}
        \label{fig6}
\caption{(A) Optimal control $u_{1}$ changes with increasing $h_1$ (B) $u_{1}$ increases slightly with large $h_1$ }
\end{figure} 

\begin{figure}[H]
        \begin{minipage}[b]{0.480\linewidth}
            \centering
            \includegraphics[width=\textwidth]{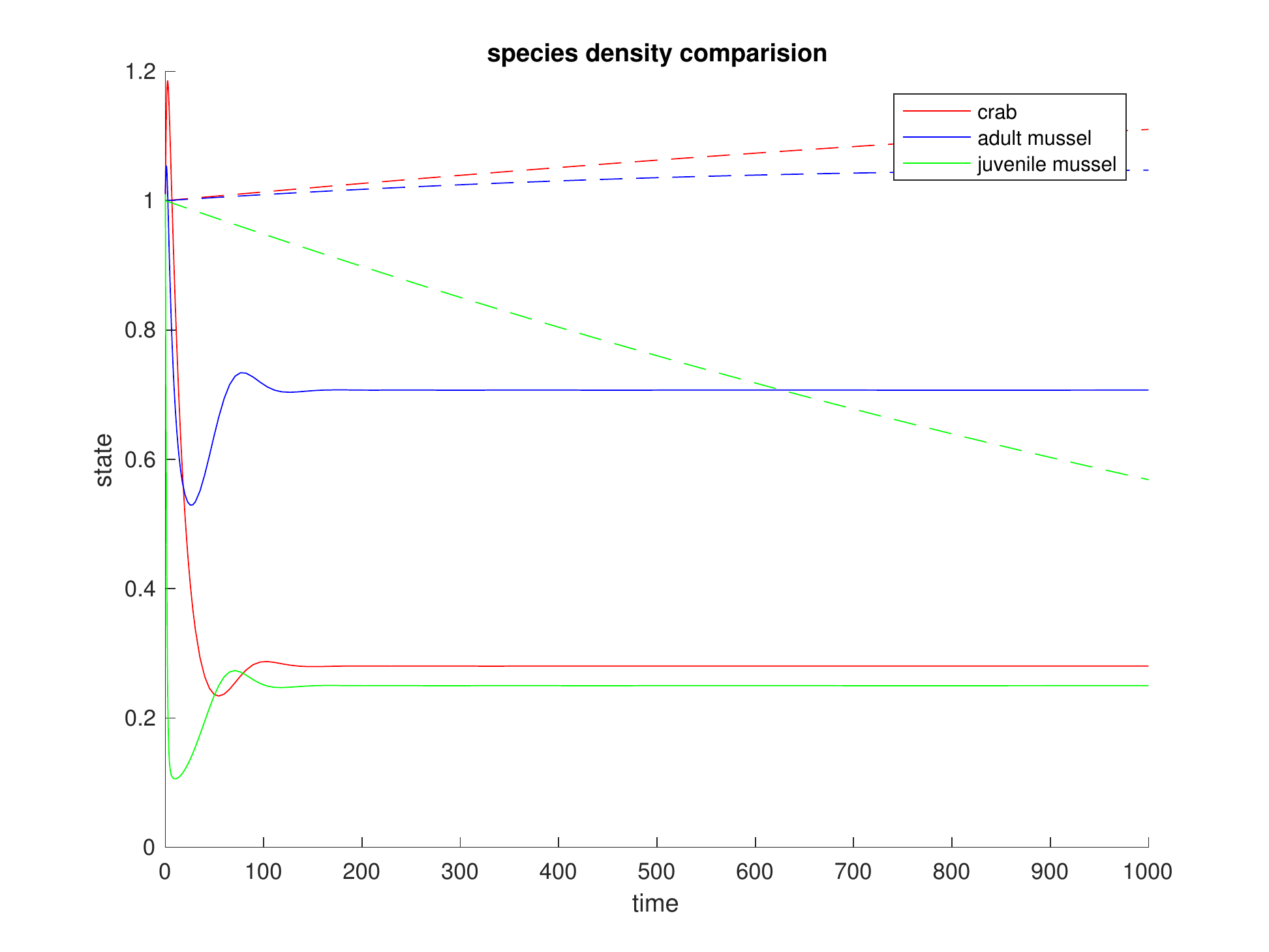}
            \subcaption{ }
             \label{fig:fig7_1}
        \end{minipage}
        \hspace{0.10cm}
        \begin{minipage}[b]{0.480\linewidth}
            \centering
            \includegraphics[width=\textwidth]{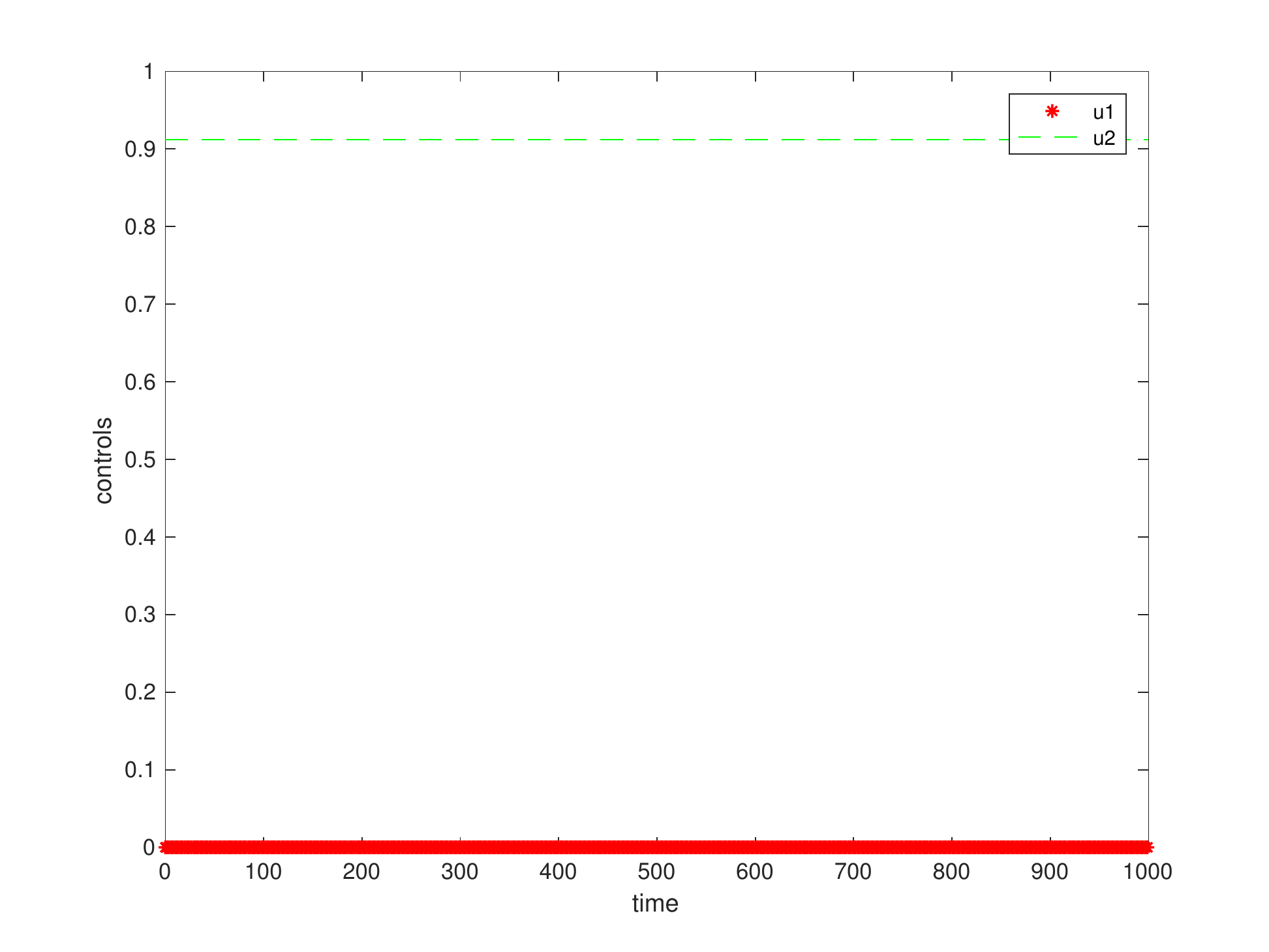}
            \subcaption{ }
            \label{fig:fig7_2}
        \end{minipage}
            \label{fig7}
   \caption{(A) Solid curves are the density change for each species of the system \eqref{eq:1a1nj}-\eqref{eq:4a1nj} under $e=K$ and the dashed line are the optimal state of the control system \eqref{eq:1a1njjo}- \eqref{eq:4a1njjo} for the objective function $J_{2}(u_1,u_2)$ (B) Optimal controls for$J_{2}(u_1,u_2)$ with the ablove parameter set shown in \eqref{parasets_2}. \\}
\end{figure} 


\begin{figure}[H]
        \begin{minipage}[b]{0.480\linewidth}
            \centering
            \includegraphics[width=\textwidth]{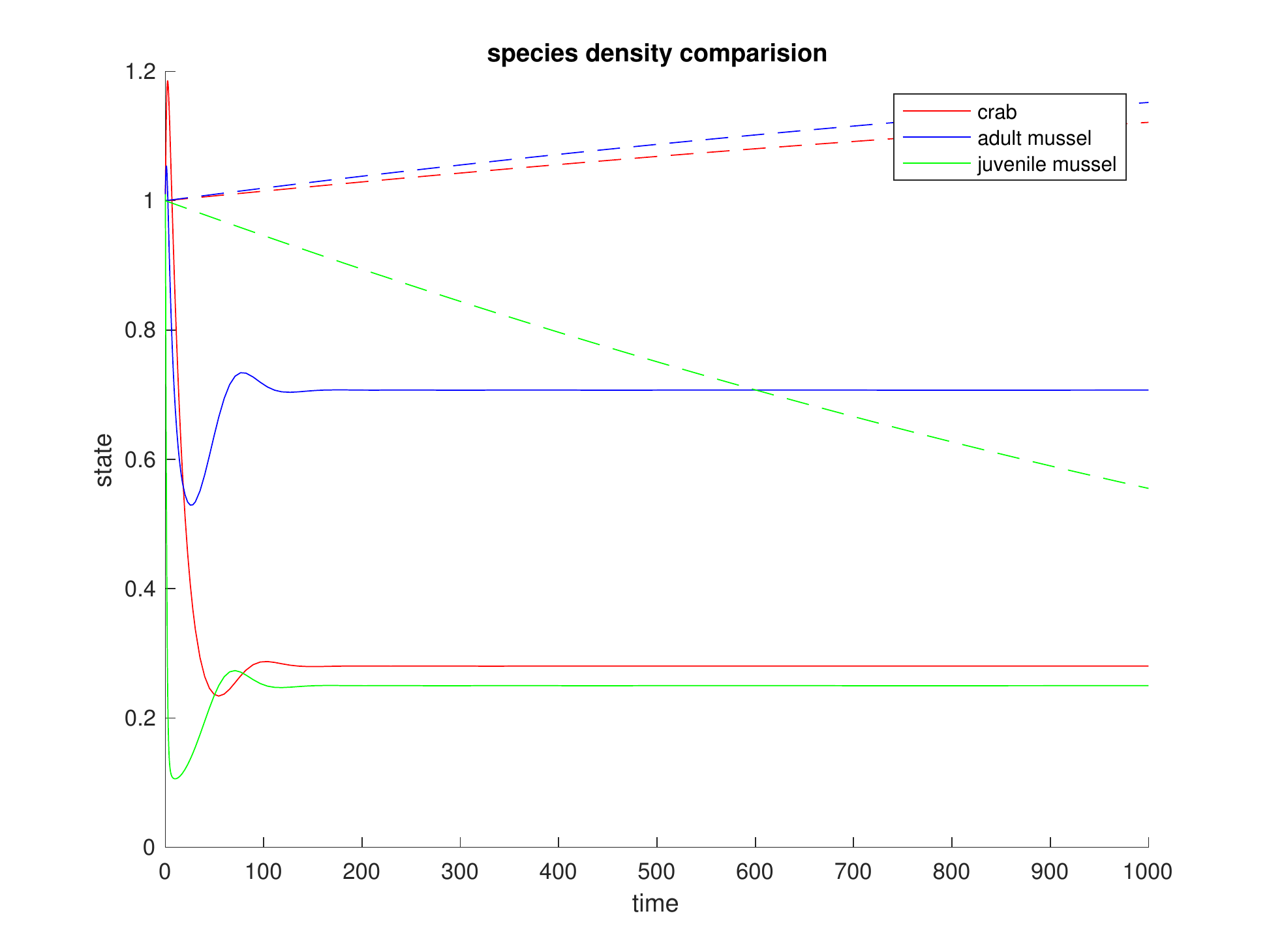}
            \subcaption{ }
             \label{fig:fig9_1}
        \end{minipage}
        \hspace{0.10cm}
        \begin{minipage}[b]{0.480\linewidth}
            \centering
            \includegraphics[width=\textwidth]{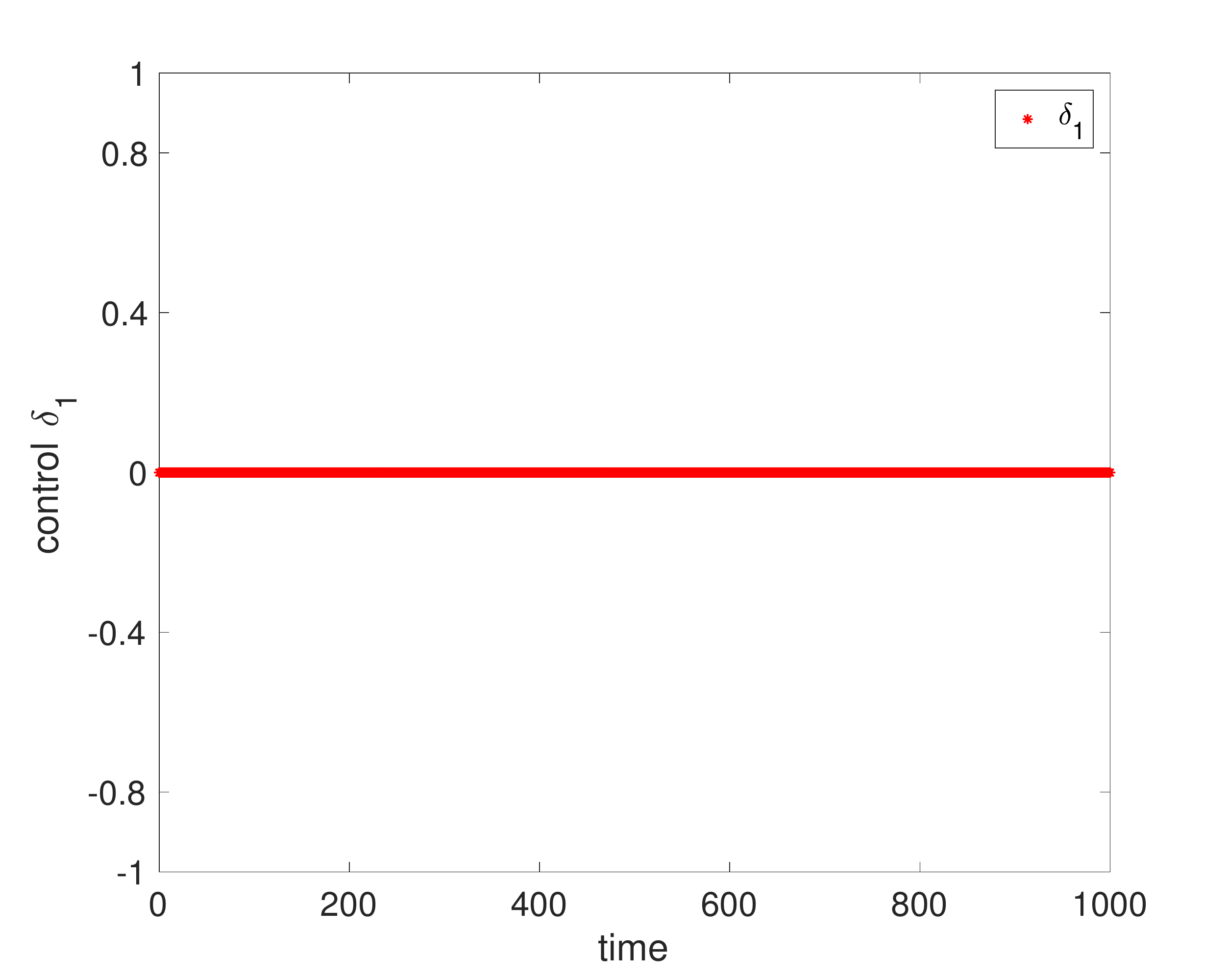}
            \subcaption{ }
            \label{fig:fig9_2}
        \end{minipage}
            \label{fig9}
\caption{(A) Solid curves are the density change for each species of the system \eqref{eq:1a1nj}-\eqref{eq:4a1nj} under $e=K$ and the dashed line are the optimal state of the control system \eqref{eq:1a1njjo}- \eqref{eq:4a1njjo} for the objective function $J_{3}(\delta_{1})$  (B) The optimal control for the objective function $J_3$ is awalys to be $\delta_1=0$. }
\end{figure}

Based on $e=K$, as for system \eqref{eq:1a1nj}-\eqref{eq:4a1nj} , the optimal foraging strategies are $u_1=0,u_2=1$. However, for the control system \eqref{eq:1a1njjo}- \eqref{eq:4a1njjo} , $J_1(u_1,u_2)$ will be maximized when $u_1=0.4343, u_2=0$; $J_{2}(u_1,u_2)$ will be maximized when $ u_1=0,u_2=0.9121$ and $J_3(\delta_1)$ will be maximized when $\delta_1=0$ with data set we mentioned. \\

\section{Discussion and Conclusion}

Epibiotic invasive species often have anti-predator defenses that are behavioral, chemical, or mechanical  \cite{IR06,WL02}, giving them a survival advantage in a novel habitat because potential predators avoid using them as a food source \cite{CR04,PP02}.  While this provides a benefit to the basibiont, it impacts other members of the community, including predators of the basibiont as they may show lower preference for basibionts that are overgrown by invasive epibionts \cite{A14,WH95}. However, the effects of epibionts on basibionts are not always positive. Many times the epibiont may attract predators resulting in consumption of the epibiont, which automatically leads to consumption of the basibiont. This is refered to in the literature as ``shared doom" \cite{LW99}. Epibionts can also negatively affect basibiont fecundity and fitness, resulting in fewer offspring \cite{A10}. In essence, invasion of predator-prey communities by epibionts is complex, and warrants a thorough mathematical investigation of their impact on predator-prey interactions and populations.

Population cycles are common in predator-prey communities, and although these are possible in our model without epibionts, extensive numerical simulations indicate that at carrying capacity $e=K$, a Hopf bifurcation is \emph{not} possible. This points to the epibiont having a stabilizing influence in that it can eliminate population oscillations. A rigorous proof of this is an interesting future direction.
Within our study, theorem \ref{thm:global} tells us that if the energy gain from the adult mussel is in a certain critical region $0 < e_{1} <1$, then one has global stability; even very large perturbations would still allow the system to return to its base state. 

Our central question focuses on the effect of the introduced epibiont on the population densities of the local crab-mussel communities. Could high epibiont density lead to lower mussel populations (and so subsequently lower crab populations)? To answer this we compare the equilibrium levels of the juvenile mussel population, ``no epibiont" case versus ``epibiont reaches carrying capacity" case. If the epibionts do have an adverse effect then we would have

$M^{*}_{J}|_{(e=K)} < M^{*}_{J}|_{(e=0)}$. Comparing these yields,

\begin{equation}
\label{eq:e11}
e_{1} + h_{2} a < e_{2} + h_{1} b.
\end{equation}

Although we know $e_{1} > e_{2}$, under high epibiont density $(e=K)$, we have $h_{1} >> h_{2}$, thus even if $a < b$, \eqref{eq:e11} could easily hold meaning that there is an adverse effect on the juvenile mussel density via epibiont presence, leading to fewer adults subsequently, and so epibionts could clearly be a factor in mussel population declines as seen via data from the Gulf of Maine \cite{SD17}. Such decline could eventually lead to crab population decline as well, if the crab species is a specialist on mussels. However, the effects of epibionts on mussel fecundity could also be a cause of predator decline. In order to understand the effects of epibiosis, we investigate the equilibrium density of the crab populations for the ``no epibiont" case versus the ``epibiont reaches carrying capacity" case. What we note is 

\begin{equation}
\label{eq:c11}
C^{*}_{J}|_{(e=0)} = \frac{e_{1}(a(e_{1}-d_{1}h_{1})-d_{1}\delta_{1})}{(e_{1}-d_{1}h_{1})^{2}}, \ C^{*}_{J}|_{(e=K)} = \frac{1}{2} \frac{a e_{2} \sqrt{\frac{b d_{1}}{(e_{2} -d_{1} h_{2}) \delta_{1}}}}{d_{1}} - \frac{e_{2} b}{e_{2}-d_{1} h_{2}}.
\end{equation}

Clearly, as epibiont cover reduces mussel fecundity from $a$, to $a/2$, this directly affects the crab population. In the $(e=K)$ case there is an increase by a factor of only $\frac{a}{2}$, as opposed to a factor of $a$ in the $(e=0)$ case. Thus reduced fecundity in mussels due to epibiont cover, can also reduce crab populations as well.

We assume logistic growth in the epibiont density. Although in the Gulf of Maine epibiont density fluctuates seasonally, our model could be a useful predictive tool in periods where logistic growth is seen. In locations such as Japan and New Zealand, the epibiont \emph{D. vexillum} grows logistically \cite{A10} due to water temperatures staying above the threshold for \emph{D. vexillum} viability. 

We also use optimal control theory to visualise various optimal scenarios to maximize each crab and mussel densities. Herein, we change the problem slightly, and assume the attack rates $u_{1}, u_{2}$ are not known \emph{a priori}, but are time-dependent. Our objective is to explore various scenarios that a species of crab or mussel might attempt to optimise, by manipulating the attack rates. Epibionts are assumed to be present, and their effect is modeled via increasing the handling time $h_{1}$ of adult mussels, as epibiont density increase. 

Simulations suggest (Fig. \ref{fig:fig5_1}) that even under high epibiont density (which in this scenario amounts to doubled $h_{1}$), the crab should \emph{not} attack juvenile mussels, but attempt to attack adults. Fig. \ref{fig:fig5_2} demonstrates, that what is optimal for the mussel is if $u_{1}=0$, and so the adult mussel must induce defenses to reduce $u_{1}$, even if it realistically cannot drive that rate to zero. This confirms experimental results of rapid shell thickening by mussels, seen via \cite{FB06}. 
Fig. \ref{fig:fig6_1} looks at the attack rate on adut mussels $u_{1}$, as $h_{1}$ changes. Here, we are trying to maximize mussel populations, and $u_{1}$ decreases as $h_{1}$ increases, as expected. However, $u_{1}$ is approximately $0.17$; that is, it does not change significantly if handling times become very large. Curiously, it goes up ever so slightly as shown in Fig. \ref{fig:fig6_2}. This likely corresponds to the adult mussel thickening its shell just enough to increase handling time by the crab.

Fig. \ref{fig:fig8_1} looks at the attack rate on juvenile mussels $u_{2}$, as $h_{1}$ changes. Here again we are trying to maximize mussel populations. When handling time on adults is low, juvenile mussels are protected from crab predation due to the predators preference for larger mussels. However, when handling time on adults is high (greater than 2.1 in this simulation), it is likely that crabs would switch to the juvenile; therefore, juveniles must be able to disperse or seek refuge in order to bring attack rates on the juvenile mussels to zero (in turn maximizing their population size). Young mussels drift in the water column until they reach a size of approximately 2.5mm, then they settle on a filamentous algal substrate \cite{SS92}. Some mussel species settle on algal substrate until they are 30 mm in length \cite{M95}. This substrate acts as  refuge and must be available for juveniles in order to maximize mussel populations. However, with degradation of suitable habitat, the opportunities for escape from crab populations becomes diminished. Major disturbance events, either natural or anthropogenic, in conjunction with invasion by substrate-smothering colonial species and voracious predators, are likely to decrease opportunities for escape. Endeavors to model predator-prey systems incorporating prey refuge may yield surprising results on stability \cite{KP15}, \cite{PQB16}. Thus it would be very interesting to model refuge effects for the juvenile mussels herein. 

As a future direction in modelling the crab-mussel-epibiont interaction, we would also like to examine interference effects \cite{G07, GG16, PBU16}. This effect has often seen to be stabilising \cite{LB15}, and thus modeling interference among the crab population, at high epibiont density is also realistic. Note, Theorem \ref{thm:oc33} suggests that eliminating intraspecific competition among mussels is optimal from their point of view, and yields a maximum density, if there was \emph{no} competition present. Future modelling endeavors may also investigate if high epibiont cover promotes cannibalism in crabs. That is, under high epibiont cover of adult mussels, would a crab prefer to  cannibalise its own conspecifics \cite{B16,BL16}, rather than switching to juvenile mussels? Another interesting future direction is to look at the foraging of crabs as they move in and out of patches containing mussels, some of which might be protected by mussel farmers, akin to marine protected areas \cite{PI16, PID16}. A spatially explicit approach to this end, modeling a changing habitat based on mussel density, would also be interesting \cite{BF14}.

The empirical literature shows that while epibionts alter the prey choice of predators, including crabs and sea stars \cite{E03, T07, V06, B07}, there are no prey switching experiments using \emph{D. vexillum}. Our goal is to provide firm modeling grounds for the scope of such experiments in the future. Thus a logical next step for empirical studies is to conduct experiments with \emph{D. vexillum} to confirm our switching hypothesis, as well as look at switching scenarios under varying levels of overgrowth (with both living and artificial epibionts such as in \cite{E03}). An interesting research question therein would be to ask if one sees the inverted parabola shaped curve, typical of OFT scenarios when measuring crab size versus mussel preference. If our switching hypothesis is confirmed, this should not be the case, as smaller size juvenile mussels should be preferred to adults under heavy epibiont cover. All in all we hope our results will help devise suitable strategies and measures that will enable a boost in dwindling mussel populations, particularly as new complexities arise in ecosystems, driven by rapid increase in invasions.

\section{Acknowledgements}
JL and RP would like to acknowledge valuable support from the NSF via DMS-1715377 and DMS-1839993.

\section{Appendix}
\subsection{Optimal Strategy in our setting}
\label{AP6}
Here, we give a rigorous reasoning for our switching hypothesis. 
If we following standard OFT, we can consider a fitness function 
\begin{equation}
R(u_{1},u_{2})=\frac{e_{1}u_{1}M_{A}}{1+h_{1}u_{1}M_{A}+h_{2}u_{2}M_{J}} +\frac{e_{1}u_{1}M_{J}}{1+h_{1}u_{1}M_{A}+h_{2}u_{2}M_{J}}.
\end{equation}
We endeavor to maximize $R(u_{1},u_{2})$, the net rate of energy intake during foraging. The optimal strategy for a crab (according to classical OFT) relies on the density of mussels. That is for each $(M_{A},M_{J})$ , we get a set of optimal controls $S(M_{A},M_{J})$ known as the strategy map.

\begin{equation}
S(M_{A},M_{J})=\{(u_{1},u_{2}) | R(u_{1},u_{2})=\max \limits_{0 \leq p_{1},p_{2} \leq 1}  R(p_{1},p_{2})\}.
\end{equation}

This is \eqref{eq:1a1nj}-\eqref{eq:4a1nj}, which is actually a control system with controls $(u_{1},u_{2})$ relying on the state of the system. Now we look for controls belonging to the strategy map $S(M_{A},M_{J})$. Then we calculate the derivatives of $S(M_{A},M_{J})$ to investigate the maximizing controls $u_{1}$ and $u_{2}$. 

\begin{equation}
\frac{\partial R}{\partial u_{1}}=\frac{M_{A}e_{1}+M_{A}M_{J}u_{2}(e_{1}h_{2}-e_{2}h_{1})}{(1+h_{1}u_{1}M_{A}+h_{2}u_{2}M_{J})^{2}},
\end{equation}

\begin{equation}
\frac{\partial R}{\partial u_{2}}=\frac{M_{J}(e_{2}-M_{A}u_{1}(e_{1}h_{2}-e_{2}h_{1}))}{(1+h_{1}u_{1}M_{A}+h_{2}u_{2}M_{J})^{2}}.
\end{equation}
The sign of $\frac{\partial R}{\partial u_{1}}$ and $\frac{\partial R}{\partial u_{2}}$ depend on the $e_{1}h_{2}-e_{2}h_{1}$.

A tricky point here is that attack rates depend critically on the density of adult and juvenile mussels. That is of $(u_{1} = 1, u_{2} = 0)$, or $(u_{1} = 0, u_{2} = 1)$ are feasible as attack rates if
the mussel densities are above a certain density. However
 if $M_{A}$, or $M_{J}$ fall below a certain critical level, theory predicts that the less preferred prey should \emph{also} be attacked, and one might have a situation of $(u_{1} = 1, u_{2} = 1)$. What we show next, is that if certain parametric restrictions are met, $(u_{1} = 1, u_{2} = 0)$, or $(u_{1} = 0, u_{2} = 1)$ are the \emph{only} optimal choices for the crab, irrespective of mussel density.

\begin{lemma}
\label{lem:sm1}
Consider \eqref{eq:1a1nj}-\eqref{eq:4a1nj}. If $e=0$, and $d_{1} > \frac{e_{2}}{h_{2}}$ then $(u_{1} = 1, u_{2} = 0)$ is the only optimal choices for the crab. Whereas if $e=K$, and $d_{1} > \frac{e_{1}}{h_{1}}$$(u_{1} = 0, u_{2} = 1)$ is the only optimal choices for the crab.
\end{lemma}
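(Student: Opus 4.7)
The plan is to analyze the OFT fitness function
\[
R(u_1, u_2) = \frac{e_1 u_1 M_A + e_2 u_2 M_J}{1 + h_1 u_1 M_A + h_2 u_2 M_J}
\]
on $[0,1]^2$ (correcting the evident typo in the preamble to the lemma), using the partial derivatives already computed in the appendix, and then pin down the admissible mussel density by appealing to the crab's $C' = 0$ isocline at equilibrium. Throughout, the classical OFT assumption that adults are the preferred prey is read as $e_1/h_1 > e_2/h_2$, equivalently $e_1 h_2 - e_2 h_1 > 0$.

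For the $e = 0$ case, the numerator of $\partial R/\partial u_1$ is $M_A\bigl[e_1 + M_J u_2 (e_1 h_2 - e_2 h_1)\bigr]$, strictly positive on $[0,1]^2$ under the preference assumption, so $R$ is strictly increasing in $u_1$ and $u_1^{\ast} = 1$ is forced. Fixing $u_1 = 1$, the numerator of $\partial R/\partial u_2$ is $M_J\bigl[e_2 - M_A(e_1 h_2 - e_2 h_1)\bigr]$, which is strictly negative precisely when $M_A$ exceeds the switching threshold $M_A^{\ast} := e_2/(e_1 h_2 - e_2 h_1)$. The role of the hypothesis $d_1 > e_2/h_2$ is exactly to guarantee this inequality on the attractor: the crab's $C' = 0$ isocline dynamically pins adults to $\bar M_A = d_1/(e_1 - d_1 h_1)$ from \eqref{eql_a}, and a short cross-multiplication gives
\[
\bar M_A > M_A^{\ast} \iff d_1(e_1 h_2 - e_2 h_1) > e_2(e_1 - d_1 h_1) \iff d_1 > \frac{e_2}{h_2},
\]
so $u_2^{\ast} = 0$ is the unique maximizer and $(1,0)$ is the only OFT optimum.

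For the $e = K$ case the epibiont renders adults effectively unhandleable, which in the OFT picture flips the effective profitability ranking between the two prey types; this is precisely the ecological content of the ansatz $u_1(K)=0$, $u_2(K)=1$ from Section 2. The analogous argument, run on the pair $(u_2, u_1)$ with the roles of $(e_1, h_1, M_A)$ and $(e_2, h_2, M_J)$ interchanged, gives $u_2^{\ast} = 1$ and a switching threshold $M_J^{\ast} = e_1/(e_2 h_1 - e_1 h_2)$, which must be compared against the equilibrium juvenile density $\bar M_J = d_1/(e_2 - d_1 h_2)$ already derived for $e = K$. The same algebra yields $\bar M_J > M_J^{\ast} \iff d_1 > e_1/h_1$, which is the second hypothesis, and identifies $(0,1)$ as the unique optimum. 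The main obstacle I anticipate is reconciling the phrase \emph{irrespective of mussel density} in the preamble with the fact that $M_A^{\ast}$ (resp.\ $M_J^{\ast}$) genuinely depends on $M_A$ (resp.\ $M_J$): the lemma is not true as a pointwise statement on $[0,1]^2$ for arbitrary mussel densities, and the cleanest way to make it rigorous is to restrict optimality to the dynamical attractor, where the crab's isocline enforces the equilibrium value. This matches the role that $d_1 > e_2/h_2$ and $d_1 > e_1/h_1$ already play as side conditions in Theorems~\ref{thm:cme0} and~\ref{thm:cmek}, beyond the feasibility requirements \eqref{eq:feasibility2} and \eqref{feasibility_epibiont}.
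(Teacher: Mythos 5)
Your proposal is correct and follows essentially the same route as the paper's own proof: maximize the fitness function $R(u_1,u_2)$ via the signs of $\partial R/\partial u_1$ and $\partial R/\partial u_2$, identify the switching thresholds $e_2/(e_1h_2-e_2h_1)$ and $e_1/(e_2h_1-e_1h_2)$, and verify by cross-multiplication that the equilibrium densities $M_A^*=d_1/(e_1-d_1h_1)$ and $M_J^*=d_1/(e_2-d_1h_2)$ exceed them exactly when $d_1>e_2/h_2$ and $d_1>e_1/h_1$ respectively. Your added caveat---that the conclusion holds at the equilibrium densities rather than pointwise for arbitrary $(M_A,M_J)$, despite the preamble's phrase \emph{irrespective of mussel density}---is a fair reading of what the paper's argument actually establishes, but it does not change the method.
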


\begin{proof}
If $\frac{e_{1}}{h_{1}} > \frac{e_{2}}{h_{2}}$, $\frac{\partial R}{\partial u_{1}}>0$, the maximum of $R(u_{1},u_{2})$ is thusly achieved for $ u_{1}=1$.  And since the sign of $\frac{\partial R}{\partial u_{2}}$ does not depend on $u_{2}$ it follows if $ \frac{\partial R}{\partial u_{2}} \not= 0, R(u_{1},u_{2})$ will be maximized either with $u_{2}=0$ or $u_{2}=1$.Then we get the strategy map
\begin{equation}
S(M_{A},M_{J}) = \left\{
    \begin{array}{lll}
        (1,1) & \quad $if $   M_{A}<\frac{e_{2}}{e_{1}h_{2}-e_{2}h_{1}}, \\
       (1,0) & \quad $if $  M_{A}>\frac{e_{2}}{e_{1}h_{2}-e_{2}h_{1}}, \\
       (1,u_{2}), 0 \leq u_{2} \leq 1 & \quad $if $  M_{A}=\frac{e_{2}}{e_{1}h_{2}-e_{2}h_{1}}.
   \end{array}
 \right.
\end{equation}

Now $M^{*}_{A} = \frac{d_{1}}{e_{1}-d_{1}h_{1}}$ from the earlier stability calculations. We note, 

\begin{equation}
M^{*}_{A} = \frac{d_{1}}{e_{1}-d_{1}h_{1}} > \frac{e_{2}}{e_{1}h_{2}-e_{2}h_{1}},
\end{equation}

\noindent
as long as $d_{1} > \frac{e_{2}}{h_{2}}$, and if this is enforced $(u_{1} = 1, u_{2} = 0)$ is the only optimal strategy for the crab.

If $\frac{e_{1}}{h_{1}} < \frac{e_{2}}{h_{2}}$ in order to maximize $R(u_{1},u_{2})$, we need $u_{2}=1$. The strategy map will switch to
\begin{equation}
S(M_{A},M_{J}) = \left\{
    \begin{array}{lll}
        (1,1) & \quad $if $   M_{J}<\frac{e_{1}}{e_{2}h_{1}-e_{1}h_{2}}, \\
       (0,1) & \quad $if $  M_{J}>\frac{e_{1}}{e_{2}h_{1}-e_{1}h_{2}}, \\
       (u_{1},1), 0 \leq u_{1} \leq 1 & \quad $if $  M_{J}=\frac{e_{2}}{e_{1}h_{2}-e_{2}h_{1}}.\\\\
   \end{array}
 \right.
\end{equation}
Now $M^{*}_{J} = \frac{d_{1}}{e_{2}-d_{1} h_{2}}$ from the earlier stability calculations. We note, 

\begin{equation}
M^{*}_{J} = \frac{d_{1}}{e_{2}-d_{1} h_{2}} > \frac{e_{1}}{e_{2}h_{1}-e_{1}h_{2}}
\end{equation}

\noindent
as long as $d_{1} > \frac{e_{1}}{h_{1}}$, and if this is enforced $(u_{1} = 0, u_{2} = 1)$ is again, the only optimal strategy for the crab.

\subsection{Proof of theorem \ref{thm:cme0}}
\label{AP2}
The Jacobian matrix about  $(C^{*},  M_{A} ^{*}, M_{J} ^{*})$ of system \eqref{eq:1a1n112}-\eqref{eq:3a1n112}, without epiboint, is given by 
\begin{equation} \label{J__2} 
\mathrm{J}=\left[ \begin{array}{ccc}
0 & J_{12} & 0 \\ 
J_{21} & J_{22} & J_{23}  \\ 
0 & J_{32} & J_{33} \end{array}
\right] 
\end{equation} 

where
\begin{equation}
J_{12}=a (e_{1}-d_{1}h_{1})-d_{1} \delta_{1},
\end{equation}

\begin{equation}
J_{21}=-\frac{d_{1}}{e_{1}},
\end{equation}

\begin{equation}
J_{22}=-\frac{a (e_{1}-d_{1}h_{1})^{2}+d_{1}^{2} \delta_{1} h_{1}+d_{1} \delta_{1} e_{1}}{e_{1}  (e_{1}-d_{1} h_{1})},
\end{equation}

\begin{equation}
J_{23}=b,
\end{equation}

\begin{equation}
J_{32}=a,
\end{equation}

\begin{equation}
J_{33}=-b.
\end{equation}

The characteristic equation is 
\begin{equation}
\lambda^{3}+A_{2} \lambda^{2}+A_{1} \lambda+A_{0}=0,
\end{equation}
with
\begin{equation}
A_{2}=-J_{33}-J_{22},
\end{equation}

\begin{equation}
A_{1}=-J_{12}J_{21}+J_{22}J_{33}-J_{23}J_{32},
\end{equation}
and
\begin{equation}
A_{0}=J_{33}J_{21}J_{12}.
\end{equation}
It follows from the Routh-Hurwitz stability criteria that all eigenvalues have negative real part if 
\begin{equation}
A_{2}>0, \quad A_{0}>0, \quad A_{2}A_{1}>A_{0}.
\end{equation}

It is obvious that the first two conditions are always satisfied under feasibility condition \eqref{eq:feasibility2}. Furthermore, $A_{2}A_{1}-A_{0}>0$ if $ J_{23}J_{32}-J_{22}J_{33}<0$.
\begin{equation}
J_{23}J_{32}-J_{22}J_{33}=-\frac{b d_{1} (a d_{1} h_{1}^{2} -a e_{1} h_{1} +d_{1} \delta_{1} h_{1}+\delta_{1} e_{1})}{e (e_{1}-d_{1}h_{1})} <0.
\end{equation}

It is enough to solve $ b d_{1} (a d_{1} h_{1}^{2} -a e_{1} h_{1} +d_{1} \delta_{1} h_{1}+\delta_{1} e_{1})> 0 $, that is,  $ e_{1}-d_{1} h_{1}< \frac{d_{1}\delta{1}}{a }+\frac{d_{1} \delta_{1} e_{1}}{a h_{1} }. $ 

Therefore, the system \eqref{eq:1a1n112}-\eqref{eq:3a1n112} is asymptotically stable if 
\begin{equation}
 \frac{d_{1}\delta_{1}}{a }<e_{1}-d_{1} h_{1}< \frac{d_{1}\delta_{1}}{a }+\frac{d_{1} \delta_{1} e_{1}}{a h_{1} }.
\end{equation}

\subsection{Proof of Theorem \ref{thm:cmek}}
\label{AP1}
The equilibrium state, of the system \eqref{eq:1a1nj}-\eqref{eq:4a1nj}, for the epiboint is $e=K$. At the interior equilibrium state, the parameters $u_1=0, u_2=1$ and $a(e)=\frac{a}{2}$. Since  $e$ will not effect the solution of $ C, M_{A} $ and $ M_{J}$ once $u_1,u_2$ and $a(e)$ are determined, then it is enough to nvestigate the following three dimension system with the equilibrium $(C^{*}, M_{A}^{*},M_{J}^{*})=(\frac{1}{2} \frac{a e_{2} \sqrt{\frac{b d_{1}}{(e_{2} -d_{1} h_{2}) \delta_{1}}}}{d_{1}} - \frac{e_{2} b}{e_{2}-d_{1} h_{2}},\sqrt{\frac{b d_{1}}{(e_{2}-d_{1}h_{2})\delta_{1}}},\frac{d_{1}}{e_{2}-d_{1} h_{2}})$ when $ e=K.$

\begin{equation}
\label{equa_hf_epi_1}
\frac{dC}{dt}=  -d_{1}C +  e_{2}  \frac{ M_{J}}{1+ h_{2} M_{J}}C,
\end{equation}

\begin{equation}
\label{equa_hf_epi_2}
\frac{dM_{A}}{dt}=  bM_{J} - \delta_{1}M^{2}_{A}, 
\end{equation}

\begin{equation}
\label{equa_hf_epi_3}
\frac{dM_{J}}{dt}=  aM_{A} - bM_{J} -   \frac{ M_{J}}{1 +h_{2}M_{J}}C.
\end{equation}

The Jacobian matrix about   $ (C^{*},M_{A}^{*},M_{J}^{*})$ is 
\begin{equation} \label{GrindEQ__16_} 
\mathrm{J}=\left[ \begin{array}{ccc}
0 & 0 & J_{13}  \\ 
0 & J_{22} & J_{23}  \\ 
J_{31} & J_{32} & J_{33} 
\end{array}
\right] 
\end{equation} 

where
\begin{equation}
\label{Jaco_13}
J_{13}=\frac{1}{2} \frac{[a (e_{2}-d_{1} h_{2}) \sqrt{\frac{b d_{1}}{(e_{2} -d_{1} h_{2}) \delta_{1}}} -2 b d_{1}] (e_{2}-d_{1} h_{2} )}{d_{1}},
\end{equation}

\begin{equation}
\label{Jaco_22}
J_{22}=-2\delta_{1} \sqrt{\frac{b d_{1}}{(e_{2} -d_{1} h_{2}) \delta_{1}}},
\end{equation}

\begin{equation}
\label{Jaco_23}
J_{23}=b,
\end{equation}

\begin{equation}
\label{Jaco_31}
J_{31}=-\frac{d_{1}}{e_{2}},
\end{equation}

\begin{equation}
\label{Jaco_32}
J_{32}=\frac{a}{2},
\end{equation}

\begin{equation}
\label{Jaco_33}
J_{33}=-\frac{1}{2} \frac{ a(e_{2}-d_{1} h_{2})^{2} \sqrt{\frac{b d_{1}}{(e_{2} -d_{1} h_{2}) \delta_{1}}}+2 b d_{1}^{2} h_{2}}{e_{2} d_{1}}.
\end{equation}

Since all the parameters are positive, it is obvious that $ J_{22}<0,  J_{23}>0, J_{31}<0, J_{32}>0,$ and $J_{33}<0$. Under the feasibility condition \eqref{feasibility_epibiont}, $J_{13}>0$. And the characteristic equation is given by 

\begin{equation}
\lambda^{3}+B_{2} \lambda^{2}+B_{1} \lambda+B_{0}=0,
\end{equation}

where 
\begin{equation}
B_{2}=-J_{33}-J_{22},
\end{equation}

\begin{equation}
B_{1}=-J_{13}J_{31}+J_{22}J_{33}-J_{23}J_{32},
\end{equation}

\begin{equation}
B_{0}=J_{13}J_{22}J_{31},
\end{equation}

By Routh Hurwitz stability criteria, all eigenvalues have negative real part if 
\begin{equation}
B_{0}>0, B_{1}>0, B_{2}>0, B_{2}B_{1}-B_{0}>0.
\end{equation}

It is easy to check $B_{2}>0 $ and $B_{0}>0$ under the feasility criterion \eqref{feasibility_epibiont}. And $B_{1}>0 $ if $J_{22}J_{33}-J_{23}J_{32}>0$.

\begin{equation}
\begin{split}
J_{22}J_{33}-J_{23}J_{32} &=\frac{(4 \sqrt{\frac{bd_{1}}{\delta_{1}(e_{2}-d_{1}h_{2})}}d_{1}\delta_{1}h_{2}-2ad_{1}h_{2}+ae_{2})b}{2e_{2}d_{1}}\\
&=\frac{(4 \sqrt{\frac{bd_{1}}{\delta_{1}(e_{2}-d_{1}h_{2})}}d_{1}\delta_{1}h_{2}-ad_{1}h_{2}-ad_{1}h_{2}+ae_{2})b}{2e_{2}d_{1}}\\
&=\frac{(4 \sqrt{\frac{bd_{1}}{\delta_{1}(e_{2}-d_{1}h_{2})}}d_{1}\delta_{1}h_{2}-ad_{1}h_{2}+a(e_{2}-d_{1}h_{2}))b}{2e_{2}d_{1}}.
\end{split}
\end{equation}

 To make $J_{22}J_{33}-J_{23}J_{32}>0$, it is enough to show $4 \sqrt{\frac{bd_{1}}{\delta_{1}(e_{2}-d_{1}h_{2})}}d_{1}\delta_{1}h_{2}-ad_{1}h_{2}>0$, which gives us $e_{2}-d_{1}h_{2}<\frac{16bd_{1}\delta{1}}{a^{2}}$. Furthermore, 
\begin{equation}
\begin{split}
B_{2}B_{1}-B_{0} &=J_{13}J_{31}J_{33}-J_{22}^{2}J_{33}+J_{22}J_{23}J_{32}-J_{22}J_{33}^{2}+J_{23}J_{32}J_{33}\\
&=J_{13}J_{31}J_{33}+J_{22}(J_{23}J_{32}-J_{22}J_{33})+J_{33}(J_{23}J_{32}-J_{22}J_{33}).\\
\end{split}
\end{equation}

Since $J_{22}<0$ and $J_{33}<0$, $J_{22}J_{33}-J_{23}J_{32}>0$ implies $B_{2}B_{1}-B_{0}>0$. Thus, the system  \eqref{eq:1a1nj}-\eqref{eq:4a1nj} is asymptotically stable if 
\begin{equation}
\label{3_local stability_epiboint}
\frac{4bd_{1}\delta{1}}{a^{2}}<e_{2}-d_{1}h_{2}<\frac{16bd_{1}\delta_{1}}{a^{2}}.
\end{equation}

\subsection{Proof of theorem \ref{thm:he0}}
\label{AP3}
Now let $a$, the growth rate of juvenile mussels, as the bifurcation parameter. Therefore, if condition \eqref{eq:feasibility2} holds, $ A_{0}(a_{*})$ are always positive.$ A_{2}(a_{*})>0$ if $ e_1 -d_1 h_1 <\frac{d_1 \delta_1}{a}+\frac{d_1 \delta_1 e_1}{ah_1}$.  And  $ \phi(a_{*})=A_{2}(a_{*})A_{1}(a_{*})-A_{0}(a_{*})=0 $ if 
\begin{equation}
\label{eq:hf1}
a_{*}=\frac{f_1}{f_2},
\end{equation}

where 
\begin{equation}
\label{eq:hf2}
\begin{split}
f_1 &= 4b \delta_{1}^{2} h_{1}^{5}  (M_{A}^{*})^{7} +(2 b^{2} \delta_{1} h_{1}^{5}+20b \delta_{1}^{2} h_{1}^{4} (M_{A}^{*})^{6}+(10b^{2} \delta_{1} h_{1}^{4} +40 b\delta_{1}^{2} h_{1}^{3}) (M_{A}^{*})^{5}+\\
  &\quad  (4C^{*} b \delta_{1} h_{1}^{3}+20b^{2} \delta_{1} h_{1}^{3} +2C^{*}\delta_{1} e_{1} h_{1}^{2} +40 b \delta_{1}^{2} h_{1}^{2}) (M_{A}^{*})^{4}+\\
  &\quad (C^{*} b^{2} h_{1}^{3} +12C^{*}b\delta_{1} h_{1}^{2} +20b^{2} \delta_{1} h_{1}^{2}+4C^{*}\delta_{1} e_{1} h_{1}+20b\delta_{1}^{2} h_{1}) (M_{A}^{*})^{3}+\\
  &\quad (3C^{*} b^{2} h_{1}^{2} +12 C^{*} b\delta_{1} h_{1}+10 b^{2} \delta_{1} h_{1}+2 C^{*}\delta_{1} e_{1}+4 b\delta_{1}^{2})  (M_{A}^{*})^{2}+\\
  &\quad ((C^{*})^{2} b h_{1} +3C^{*} b^{2} h_{1} +(C^{*})^{2} e_{1} +4C^{*} b \delta_{1} +2b^{2}\delta_{1}) M_{A}^{*}+(C^{*})^{2}b+C^{*}b^{2},\\
f_2 &=b (M_{A}^{*}  h_{1}+1)^{3} (2 (M_{A}^{*})^{3} \delta_{1} h_{1}^{2}+ b (M_{A}^{*})^{2} h_{1}^{2}+4(M_{A}^{*})^{2} \delta_{1} h_{1} +2 b M_{A}^{*} h_{1} +\\
& \quad 2M_{A}^{*} \delta_{1} +C^{*}+b),
\end{split} 
\end{equation}

and $C^{*},M_{A}^*$ are given by \eqref{eql_c} and \eqref{eql_a}. 

Furthermore, it is easy to verify that 
\begin{equation}
\begin{split}
\frac{d\phi(a)}{da}|_{a=a_{*}} &=-\frac{(b (M_{A}^{*})^{3} h_{1}^{3}+3b  (M_{A}^{*})^{2}h_{1}^{2} +3b M_{A}^{*}h_{1}+b)(2 (M_{A}^{*})^{2}\delta_{1}h_{1}^{2}}{(M_{A}^{*}+1)^{5}}\\
&  \quad -\frac{(bh_{1}^{2}+4\delta_{1}h_{1})(M_{A}^{*})^{2}+(2bh_{1}+2\delta_{1})M_{A}^{*}+C^{*}+b) }{(M_{A}^{*}+1)^{5}}\\
&<0\\
&\not=0.
\end{split}
\end{equation}

\subsection{Proof of theorem \ref{thm:oc22}}
\label{AP4}
The Hamiltonian of the system is given by
\begin{equation}
H=M_{A}+M_{J} -\frac{1}{2} u_{1}^{2} +\lambda_{1}  C' +\lambda_{2} M_{A}' +\lambda_{3} M_{J}'.
\end{equation}

We use the Hamiltonian to find a differential equation of the adjoint $\lambda_{i}, i=1,2,3$.
\begin{equation}
\begin{split}
\lambda_{1}'(t)=-& \lambda_{{1}} \left( -d_{{1}}+{\frac {M_{{A}}e_{{1}}u_{{1}}+M_{{J}}e_
{{2}}u_{{2}}}{M_{{A}}h_{{1}}u_{{1}}+M_{{J}}h_{{2}}u_{{2}}+1}} \right) 
+\\
& {\frac {\lambda_{{2}}u_{{1}}M_{{A}}}{M_{{A}}h_{{1}}u_{{1}}+M_{{J}}h_{
{2}}u_{{2}}+1}}+{\frac {\lambda_{{3}}u_{{2}}M_{{J}}}{M_{{A}}h_{{1}}u_{
{1}}+M_{{J}}h_{{2}}u_{{2}}+1}},\\
\lambda_{2}'(t)=-& \lambda_{{1}} \left( {\frac {e_{{1}}u_{{1}}C}{M_{{A}}h_{{1}}u_{{1}}
+M_{{J}}h_{{2}}u_{{2}}+1}}-{\frac { \left( M_{{A}}e_{{1}}u_{{1}}+M_{{J
}}e_{{2}}u_{{2}} \right) Ch_{{1}}u_{{1}}}{ \left( M_{{A}}h_{{1}}u_{{1}
}+M_{{J}}h_{{2}}u_{{2}}+1 \right) ^{2}}} \right) -\\
& \lambda_{{2}} \left( -2\,\delta_{{1}}M_{{A}}-{\frac {u_{{1}}C}{M_{{A}}h_{{1}}u_{{1}
}+M_{{J}}h_{{2}}u_{{2}}+1}}+{\frac {{u_{{1}}}^{2}M_{{A}}Ch_{{1}}}{
 \left( M_{{A}}h_{{1}}u_{{1}}+M_{{J}}h_{{2}}u_{{2}}+1 \right) ^{2}}}
 \right) -\\
& \lambda_{{3}} \left( a/2+{\frac {u_{{2}}M_{{J}}Ch_{{1}}u_{{1
}}}{ \left( M_{{A}}h_{{1}}u_{{1}}+M_{{J}}h_{{2}}u_{{2}}+1 \right) ^{2}
}} \right) -1,\\
\lambda_{3}'(t)=-& \lambda_{1} \left( {\frac {e_{{2}}u_{{2}}C}{M_{{A}}h_{{1}}u_{{1}}
+M_{{J}}h_{{2}}u_{{2}}+1}}-{\frac { \left( M_{{A}}e_{{1}}u_{{1}}+M_{{J
}}e_{{2}}u_{{2}} \right) Ch_{{2}}u_{{2}}}{ \left( M_{{A}}h_{{1}}u_{{1}
}+M_{{J}}h_{{2}}u_{{2}}+1 \right) ^{2}}} \right) -\\
& \lambda_{2}
 \left( b+{\frac {u_{{1}}M_{{A}}Ch_{{2}}u_{{2}}}{ \left( M_{{A}}h_{{1}
}u_{{1}}+M_{{J}}h_{{2}}u_{{2}}+1 \right) ^{2}}} \right) -\\
& \lambda_{3}
 \left( -b-{\frac {u_{{2}}C}{M_{{A}}h_{{1}}u_{{1}}+M_{{J}}h_{{2}}u_{{2
}}+1}}+{\frac {{u_{{2}}}^{2}M_{{J}}Ch_{{2}}}{ \left( M_{{A}}h_{{1}}u_{
{1}}+M_{{J}}h_{{2}}u_{{2}}+1 \right) ^{2}}} \right) -1,
\end{split}
\end{equation}

with the transversality condition gives as
\begin{equation}
\lambda_{1}(T)=\lambda_{2}(T)=\lambda_{3}(T)=0
\end{equation}

By solving 

\begin{equation}
\begin{split}
0=\frac{\partial H}{\partial u_{1}}=& \lambda_{{1}} \left( {\frac {M_{{A}}e_{{1}}C}{M_{{A}}h_{{1}}u
_{{1}}+M_{{J}}h_{{2}}u_{{2}}+1}}-{\frac { \left( M_{{A}}e_{{1}}u_{{1}}
+M_{{J}}e_{{2}}u_{{2}} \right) CM_{{A}}h_{{1}}}{ \left( M_{{A}}h_{{1}}
u_{{1}}+M_{{J}}h_{{2}}u_{{2}}+1 \right) ^{2}}} \right) +\\
& \lambda_{{2}}\left( -{\frac {M_{{A}}C}{M_{{A}}h_{{1}}u_{{1}}+M_{{J}}h_{{2}}u_{{2}}
+1}}+{\frac {u_{{1}}{M_{{A}}}^{2}Ch_{{1}}}{ \left( M_{{A}}h_{{1}}u_{{1
}}+M_{{J}}h_{{2}}u_{{2}}+1 \right) ^{2}}} \right) +\\ 
& \lambda_{3}{\frac {u_{{2}}M_{{J}}CM_{{A}}h_{{1}}}{ \left( M_{{A}}h_{{1}}u_{{1}}+M_{{J}}
h_{{2}}u_{{2}}+1 \right) ^{2}}}-u_{{1}},\\
0=\frac{\partial H}{\partial u_{2}}=& \lambda_{{1}} \left( {\frac {M_{{J}}e_{{2}}C}{M_{{A}}h_{{1}}u_{{1}}+M_{{J}}h_{{2}}u_{{2}}+1}}-{\frac { \left( M_{{A}}e_{{1}}u_{{1}}+M_{{J}}e
_{{2}}u_{{2}} \right) CM_{{J}}h_{{2}}}{ \left( M_{{A}}h_{{1}}u_{{1}}+M
_{{J}}h_{{2}}u_{{2}}+1 \right) ^{2}}} \right) +\\ 
& \lambda_{2} {\frac {u_
{{1}}M_{{A}}CM_{{J}}h_{{2}}}{ \left( M_{{A}}h_{{1}}u_{{1}}+M_{{J}}h_{{
2}}u_{{2}}+1 \right) ^{2}}}+\\
& \lambda_{{3}} \left( -{\frac {M_{{J}}C}{M_
{{A}}h_{{1}}u_{{1}}+M_{{J}}h_{{2}}u_{{2}}+1}}+{\frac {u_{{2}}{M_{{J}}}
^{2}Ch_{{2}}}{ \left( M_{{A}}h_{{1}}u_{{1}}+M_{{J}}h_{{2}}u_{{2}}+1
 \right) ^{2}}} \right).
\end{split}
\end{equation}

And $ u_{1_{2}} $ and $ u_{2_{2}}$ equal to 
\begin{equation}
\begin{split}
u_{1_{2}}= & {\frac {e_{2}\lambda_{1}-\lambda_{{3}}}{M_{{A}} \left( e_{
{1}}h_{{2}}\lambda_{{1}}-e_{{2}}h_{{1}}\lambda_{{1}}+h_{{1}}\lambda_{{
3}}-h_{{2}}\lambda_{{2}} \right) }},\\
u_{2_{2}}= & \frac{w_{1}}{w_{2}},
\end{split}
\end{equation}

where 

\begin{equation}
\begin{split}
w_{1}=& C{M_{{A}}}^{2}{e_{{1}}}^{3}{h_{{2}}}^{3}{\lambda_{{1}}}^{3}-3\,C{M_{{A
}}}^{2}{e_{{1}}}^{2}e_{{2}}h_{{1}}{h_{{2}}}^{2}{\lambda_{{1}}}^{3}+3\,
C{M_{{A}}}^{2}e_{{1}}{e_{{2}}}^{2}{h_{{1}}}^{2}h_{{2}}{\lambda_{{1}}}^
{3}-\\
& C{M_{{A}}}^{2}{e_{{2}}}^{3}{h_{{1}}}^{3}{\lambda_{{1}}}^{3}+3\,C{M
_{{A}}}^{2}{e_{{1}}}^{2}h_{{1}}{h_{{2}}}^{2}{\lambda_{{1}}}^{2}\lambda
_{{3}}-3\,C{M_{{A}}}^{2}{e_{{1}}}^{2}{h_{{2}}}^{3}{\lambda_{{1}}}^{2}
\lambda_{{2}}-\\
& 6\,C{M_{{A}}}^{2}e_{{1}}e_{{2}}{h_{{1}}}^{2}h_{{2}}{
\lambda_{{1}}}^{2}\lambda_{{3}}+6\,C{M_{{A}}}^{2}e_{{1}}e_{{2}}h_{{1}}
{h_{{2}}}^{2}{\lambda_{{1}}}^{2}\lambda_{{2}}+3\,C{M_{{A}}}^{2}{e_{{2}
}}^{2}{h_{{1}}}^{3}{\lambda_{{1}}}^{2}\lambda_{{3}}-\\
& 3\,C{M_{{A}}}^{2}{
e_{{2}}}^{2}{h_{{1}}}^{2}h_{{2}}{\lambda_{{1}}}^{2}\lambda_{{2}}+3\,C{
M_{{A}}}^{2}e_{{1}}{h_{{1}}}^{2}h_{{2}}\lambda_{{1}}{\lambda_{{3}}}^{2
}-6\,C{M_{{A}}}^{2}e_{{1}}h_{{1}}{h_{{2}}}^{2}\lambda_{{1}}\lambda_{{2
}}\lambda_{{3}}+\\
& 3\,C{M_{{A}}}^{2}e_{{1}}{h_{{2}}}^{3}\lambda_{{1}}{
\lambda_{{2}}}^{2}-3\,C{M_{{A}}}^{2}e_{{2}}{h_{{1}}}^{3}\lambda_{{1}}{
\lambda_{{3}}}^{2}+6\,C{M_{{A}}}^{2}e_{{2}}{h_{{1}}}^{2}h_{{2}}\lambda
_{{1}}\lambda_{{2}}\lambda_{{3}}-\\
& 3\,C{M_{{A}}}^{2}e_{{2}}h_{{1}}{h_{{2
}}}^{2}\lambda_{{1}}{\lambda_{{2}}}^{2}+C{M_{{A}}}^{2}{h_{{1}}}^{3}{
\lambda_{{3}}}^{3}-3\,C{M_{{A}}}^{2}{h_{{1}}}^{2}h_{{2}}\lambda_{{2}}{
\lambda_{{3}}}^{2}+\\
& 3\,C{M_{{A}}}^{2}h_{{1}}{h_{{2}}}^{2}{\lambda_{{2}}
}^{2}\lambda_{{3}}-C{M_{{A}}}^{2}{h_{{2}}}^{3}{\lambda_{{2}}}^{3}-e_{{
1}}e_{{2}}{h_{{2}}}^{2}{\lambda_{{1}}}^{2}+e_{{1}}{h_{{2}}}^{2}\lambda
_{{1}}\lambda_{{3}}+\\
& e_{{2}}{h_{{2}}}^{2}\lambda_{{1}}\lambda_{{2}}-{h_
{{2}}}^{2}\lambda_{{2}}\lambda_{{3}},\\
w_{2}= &  M_{J}h_{2}^{2} ( e_{1} e_{2} h_{2} \lambda_{1}^{2}-
e_{2}^{2} h_{1} \lambda_{1}^{2}-e_{1} h_{2} \lambda_{1}
\lambda_{3}+2\,e_{2} h_{1} \lambda_{1} \lambda_{3}-e_{2}h_{2} \lambda_{1}\lambda_{2}-h_{1} \lambda_{3}^{2}+\\
& h_{2}\lambda
_{2} \lambda_{3} ). 
\end{split}
\end{equation}

So that the optimal controls for $J_{2}(u_1,u_2)$ is 
\begin{equation}
\begin{split}
u_{1}^{*} & =min(1, max(0, u_{1_{2}})),\\
u_{2}^{*} & =min(1, max(0, u_{2_{2}})).
\end{split}
\end{equation}

\subsection{Proof of theorem \ref{thm:oc33}}
\label{AP5}
The Hamiltonian of our problem is given by
\begin{equation}
H=M_{A}+M_{J} -\frac{1}{2} \delta_{1}^{2} +\lambda_{1}  C' +\lambda_{2} M_{A}' +\lambda_{3} M_{J}'.
\end{equation}

The differential equations for $\lambda_{1}'(t),\lambda_{2}'(t),\lambda_{3}'(t)$, are standard and are derived as in Theorem \ref{thm:oc22}.

 the transversality condition is

\begin{equation}
\lambda_{1}(T)=\lambda_{2}(T)=\lambda_{3}(T)=0.
\end{equation}

Considering $\frac{\partial H}{\partial \delta_{1}}=-{M_{{A}}}^{2}\lambda_{{2}}-\delta_{{1}}$, we derive the optimal control for $J_{3}(\delta_{1})$

\begin{equation}
\delta_{1}^{*}=max(0,-M_{A}^{2} \lambda_{2}).
\end{equation}

\end{proof}

\begin{figure}[H]
            \centering
            \includegraphics[width=0.6\textwidth]{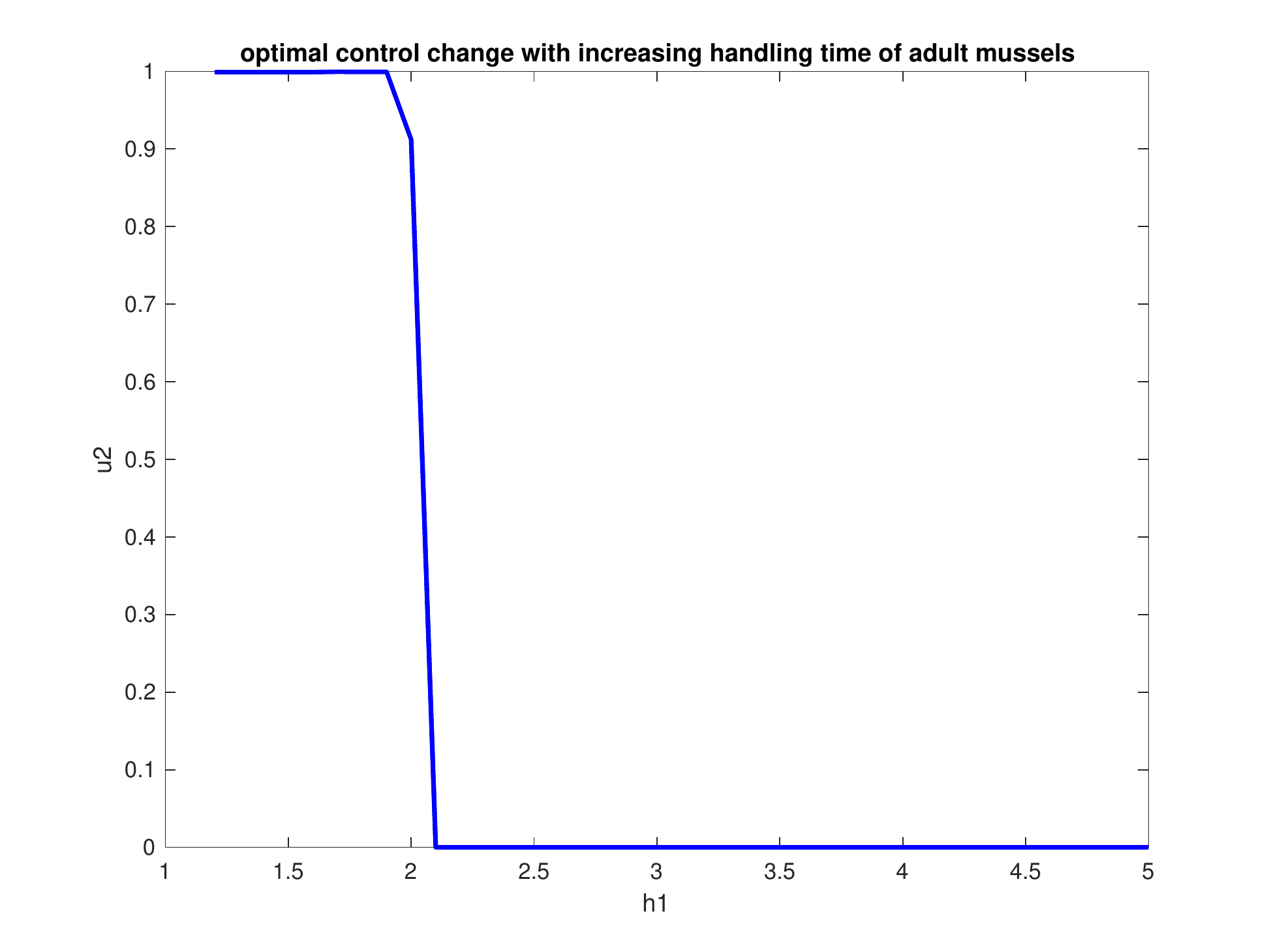}
\caption{In this simulation we look at how $u_{2}$ changes w.r.t. $h_{1}$. Here $h_2=1$. We want to see the change in the control $u_{2}$ as $h_{1}$ increases.
The control $u_{1}=0$ no matter how large $h_1$ is. What we notice is that $u_2$ suddenly goes down to 0 from 1, at a critical value $h^{*}_1=2.1$.}
 \label{fig:fig8_1}
\end{figure}

\section{Numerical Explorations of Bifurcations of alternate models}

In the case $e=K$, we do not see a Hopf bifurcation numerically. It is worthwhile considering certain alternate models for the epibiont dynamics as future work.
We motivate this via considering the following model,

\begin{equation}
\label{eq:1a1nj} 
\begin{split}
\frac{dC}{dt}= & -d_{1}C + e_{1}u_{1}(e)\frac{
M_{A}}{1+h_{1}u_{1}(e) M_{A} + h_{2}u_{2}(e) M_{J}}C  \\
& +e_{2}u_{2}(e)  \frac{ M_{J}}{1+h_{1}u_{1}(e) M_{A} + h_{2}u_{2}(e)
M_{J}}C,
\end{split}
\end{equation}
\begin{equation}
\label{eq:2a1nj} \frac{dM_{A}}{dt}=  bM_{J} - \delta_{1}M^{2}_{A} -
u_{1}(e)\frac{ M_{A}}{1+h_{1}u_{1}(e) M_{A} + h_{2}u_{2}(e) M_{J}}C,
\end{equation}

\begin{equation}
\label{eq:3a1nj} \frac{dM_{J}}{dt}=  a(e) M_{A} - bM_{J} - u_{2}(e)
\frac{ M_{J}}{1+h_{1}u_{1}(e) M_{A} + h_{2}u_{2}(e) M_{J}}C,
\end{equation}

\begin{equation}
\label{eq:4a1nj} \frac{de}{dt}=  b_{1}e(1-\frac{e}{k_1 M_A+k_2}).
\end{equation}

where
\begin{equation}
\label{eq:eee} u_{1}(e)=  \frac{K-e}{K}, \ u_{2}(e)=   \frac{e}{K},
\  a(e)=a\left(\frac{K-\frac{e}{2}}{K}\right), K=k_1M_A+k_2.
\end{equation}

\noindent with positive initial conditions $C(0)=C_{0},
M_{A}(0)=M_{A0}, M_{J}(0)=M_{J0}, e(0)=e_{0}$. These responses are
for the range $0 \leq e \leq K$.

The only change here to crab-mussel system \eqref{eq:1a1nj}-\eqref{eq:4a1nj} is that we assume the carrying capacity of the epibiont is density dependent, and depends primarily on the adult mussel density that is $K=K=k_1 M_A+k_2$. Here $k_{2}$ represents alternate substrate that the epibiont can grow on.

%
%


The four dimensional system has $11$ parameters with four dependent
variables. The following parameters are used in numerical simulations:

\begin{equation}
\label{par1} 
\begin{split}
&e_1=0.8, e_2=0.5, d_1=0.4, a=4, b_1=2, b=0.5,\\
& h_1=2,h_2=1, \delta_1=0.2, k_1=0.1, k_2=0.3.
\end{split}
\end{equation}

The system evolve the stable limit cycles for parameter set \ref{par1}. Time series for all species is shown in the figure \ref{fig:p1}, while limit cycles in 2-D phase space are shown in fig \ref{fig:p2}, \ref{fig:p3} and \ref{fig:p4}. To observe the more qualitative behavior of the model, one-parameter bifurcation diagram is drawn with respect to parameter $ d_{1}$   and parameter $a$ in the figures .   A supercritical Hopf bifurcation occurs at $ d_{1}=0.3567$ which emanates stable limit cycles. There is another supercritical hopf bifurcation at  $ d_{1}=0.444$. Between these two Hopf bifurcation, model has periodic solutions. After second Hopf bifurcation point model has stable solutions but crab populations are going to extinct. The dynamics is shown in one-parameter bifurcation diagram fig \ref{fig:p5}. The qualitative dynamics  has been also obtained for range of parameter $ a$ drawn in the fig \ref{fig:p6}. Initially, for low parameter value $a< 1.265$ the crab population is too low but as parameter $a$ increases, model exhibits stable coexistence. Further, it  undergoes through supercritical Hopf bifurcation at parameter $ a=3.386$ which emanates stable limit cycles (green filled circle). 

The parameter region has been obtained by drawing two-parameter $(a, d_{1})$ bifurcation diagram in the fig \ref{fig:p7}. The parameter region for which one species goes extinction is shown by shaded region (extreme left), region for which stable coexistence is possible shown by red and region for which perioidc solution is possible shown by blue color in the diagram fig \ref{fig:p7}. Another two-parameter  $( d_{1}, \delta)$ bifurcation diagram is drawn in the fig \ref{fig:p8}. The parameter region for  which stable coexistence occurs and region for which perioidc solution is possible is depicted in the diagram fig \ref{fig:p8}. 

As shown by these bifurcation graphs, model has periodic solutions for biologically feasible choice of parameters and one can find the Hopf bifurctaion point for each of the parameters used in the model.

These results show that a Hopf bifurcation is possible, if one considers a density dependent carrying capacity for the epibiont. These results are robust in nature as different sets of parameters will yield the same qualitative behavior. The periodicity in the system is beneficial for harvesting  and coexistence of all the species involved.   


\begin{figure}
\begin{subfigure}{.45\textwidth}
  \centering
  \includegraphics[angle=00,width=.75\linewidth]{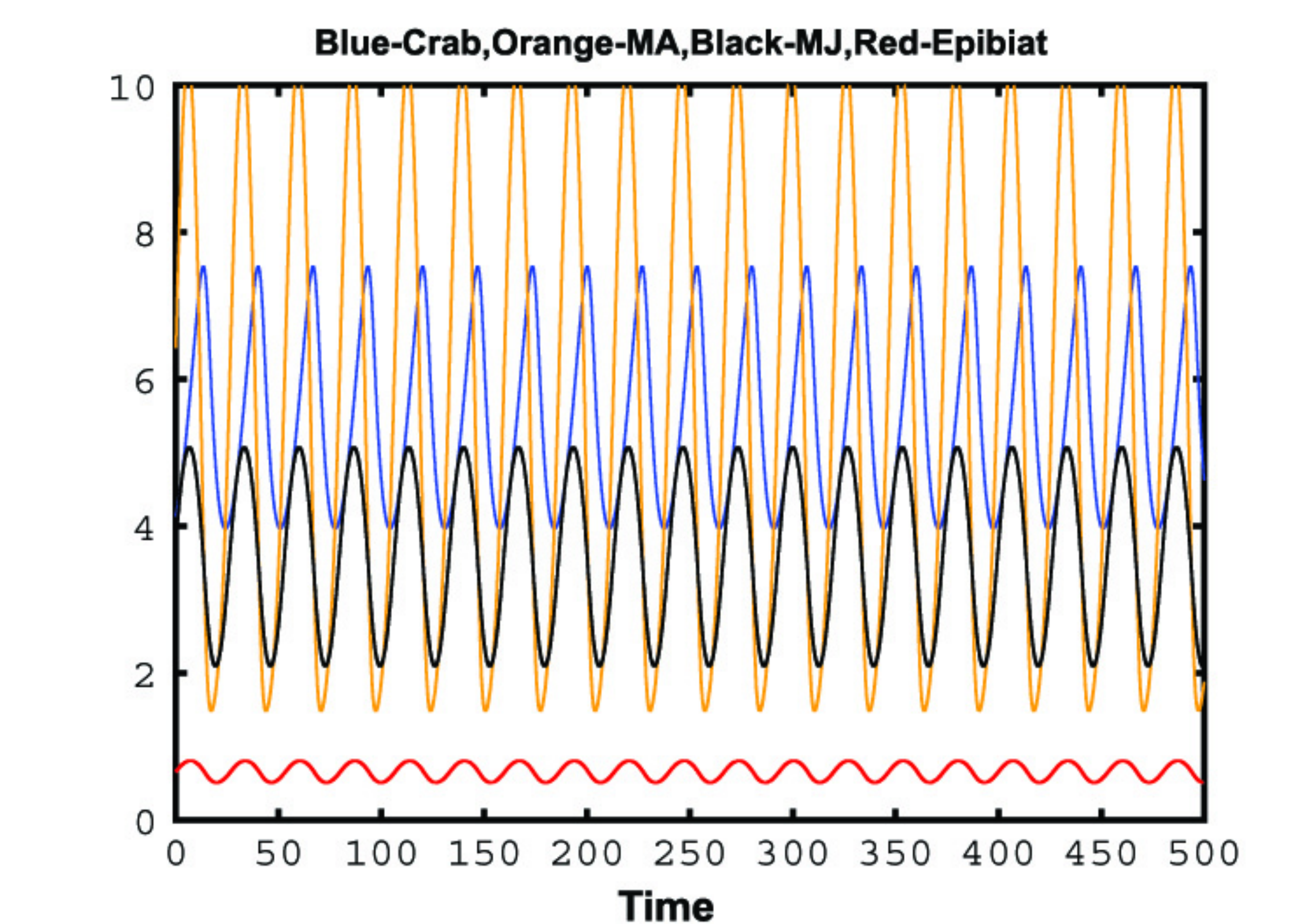}
  \caption{The biomasses of all species exhibited the periodic coexistence against the Time series is  shown for parameter set \ref{par1}. }
 \label{fig:p1}
\end{subfigure} \hfill 
\begin{subfigure}{.45\textwidth}
  \centering
  \includegraphics[angle=-90,width=.8\linewidth]{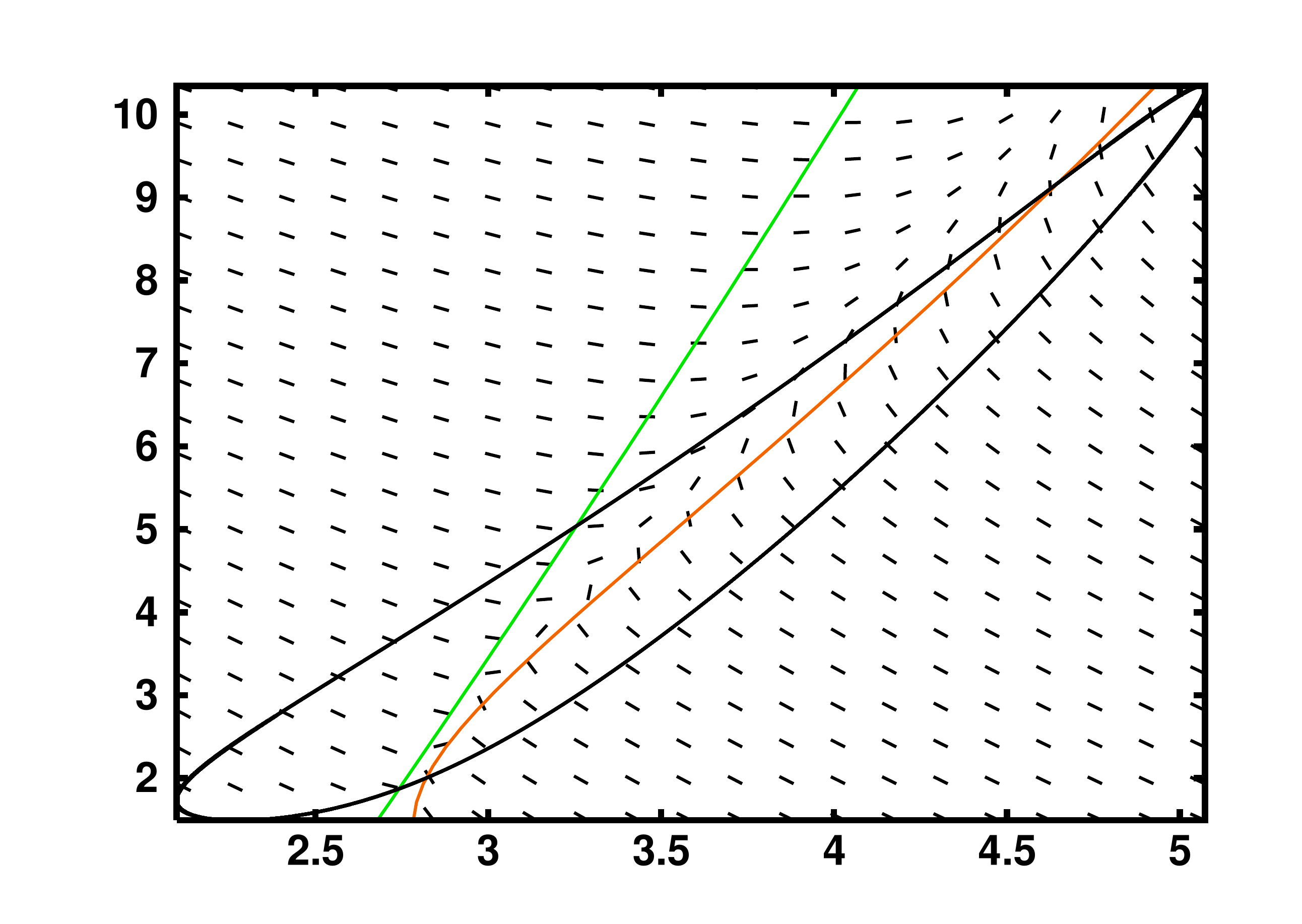}
  \caption{A stable limit cycle in the Two-dimensional phase space $M_{A},M_{J}$ for parameter set \ref{par1}}
  \label{fig:p2}
\end{subfigure}
\caption{Time-series and limit cycle in the 2-D phase space plot}
\label{fig:fig1}
\end{figure}

\begin{figure}
\begin{subfigure}{.45\textwidth}
  \centering
  \includegraphics[angle=-90,width=.8\linewidth]{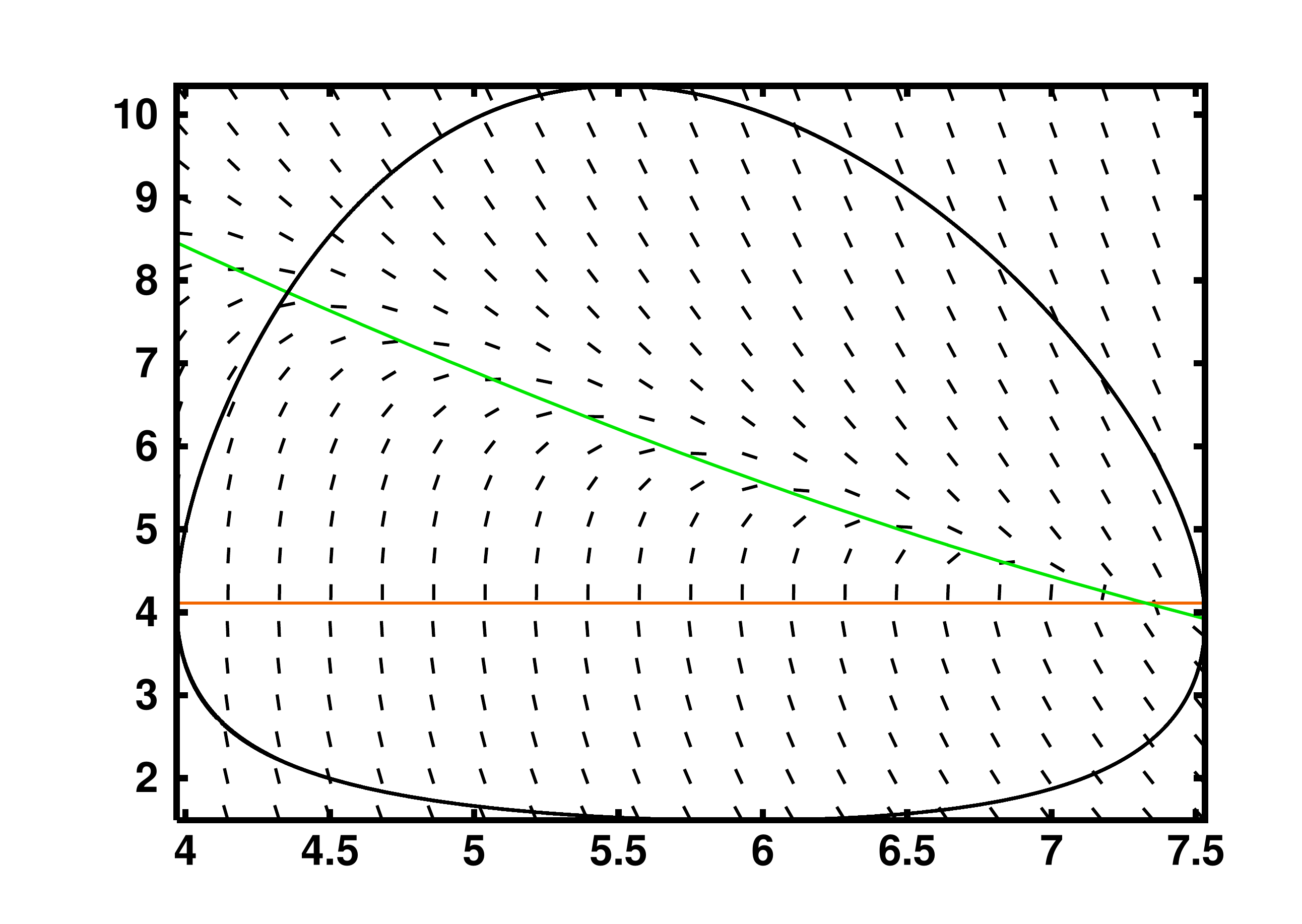}
  \caption{A stable limit cycle in the Two-dimensional phase space $C,M_{J}$ for parameter set \ref{par1}}
  \label{fig:p3}
\end{subfigure} \hfill
\begin{subfigure}{.45\textwidth}
  \centering
  \includegraphics[angle=-90,width=.8\linewidth]{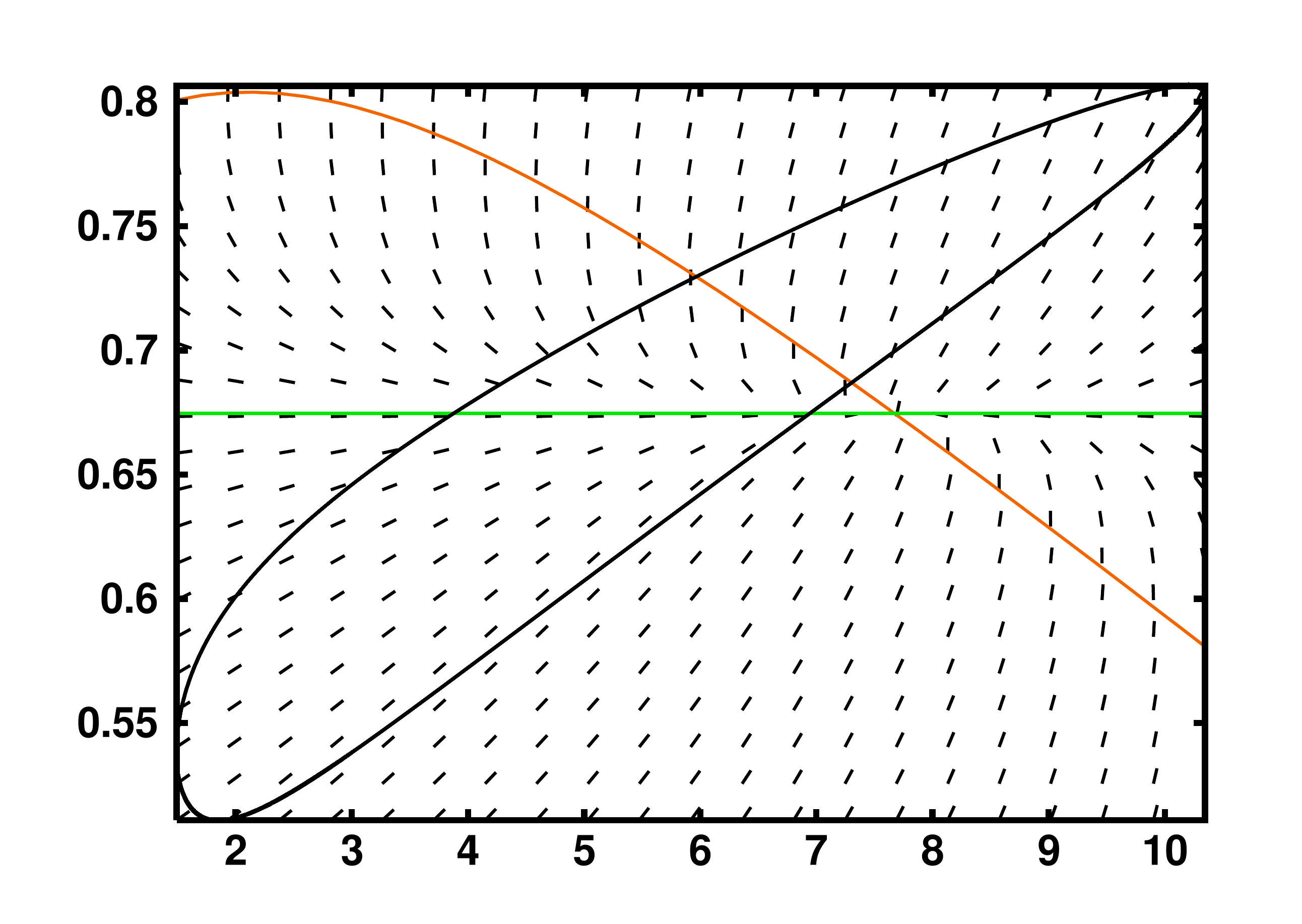}
  \caption{A stable limit cycle in the Two-dimensional phase space $E,M_{j}$ for parameter set \ref{par1}}
  \label{fig:p4}
\end{subfigure}
\caption{Stable limit cycle in the 2-D pahase space plot}
\label{fig:fig2}
\end{figure}

\begin{figure}
	\begin{subfigure}{.45\textwidth}
		\centering
		\includegraphics[angle=-90,width=.8\linewidth]{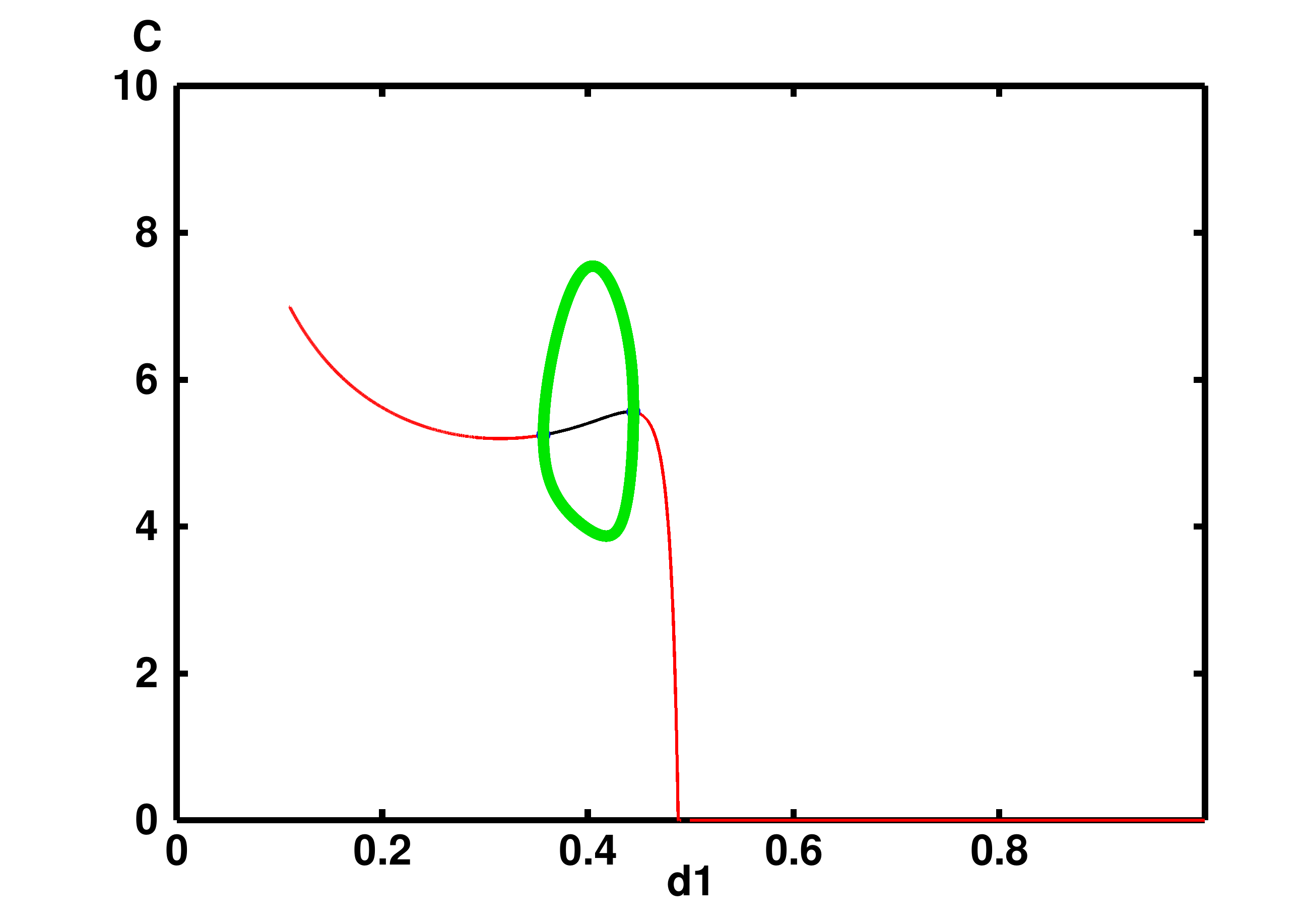}
		\caption{One-parameter bifurcation diagram with respect to parameter $d_{1}$}
		\label{fig:p5}
	\end{subfigure}\hfill
	\begin{subfigure}{.45\textwidth}
		\centering
		\includegraphics[angle=-90,width=.8\linewidth]{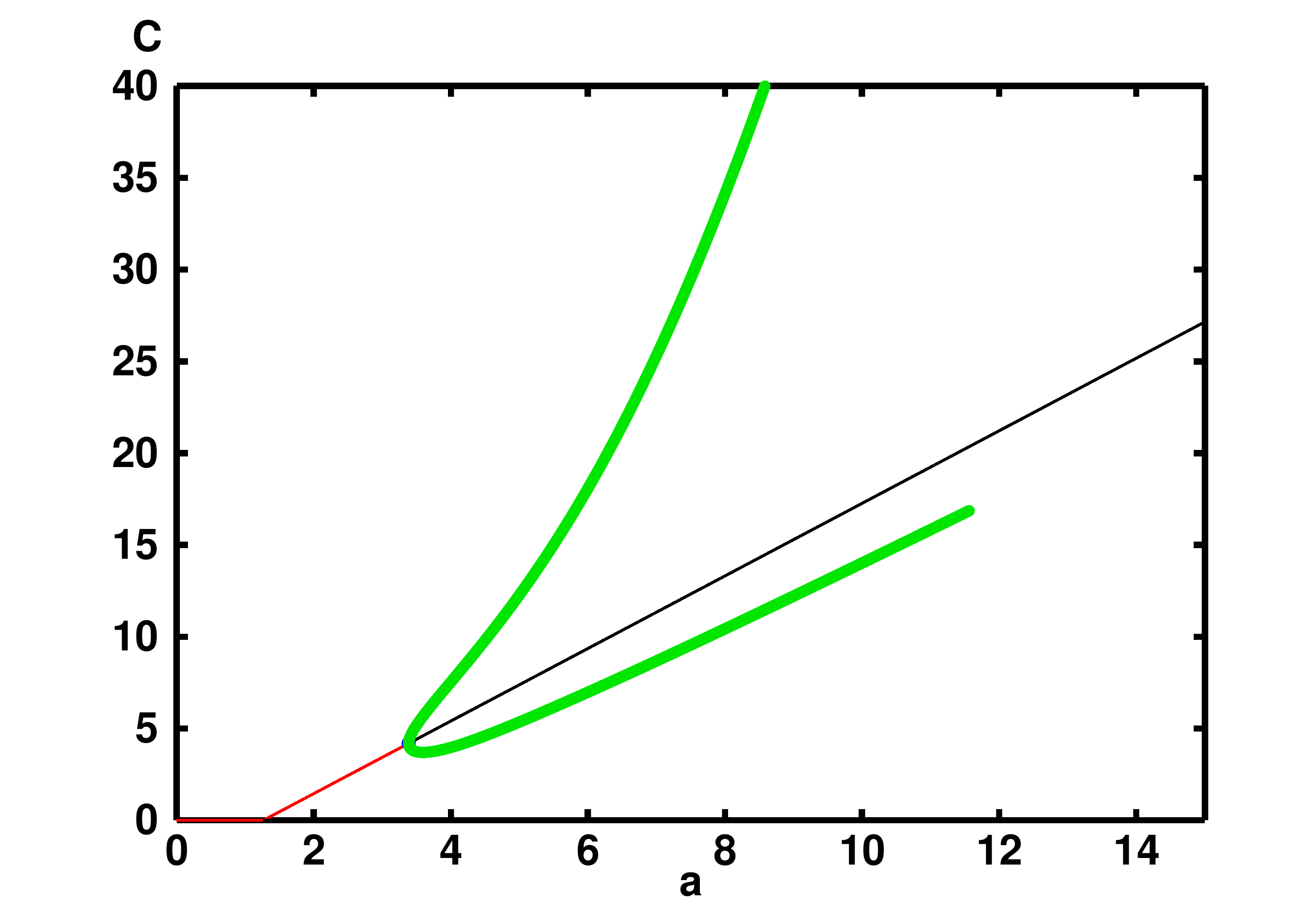}
		\caption{One-parameter bifurcation diagram with respect to parameter $a$}
		\label{fig:p6}
	\end{subfigure}
	\caption{One-parameter bifurcation diagrams to depict stablilty, Hopf Bifurcation point and periodic solutions with respect to parameter set \ref{par1}.}
	\label{fig:fig3}
\end{figure}

\begin{figure}
	\begin{subfigure}{.45\textwidth}
		\centering
		\includegraphics[angle=00,width=.8\linewidth]{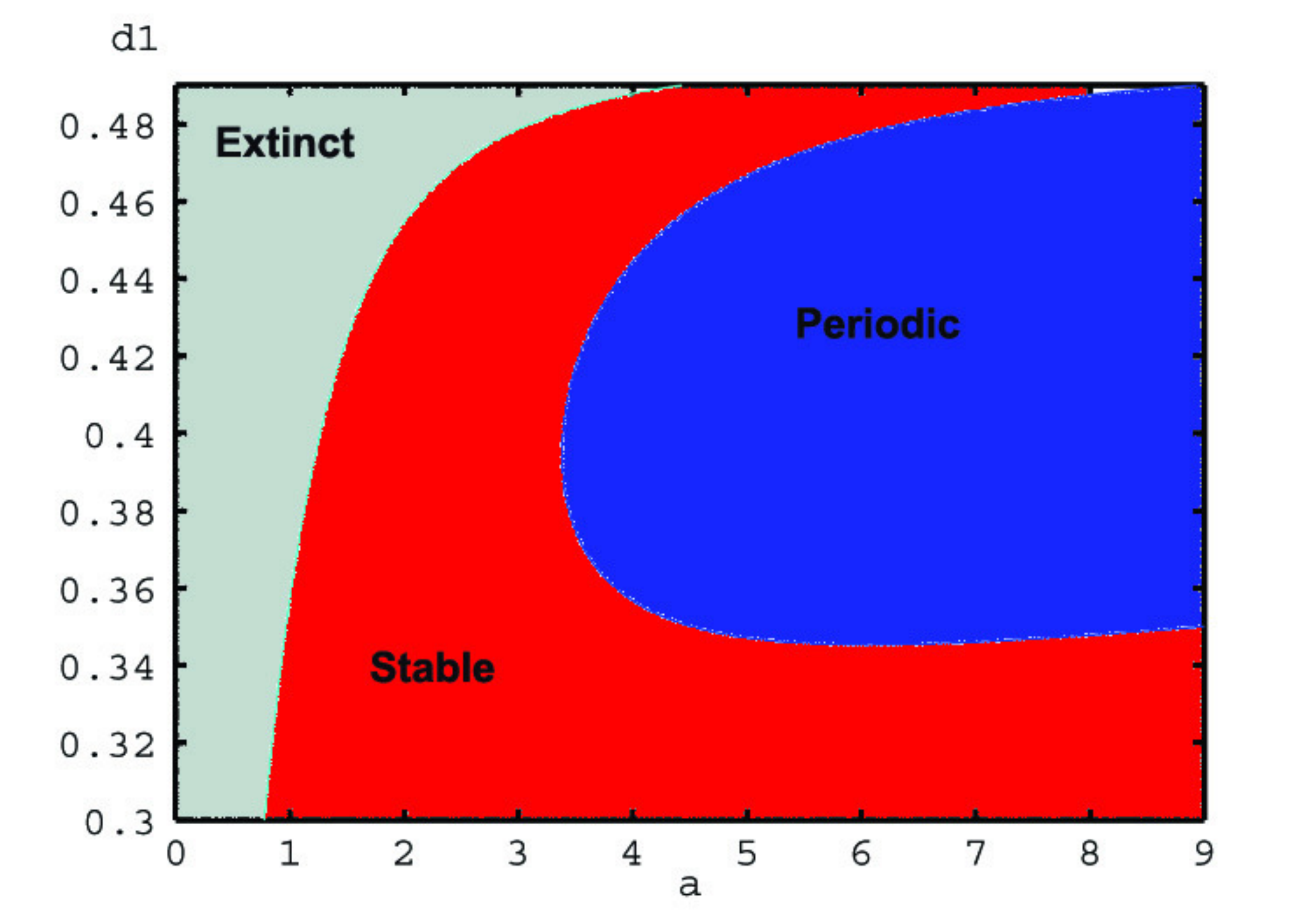}
		\caption{Two-parameters $(a,d_{1})$ bifurcation diagram with respect to parameter set \ref{par1}}
		\label{fig:p7}
	\end{subfigure} \hfill
	\begin{subfigure}{.45\textwidth}
		\centering
		\includegraphics[angle=00,width=.75\linewidth]{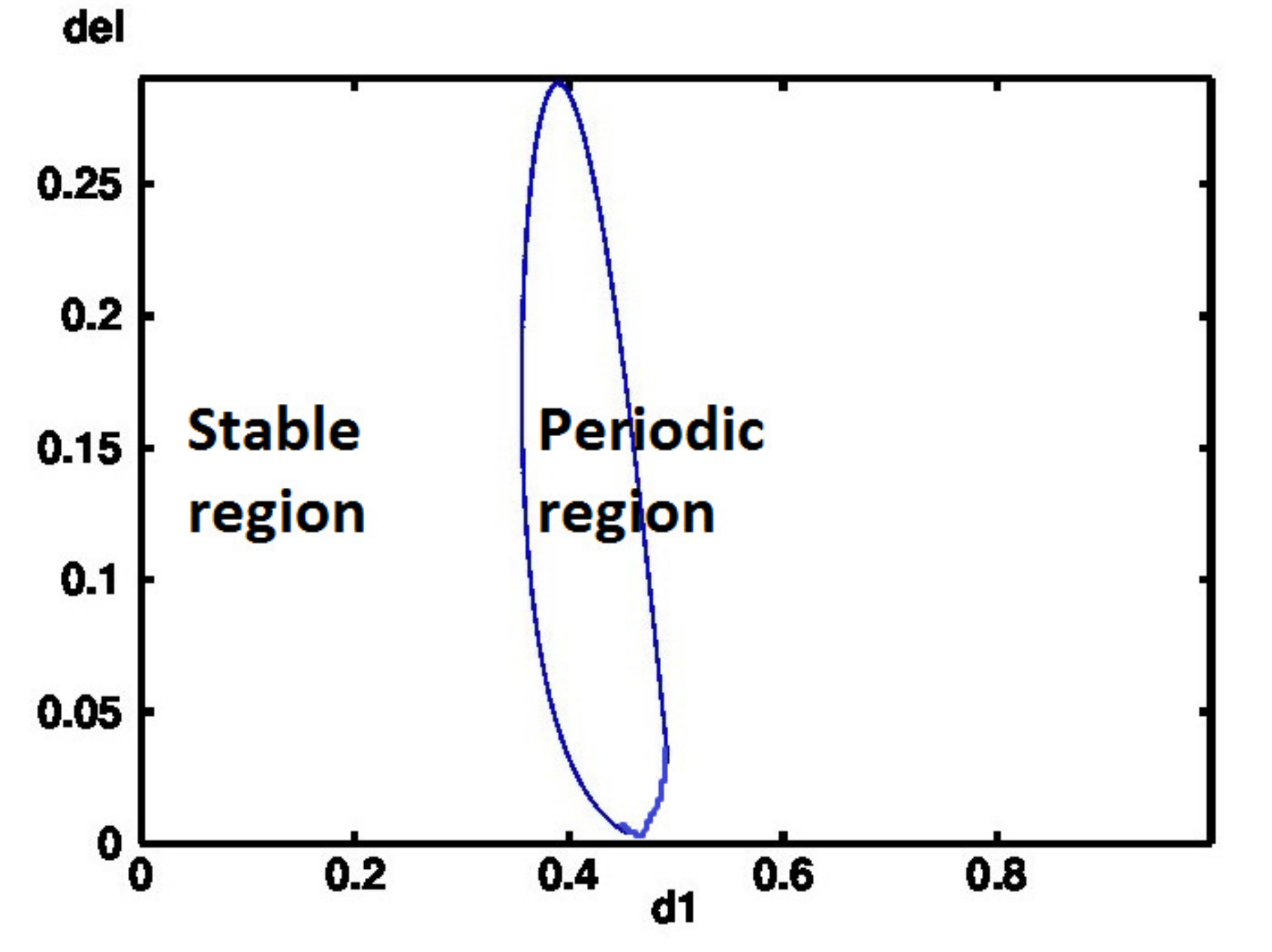}
		\caption{Two-parameters $(d_{1},\delta)$ bifurcation diagram with respect to parameter set \ref{par1}}
		\label{fig:p8}
	\end{subfigure}
	\caption{Two-parameter bifurcation diagrams to depict parameter region for the stable coexistence and periodic coexistence  with respect to parameter set \ref{par1}.}
	\label{fig:fig4}
\end{figure}

\end{document}